\newcommand{\database}{{\mathcal{D}^n}}    
\newcommand{\noise}{{X}}    
\newcommand{\KM}{{\mathcal{K}}}    
\newcommand{\e}{{\epsilon}}    
\newcommand{\loss}{{\mathcal{L}}}    
\newcommand{\D}{{\Delta}}    
\newcommand{\p}{{\mathcal{P}}}  
\newcommand{\mt}{{r}}  
\newcommand{\nm}{{\p}}  
\newcommand{\sP}{{\mathcal{SP}}}  
\newcommand{\psym}{{\p_{\text{sym} }}}
\newcommand{\sPsymi}{{\sP_{i, \text{sym} }}}
\newcommand{\gammap}{{\gamma'}}
\newcommand{\R}{{\mathbb{R}}}  
\newcommand{\N}{{\mathbb{N}}}  
\newcommand{\Z}{{\mathbb{Z}}}  
\newcommand{\pa}{{ \sP_{i, \text{md}}  }}  
\newcommand{\pb}{{ \sP_{i, \text{pd}}  }}  
\newcommand{\pc}{{ \sP_{i, \text{fr}}  }}  
\newcommand{\pd}{{ \sP_{i, \text{step}} }}  
\newcommand{\pe}{{ \sP_{\text{mono}} }}  
\newcommand{\pdissym}{{ \sP_{\text{sym}} }}  
\newcommand{\pf}{{ \sP_{\text{pd}} }}  
\newcommand{\pj}{{ \sP_{\text{step}} }}  
\newcommand{\gammaheu}{{ \tilde{\gamma} }}  
\DeclareMathOperator*{\argmax}{arg\,max}
\newtheorem{theorem}{Theorem}
\newtheorem{lemma}[theorem]{Lemma}
\newtheorem{definition}{Definition}
\newtheorem{corollary}[theorem]{Corollary}
\newtheorem{property}{Property}
\begin{document}

\title{The Optimal Mechanism in Differential Privacy}

\author{
\authorblockN{Quan Geng, and Pramod Viswanath}\\
\authorblockA{Coordinated Science Laboratory and Dept. of ECE \\
University of Illinois, Urbana-Champaign, IL 61801 \\
Email: \{geng5, pramodv\}@illinois.edu} }

\maketitle

\begin{abstract}

Differential privacy is a  framework to quantify to what extent individual privacy in a statistical database is preserved while releasing useful aggregate information about the database. In this work we study the fundamental tradeoff between privacy and utility in differential privacy. 
We derive the optimal $\epsilon$-differentially private mechanism for  single real-valued query function under a very general utility-maximization (or cost-minimization) framework. The class of  noise probability distributions in the optimal mechanism has  {\em staircase-shaped} probability density functions which are symmetric (around the origin), monotonically decreasing and geometrically decaying. The staircase mechanism can be viewed as a {\em geometric mixture of uniform probability distributions}, providing a simple algorithmic description for the mechanism.  Furthermore, the staircase mechanism naturally generalizes to discrete query output settings as well as more abstract settings. We explicitly derive the parameter of the optimal staircase mechanism for $\ell_1$ and $\ell_2$ cost functions.   Comparing the optimal performances with those of the usual Laplacian mechanism, we show that in the high privacy regime ($\epsilon$ is small),  the Laplacian mechanism is asymptotically optimal as $\epsilon \to 0$;  in the low privacy regime ($\epsilon$ is large), the minimum magnitude and second moment of noise  are $\Theta(\Delta e^{-\frac{\epsilon}{2}})$ and $\Theta(\Delta^2 e^{-\frac{2\epsilon}{3}})$ as $\epsilon \to +\infty$, respectively, while  the corresponding figures when using the Laplacian mechanism are $\frac{\Delta}{\epsilon}$ and $\frac{2\Delta^2}{\epsilon^2}$, where $\Delta$ is the sensitivity of the query function. We conclude that the gains of the staircase mechanism are more pronounced in the moderate-low privacy regime.


\end{abstract}

\section{Introduction} \label{sec:intro}

Differential privacy is a  formal framework to quantify to what extent individual privacy in a statistical database is preserved while releasing useful aggregate information about the database.
It provides strong privacy guarantees by requiring the indistinguishability of whether an individual is in the dataset or not based on the released information. The key idea of differential privacy is that the presence or absence
of any individual data in the database should not affect the final released statistical information significantly, and thus it can give strong privacy guarantees against an adversary with arbitrary auxiliary information. For motivation and background of differential privacy, we refer the readers to the  survey \cite{DPsurvey} by Dwork.

Since its introduction in \cite{DMNS06} by Dwork et.\ al., differential privacy has 
 spawned  a large body of research in  differentially private data-releasing mechanism design and performance analysis in various settings. Differential privacy is a privacy-preserving constraint imposed on the query output releasing mechanisms, and to make use of the released information, it is important to understand the fundamental tradeoff between utility(accuracy) and privacy.

In many existing works on studying the tradeoff between accuracy and privacy in differential privacy, the usual metric of accuracy is in terms of the variance, or magnitude expectation of the noise added to the query output.  For example, Hardt and Talwar \cite{geometry} study the tradeoff between privacy and error for answering a set of linear queries over a histogram in a differentially private way, where the error is defined as the worst  expectation of the $\ell^2$-norm of the noise among all possible query output.  \cite{geometry} derives lower and upper bounds on the error given the differential privacy constraint. Nikolov, Talwar and Zhang \cite{NTZ12} extend the result on the tradeoff between privacy and error to the case of  $(\e,\delta)$-differential privacy. Li et. al. \cite{Li10} study how to optimize linear counting queries under differential privacy, where the error is measured by the mean squared error of query output estimates, which corresponds to the variance of the noise added to the query output to preserve differential privacy.

More generally, the error can be a general function depending on the additive noise (distortion) to the query output.  Ghosh, Roughgarden,  and Sundararajan \cite{Ghosh09}  study a very general utility-maximization framework for a single count query with sensitivity one under differential privacy, where the utility (cost) function can be a general function depending on the noise added to the query output. \cite{Ghosh09} shows that there exists a universally optimal mechanism (adding geometric noise) to preserve differential privacy for a general class of utility functions under a Bayesian framework.  Brenner and Nissim \cite{Nissim10} show that for general query functions, no universally optimal differential privacy mechanisms exist. Gupte and Sundararajan  \cite{minimax10} generalize the result of \cite{Ghosh09}  to a minimax setting.

In this work, we study the fundamental tradeoff between utility and privacy under differential privacy, and derive the optimal differentially private mechanism for  general single real-valued query function, where the utility model is the same as the one adopted in \cite{Ghosh09} and \cite{minimax10}, and the real-valued query function can have arbitrary sensitivity.  Our results can be viewed as a generalization of  \cite{Ghosh09} and \cite{minimax10} to general real-valued query functions with arbitrary sensitivity. We discuss the relations of our work and the existing works in detail in Section \ref{subsec:connection}.

\subsection{Background on Differential Privacy}

The basic problem setting in differential privacy for statistical database is as follows: suppose a dataset curator is in charge of a statistical database which consists of records of many individuals, and an analyst sends a query request to the curator to get some aggregate information about the whole database. Without any privacy concerns, the database curator can simply apply the query function to the dataset, compute the query output, and send the result to the analyst. However, to protect the privacy of individual data in the dataset, the dataset curator should use a randomized query-answering mechanism such that the probability distribution of the query output does not differ too much whether any individual record is in the database or not.

Formally, consider a real-valued query function
\begin{align}
	q: \database \rightarrow \R,
\end{align}
where $\database$ is the set of all possible datasets. The real-valued query function $q$ will be applied to a dataset, and query output is a real number. Two datasets $D_1, D_2 \in \database$ are called neighboring datasets if they differ in at most one element, i.e.,  one is a proper subset of the other and the larger dataset contains just one additional element \cite{DPsurvey}. A randomized query-answering mechanism $\KM$ for the query function $q$ will randomly output a number with probability distribution depends on query output $q(D)$, where $D$ is the dataset.

\begin{definition}[$\e$-differential privacy \cite{DPsurvey}]
	A randomized mechanism $\KM$ gives $\e$-differential privacy if for all data sets $D_1$ and $D_2$ differing on at most one element, and all $S \subset \text{Range}(\KM)$,
	\begin{align}
	 	\text{Pr}[\KM(D_1) \in S] \le \exp(\e) \;  \text{Pr}[\KM(D_2) \in S], \label{eqn:dpgeneral}
	 \end{align}
	 where $\KM(D)$ is the random output of the  mechanism $\KM$ when the query function $q$ is applied to the dataset $D$.
\end{definition}

The differential privacy constraint \eqref{eqn:dpgeneral} essentially requires that for all neighboring datasets, the probability distributions of the output of the randomized mechanism should be approximately the same. Therefore, for any individual record, its presence or absence in the dataset will not significantly affect the output of the mechanism, which makes it hard for adversaries with arbitrary background knowledge to make inference on any individual from the released query output information. The parameter $\e \in (0, +\infty)$ quantifies how private the mechanism is: the smaller $\e$ is , the more private the randomized mechanism is.


\subsubsection{Operational Meaning of $\e$-Differential Privacy in the Context of Hypothesis Testing}

As shown by \cite{Zhou08}, one can interpret the   differential privacy constraint  \eqref{eqn:dpgeneral} in the context of hypothesis testing in terms of false alarm probability and missing detection probability. Indeed, consider a binary hypothesis testing problem over two neighboring datasets, $H_0: D_1 $ versus $H_1: D_2$, where an individual's record is in $D_2$ only. Given a decision rule, let $S$ be the decision region such that when the released output lies in $S$, $H_1$ will be rejected, and when the released output lies in $S^C$ (the complement of $S$), $H_0$ will be rejected. The false alarm probability $P_{FA}$ and the missing detection probability $P_{MD}$ can be written as
\begin{align}
	P_{FA} &= P(K(D_1) \in S^C), \\
	P_{MD} &= P(K(D_2) \in S).
\end{align}

Therefore, from \eqref{eqn:dpgeneral} we get
\begin{align}
	1 - P_{FA} \le e^{\e} P_{MD} .
\end{align}
Thus
\begin{align}
	 e^{\e} P_{MD} + P_{FA} \ge 1 .
\end{align}

Switch $D_1$ and $D_2$ in \eqref{eqn:dpgeneral}, and we get
	\begin{align}
	 	\text{Pr}[\KM(D_2) \in S] \le \exp(\e) \;  \text{Pr}[\KM(D_1) \in S] .
	 \end{align}

Therefore,
\begin{align}
	1 - P_{MD} \le e^{\e} P_{FA} ,
\end{align}
and thus
\begin{align}
	 P_{MD} + e^{\e} P_{FA} \ge 1 .
\end{align}

In conclusion,  we have
\begin{align}
  e^{\e} P_{MD} + P_{FA} &\ge 1 , \label{eq:famd1}\\
	P_{MD} + e^{\e} P_{FA} &\ge 1 \label{eq:famd2}.
\end{align}

The $\e$-differential privacy constraint implies that in the context of hypothesis testing,  $P_{FA}$ and $P_{MD}$ can not be both too small.




\subsubsection{Laplacian Mechanism}

The standard approach to preserving $\e$-differential privacy is to perturb the query output by adding random noise with Laplacian distribution proportional to the sensitivity $\Delta$ of the query function $q$, where the  sensitivity of a real-valued query function is defined as

\begin{definition}[Query Sensitivity \cite{DPsurvey}]
	For a real-valued query function $q: \database \rightarrow \R$, the sensitivity of $q$ is defined as
	\begin{align}
	\D := \max_{D_1,D_2 \in \database} | q(D_1) - q(D_2)|, \label{def:sensitivity}
\end{align}
for all $D_1,D_2$ differing in at most one element.
\end{definition}

Formally, the  Laplacian mechanism is:
\begin{definition}[Laplacian Mechanism \cite{DMNS06}]
	For a real-valued query function $q: \database \rightarrow \R$ with sensitivity $\Delta$, Laplacian mechanism will output
	\begin{align}
		K(D) := q(D) + \text{Lap}(\frac{\D}{\e}),
	\end{align}
 	where $\text{Lap}(\lambda)$ is a random variable with probability density function
 	\begin{align}
 		f(x) = \frac{1}{2\lambda} e^{-\frac{|x|}{\lambda}}, \quad \forall x \in \R.
 	\end{align}
\end{definition}


Consider two neighboring datasets $D_1$ and $D_2$ where $|q(D_1) - q(D_2)| = \D$. It is easy to compute the tradeoff between the false alarm probability $P_{FA}$ and the missing detection probability $P_{MD}$  under  Laplacian mechanism, which is
\begin{align}
P_{MD}  =
\begin{cases}
1 - e^{\e}P_{FA} & P_{FA} \in [0,   \frac{1}{2} e^{-\e}) \\
\frac{e^{-\e}}{4P_{FA}}  & P_{FA} \in [ \frac{1}{2} e^{-\e}, \frac{1}{2} ) \\
e^{-\e}(1 - P_{FA} ) & P_{FA} \in [\frac{1}{2}, 1]
\end{cases}\label{eqn:regionLap}
\end{align}




Since its introduction in \cite{DMNS06},  the Laplacian mechanism has become the standard tool in differential privacy and has been used as the basic building block in a number of works on differential privacy analysis in other more complex problem settings, e.g., \cite{HLM12, MM09, Xiao11, Huang12, McSherry10, Li10, Barak07, DKMMN06, DL09, Roth10, LO11, Smith2011, CM08, continual, Ding11, Hardt2010, Geo12, Ka11, Mironov12bit, Sarathy2011, Xiao2011ireduct, Dankar12, Friedman10, zhang2012functional, lei2011differentially, wasserman2010, dwork2010pan, Guptadp2010, blum2011fast, hsu2012distributed, hsu2012dp, blocki2012johnson, hardt2012beyond, hardt2012private, gupta2012iterative, kasi13, karwa2011private, cormode2012differentially}. Given this near-routine use of the query-output independent adding of Laplacian noise,  the following two questions are natural:
\begin{itemize}
	\item Is  query-output independent perturbation optimal?
	\item Assume query-output independent perturbation,  is Lapacian noise distribution optimal?
\end{itemize}

In this work we answer the above two questions. Our main result is that given an $\e$-differential privacy constraint, under a general utility-maximization (equivalently, cost-minimization) model:
\begin{itemize}
	\item adding query-output independent noise is indeed optimal (under a mild technical condition),
	\item the optimal noise distribution is {\em not} Laplacian distribution; instead, the optimal one has a {\em staircase-shaped} probability density function.
\end{itemize}

These results are derived under the following settings:
\begin{itemize}
\item when the domain of the query output is the entire real line or the set of integers;
\item nothing more about the query function is known beyond its global sensitivity;
\item either local sensitivity \cite{NRS07} of the query function is unknown or it is the same as global sensitivity (as in the important case of count queries).
\end{itemize}
If any of these conditions are violated (the output domain has sharp boundaries, or the local sensitivity deviates from the global sensitivity \cite{NRS07}, or we are restricted to specific query functions \cite{CM08}), then the optimal privacy mechanism need not be data or query output dependent.



\subsection{Problem Formulation}

We formulate a utility-maximization (cost-minimization) problem under the differential privacy constraint.

\subsubsection{Differential Privacy Constraint}
A general randomized releasing mechanism $\KM$ is a family of noise probability distributions indexed by  the query output (denoted by $t$), i.e.,
\begin{align}
 	\KM = \{\p_t : t \in \R \},
 \end{align}
and given dataset $D$, the mechanism $\KM$ will release the query output $t = q(D)$ corrupted by additive random noise with probability distribution $\p_t$:
\begin{align}
    \KM(D) = t + X_{t},
\end{align}
where $X_{t}$ is a random variable with probability distribution $\p_{t}$.

The differential privacy constraint \eqref{eqn:dpgeneral}    on $\KM$ is that for any $t_1,t_2 \in \R$ such that $|t_1 - t_2| \le \D $ (corresponding to the query outputs for two neighboring datasets) ,
\begin{align}
	\nm_{t_1} (S) \le e^{\e} \nm_{t_2}(S + t_1 - t_2), \forall \; \text{measurable set} \; S \subset \R,  \label{eqn:diffgeneralnoise1}
\end{align}
where for any $t \in \R$,  $S+t \, := \, \{s+t \, | \, s \in S\}$.

\subsubsection{Utility Model}

The utility model we use in this work is a very general one, which is also used in the works by Ghosh, Roughgarden, and Sundararajan \cite{Ghosh09}, Gupte and  Sundararajan \cite{minimax10}, and Brenner and Nissim \cite{Nissim10}.

Consider a cost function $\loss(\cdot): \R \to \R$, which is a function of the additive noise. Given additive noise $x$, the cost is $\loss(x)$. Given query output $t \in \R$, the additive noise is a random variable with probability distribution $\p_t$, and thus the expectation of the cost is
\begin{align}
	\int_{x\in \R} \loss(x) \p_t (dx).
\end{align}

The objective is to minimize the worst case cost among all possible query output $t \in \R$, i.e.,
\begin{align}
\text{minimize}\;  \sup_{t \in \R} \int_{x \in \R} \loss(x) \p_t(dx). \label{eqn:objective}
 \end{align}

\subsubsection{Optimization Problem}

Combining the differential privacy constraint \eqref{eqn:diffgeneralnoise1} and the objective function \eqref{eqn:objective}, we formulate a functional optimization problem:
\begin{align}
	\mathop{\text{minimize}}\limits_{ \{\p_t\}_{t \in \R}   } & \ \sup_{t \in \R} \int_{x \in \R} \loss(x) \p_t(dx) \label{eqn:objecinopt}\\
	\text{subject to} & \ \nm_{t_1} (S) \le e^{\e} \nm_{t_2}(S + t_1 - t_2), \forall \ \text{measurable set} \ S\subseteq \R, \ \forall |t_1 - t_2| \le \D.  \label{eqn:diffgeneralnoise}
\end{align}


\subsection{An Overview of Our Results}

\subsubsection{Optimal Noise Probability Distribution}

When the query output domain is the real line or the set of integers, we show (subject to some mild technical conditions on the family of differentially private mechanisms) that adding query-output-independent noise is optimal. Thus we only need to study what  the optimal noise probability distribution is. Let $\p$ denote the probability distribution of the noise added to the query output. Then the optimization problem \eqref{eqn:objecinopt} and \eqref{eqn:diffgeneralnoise} is reduced to
\begin{align}
	\mathop{\text{minimize}}\limits_{ \p} & \ \int_{x \in \R} \loss(x) \p (dx) \\
	\text{subject to} & \ \p(S) \le e^{\e} \p(S + d), \forall \ \text{measurable set} \ S\subseteq \R, \ \forall |d| \le \D.  
\end{align}

Consider a staircase-shaped probability distribution with probability density function (p.d.f.) $f_{\gamma}(\cdot)$ defined as
\begin{align}
f_{\gamma}(x)  =
\begin{cases}
  	 a(\gamma) & x \in [0, \gamma \D) \\
 e^{-\e} a(\gamma) & x \in [\gamma \D,   \D) \\
  e^{-k\e} f_{\gamma}(x - k\D)  & x \in [ k \D,   (k+1)\D) \; \text{for} \; k \in \N \\
 f_{\gamma}(-x) & x<0
  \end{cases}
\end{align}
where
\begin{align}
	a(\gamma) \triangleq \frac{ 1 - e^{-\e}}{2 \D (\gamma + e^{-\e}(1-\gamma))}
\end{align}
is a normalizing constant to make $\int_{x\in \R} f_{\gamma}(x) dx = 1$.

Our main result is
\begin{theorem}
	If the cost function $\loss(\cdot)$ is symmetric and increasing, and $\sup_{x \ge T} \frac{\loss(x+1)}{\loss(x)} < + \infty$ for some $T>0$, the optimal noise probability distribution has a staircase-shaped probability density function $f_{\gamma^*}(\cdot)$, where
	\begin{align}
		\gamma^* = \mathop{\arg \min} \limits_{\gamma \in [0,1] } \int_{x \in \R} \loss (x)  f_{\gamma}(x) dx .
	\end{align}
	
\end{theorem}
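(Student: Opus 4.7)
The plan is to reduce the infinite-dimensional optimization over noise densities $f$ to a one-parameter optimization over $\gamma \in [0,1]$, via a sequence of structural reductions that progressively narrow the class of optimal candidates. I assume that the reduction from query-output-dependent mechanisms $\{\p_t\}_{t\in\R}$ to a single noise distribution $\p$ with density $f$ has been carried out earlier, as indicated in the overview.

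The first two reductions are \emph{symmetrization} and \emph{monotonicity}. Replacing $f(x)$ by $\tilde f(x) := (f(x)+f(-x))/2$ leaves the objective unchanged (since $\loss$ is symmetric) and preserves the DP constraint: averaging $f(x) \le e^\epsilon f(x+d)$ with its $-d$ counterpart yields $\tilde f(x) \le e^\epsilon \tilde f(x+d)$. Given symmetry, I would then show $f$ may be taken non-increasing on $[0,\infty)$. The Hardy--Littlewood rearrangement inequality pulls the objective down under symmetric-decreasing rearrangement; the delicate point is that the log-Lipschitz DP constraint must be shown to survive rearrangement, which I would address either via a level-set argument on $\{f > c\}$ (the $\Delta$-neighborhood property is compatible with rearrangement) or via a direct local exchange: any violation of monotonicity is rectified by a mass swap between neighborhoods of the offending pair that respects the ratio constraints.

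Next comes \emph{periodic geometric decay}: I would establish $f(x+\Delta) = e^{-\epsilon} f(x)$ for a.e.\ $x \ge 0$. The DP constraint forces $f(x+\Delta) \ge e^{-\epsilon} f(x)$; if strict inequality held on a positive-measure set, lowering $f$ there and redistributing the removed mass toward the origin preserves feasibility (removing mass only relaxes upper-ratio constraints, while the added mass respects monotonicity) and strictly decreases the cost because $\loss$ is increasing. After these three reductions $f$ is fully determined by its restriction $g := f|_{[0,\Delta)}$, and the problem reduces to minimizing $\int_0^\Delta g(u) h(u)\,du$, where $h(u) := 2\sum_{k=0}^\infty e^{-k\epsilon}\, \loss(u+k\Delta)$, subject to $g$ non-increasing, $g(x) \le e^\epsilon g(\Delta-x)$ for $x \in [0,\Delta/2]$ (the residual binding DP constraint, coming from the symmetric neighbor), and the normalization $\int_0^\Delta g = (1-e^{-\epsilon})/2$.

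The final and decisive reduction is the \emph{two-level structure}. Because $h$ is strictly increasing while $g$ is non-increasing with $g(0^+)/g(\Delta^-) \le e^\epsilon$, the optimum concentrates mass near $0$. I would argue that if $g$ takes a value strictly between $M := g(0^+)$ and $m := g(\Delta^-)$ on a set of positive measure, a mass-preserving swap replacing an interior plateau by a two-level substep (value $M$ on a left sub-interval, $m$ on a matched right sub-interval of the same total integral) preserves every constraint and strictly decreases $\int g h$ by a Chebyshev-type exchange inequality, since the signed perturbation integrates to zero and is positive precisely where $h$ is smaller. Iterating collapses $g$ to a two-level staircase with ratio exactly $e^\epsilon$ and a single jump at $x = \gamma\Delta$, so $f \equiv f_\gamma$. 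The growth hypothesis $\sup_{x \ge T} \loss(x+1)/\loss(x) < \infty$ ensures convergence of the series defining $h$, so $\int \loss f_\gamma$ is finite and continuous in $\gamma$, and a minimizer $\gamma^*$ is attained on the compact interval $[0,1]$. The main obstacle will be this last step: making the extreme-point / local exchange argument rigorous in infinite dimensions while verifying that the two-level substitution preserves the nonlocal ratio constraint $g(x) \le e^\epsilon g(\Delta - x)$.
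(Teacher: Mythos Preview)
Your high-level strategy (symmetrize, monotonize, force geometric periodic decay, reduce to a two-level step on $[0,\Delta)$) matches the paper's, but the execution diverges at one key point: the paper \emph{discretizes first}. Immediately after symmetrization it replaces an arbitrary feasible measure $\p$ by a piecewise-constant density on the grid $\{k\Delta/i\}$, shows this approximation satisfies the DP constraint and has cost converging to $V(\p)$, and then carries out every remaining reduction (sorting to monotone, forcing $a_k/a_{k+i}=e^\epsilon$, collapsing to two levels) on finite sequences $\{a_k\}$. Only in the final step does it pass back to the continuous staircase $f_\gamma$. Your route stays continuous throughout.

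What the paper's discretization buys is precisely the two points you flag as delicate. The monotonicity step becomes a literal sort of a sequence, and the DP constraint is verified by a short combinatorial algorithm rather than a level-set argument for rearrangements; the two-level reduction becomes an elementary linear-programming/extreme-point argument on finitely many $a_k$ rather than an infinite-dimensional exchange. It also silently closes a gap in your setup: you start from ``$\p$ with density $f$'', but the DP constraint only rules out atoms, not singular continuous parts, so a priori there need not be any density. The paper's piecewise-constant approximation applies to any measure $\p$, density or not. Conversely, your continuous approach, if the rearrangement-preserves-DP and exchange steps are made fully rigorous, would avoid the approximation layer and might generalize more cleanly; the residual constraint you write as $g(x)\le e^\epsilon g(\Delta-x)$ in fact collapses, under monotonicity plus periodic decay, to the single endpoint condition $g(0^+)/g(\Delta^-)\le e^\epsilon$, which simplifies the final step.
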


We plot the probability density functions of Laplace mechanism and staircase mechanism in Figure \ref{fig:probdf}. Figure \ref{fig:fgamma} in Section \ref{sec:result} gives a precise description of staircase mechanism.

The staircase mechanism is specified by three parameters: $\e$, $\D$, and $\gamma^*$ which is determined by $\e$ and the cost function $\loss(\cdot)$. For certain classes of cost functions, there are closed-form expressions for the optimal $\gamma^*$.

\begin{figure}[h]
\begin{subfigure}[b]{0.5\linewidth}
\centering
\includegraphics[width=0.8\textwidth]{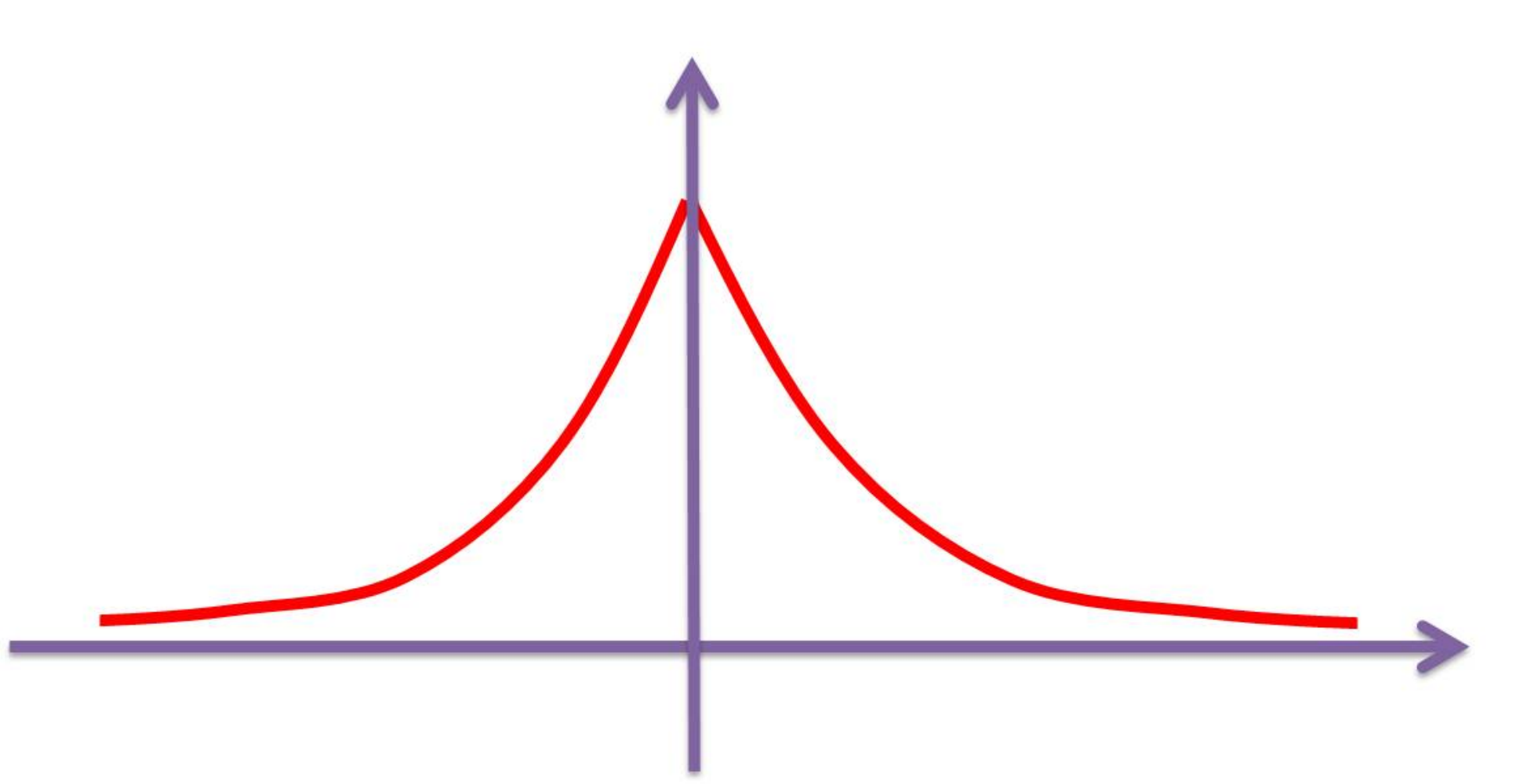}
\caption{ Laplace Mechanism }
\end{subfigure}
\begin{subfigure}[b]{0.5\linewidth}
\centering
\includegraphics[width=0.8 \textwidth]{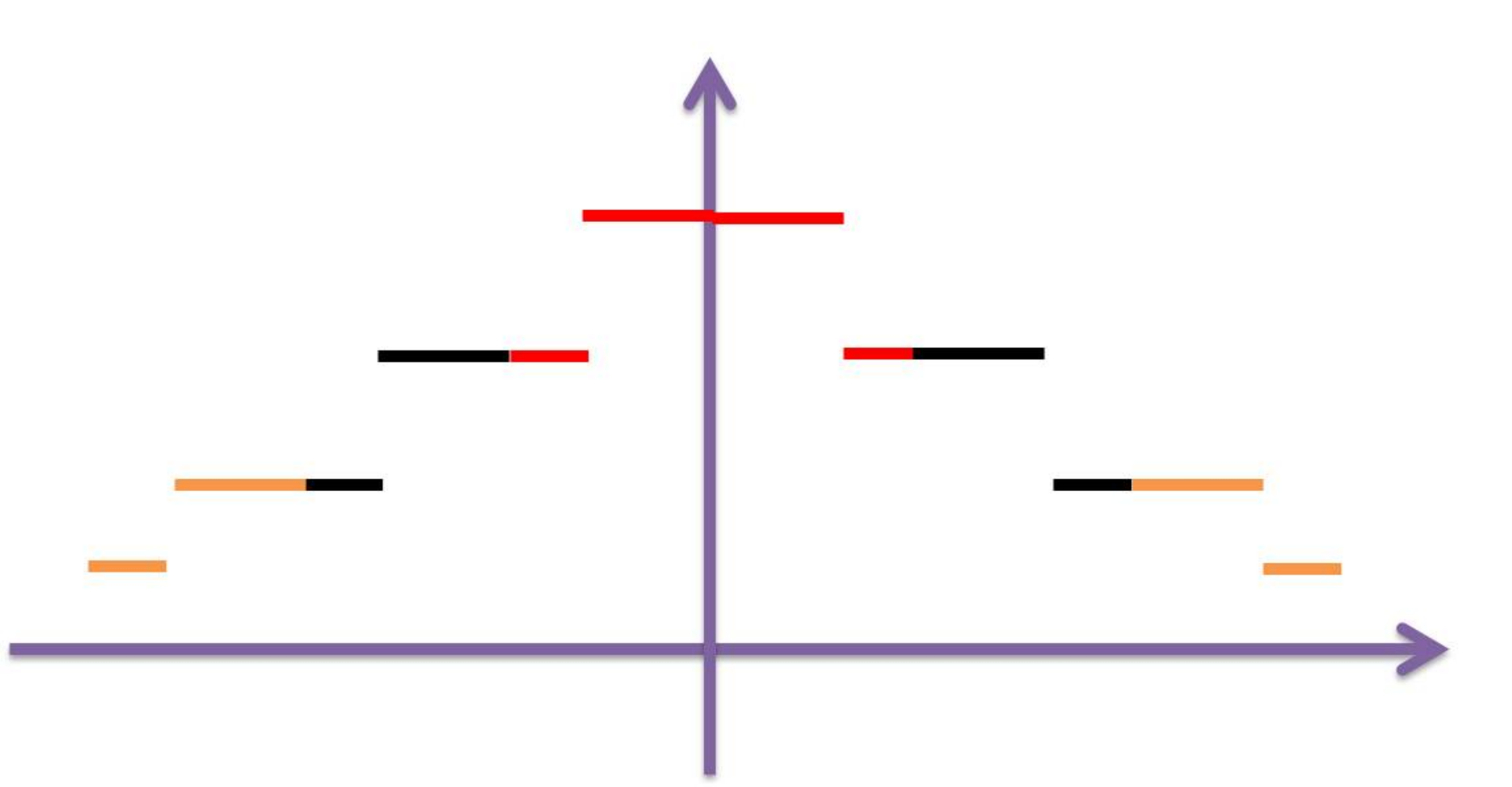}
\caption{ Staircase Mechanism}
\end{subfigure}
\caption{Probability Density Functions of Laplacian Mechanism and Staircase Mechanism}
\label{fig:probdf}
\end{figure}

\subsubsection{Applications: Minimum Noise Magnitude and Noise Power}

We apply our main result Theorem \ref{thm:main} to two typical cost functions $\loss(x) = |x|$ and $\loss(x) = x^2$, which measure noise magnitude and noise power, respectively. We derive the closed-form expressions for the optimal parameters $\gamma^*$ for these two cost functions. Comparing the optimal performances with those of the Laplacian mechanism, we show that in the high privacy regime ($\e$ is small), the Laplacian mechanism is asymptotically optimal as $\e \to 0$;  in the low privacy regime ($\e$ is large), the minimum expectation of noise amplitude  and minimum noise power are $\Theta(\D e^{-\frac{\e}{2}})$ and $\Theta(\D^2 e^{-\frac{2\e}{3}})$ as $\e \to +\infty$, while  the expectation of noise amplitude and power using the Laplacian mechanism are $\frac{\D}{\e}$ and $\frac{2\D^2}{\e^2}$, respectively, where $\D$ is the sensitivity of the query function. We conclude that the gains are more pronounced in the low privacy regime.

\subsubsection{Extension to the Discrete Setting}

Since for many important practical applications query functions are integer-valued, we also derive the optimal differentially private mechanisms for answering a single integer-valued query function.
 We show that adding query-output independent noise is optimal under a mild technical condition, and the optimal noise probability distribution has a staircase-shaped probability mass function, which can be viewed as the discrete variant of the staircase mechanism in the continuous setting.

 This result helps us directly compare our work and the existing works \cite{Ghosh09, minimax10} on integer-valued query functions. Our result shows that for integer-valued query function, the optimal noise probability mass function is also  staircase-shaped, and in the case the sensitivity $\D = 1$, the optimal probability mass function is reduced to the geometric distribution, which was derived in \cite{Ghosh09, minimax10}. Therefore, this result can be viewed as a generalization of \cite{Ghosh09, minimax10} in the discrete setting for query functions with arbitrary sensitivity.




\subsection{Connection to the Literature}\label{subsec:connection}

In this section, we discuss the relations of our results and some directly related works in the literature, and the implications of our results on  other works.

\subsubsection{Laplacian Mechanism vs Staircase Mechanism}

The Laplacian mechanism is specified by two parameters, $\e$ and the query function sensitivity $\D$. $\e$ and $\D$ completely characterize the  differential privacy constraint. On the other hand, the staircase mechanism is specified by three parameters, $\e$, $\D$, and $\gamma^*$ which is determined by $\e$ and the utility function/cost function. For certain classes of utility functions/cost  functions, there are closed-form expressions for the optimal $\gamma^*$.

From the two examples given in Section \ref{sec:application}, we can see that although the Laplacian mechanism is not strictly optimal, in the high privacy regime ($\e \to 0$), Laplacian mechanism is asymptotically optimal:
\begin{itemize}
	\item For the expectation of noise amplitude, the additive gap from the optimal values goes to 0 as $\e \to 0$,
	\item For noise power, the additive gap from the optimal values is upper bounded by a constant  as $\e \to 0$.
\end{itemize}
 However, in the low privacy regime ($\e \to +\infty$), the multiplicative gap from the optimal values can be arbitrarily large. We conclude that in the high privacy regime, the Laplacian mechanism is nearly optimal, while in the low privacy regime significant improvement can be achieved by using the staircase mechanism. We plot the multiplicative gain of staircase mechanism over Laplacian mechanism for expectation of noise amplitude and noise power in Figure  \ref{fig:comparison}, where $V_{\text{Optimal}}$ is the optimal (minimum) cost, which is achieved by staircase mechanism, and $V_{Lap}$ is the cost of Laplacian mechanism. We can see that for $\e \approx 10$, the staircase mechanism has about 15-fold and 23-fold improvement, with noise amplitude and power respectively. While $\epsilon \approx 10$ corresponds to really low privacy, our results show that low privacy can be had very cheaply (particularly when compared to the state of the art Laplacian mechanism).

Since the staircase mechanism is derived under the same problem setting as Laplacian mechanism, the staircase mechanism can be applied {\em wherever} Laplacian mechanism is used, and it performs {\em strictly better} than Laplacian mechanism (and significantly better in low privacy scenarios).


\begin{figure}[h]
\begin{subfigure}[b]{0.5\linewidth}
\centering
\includegraphics[width=1.1\textwidth]{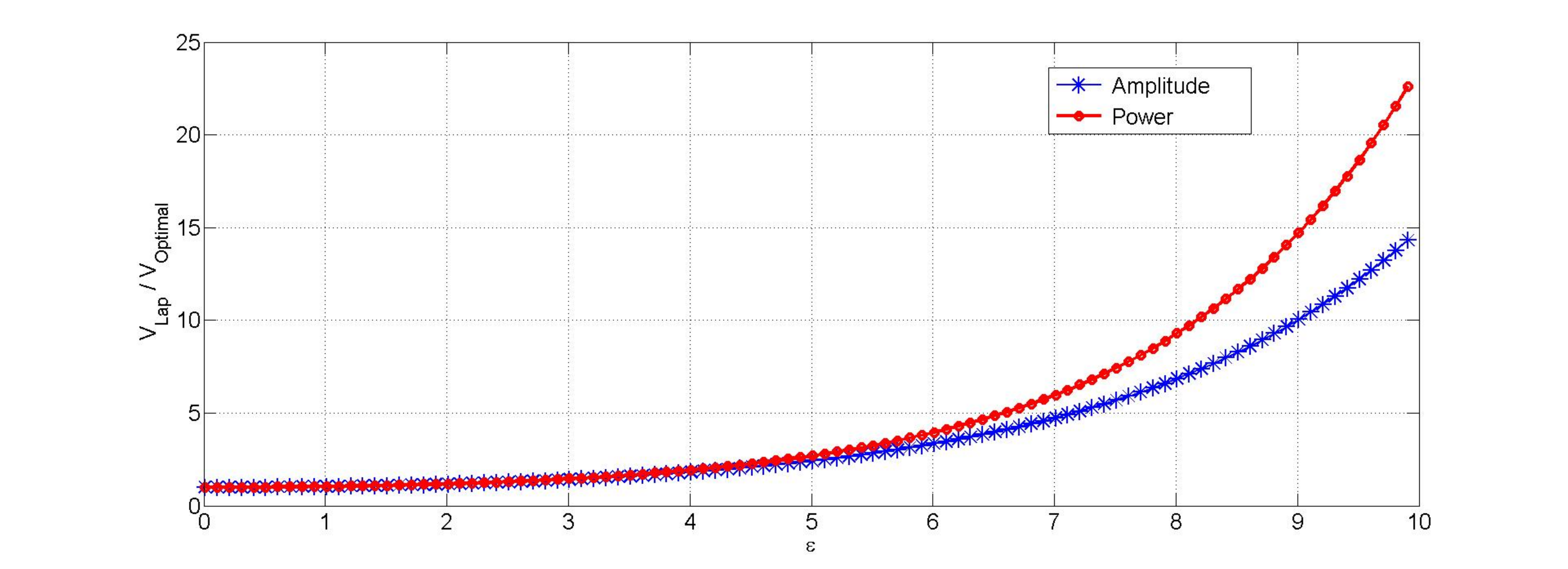}
\caption{ $0 < \e \le 10 $}
\label{fig:compare1}
\end{subfigure}
\begin{subfigure}[b]{0.5\linewidth}
\centering
\includegraphics[width=1.1\textwidth]{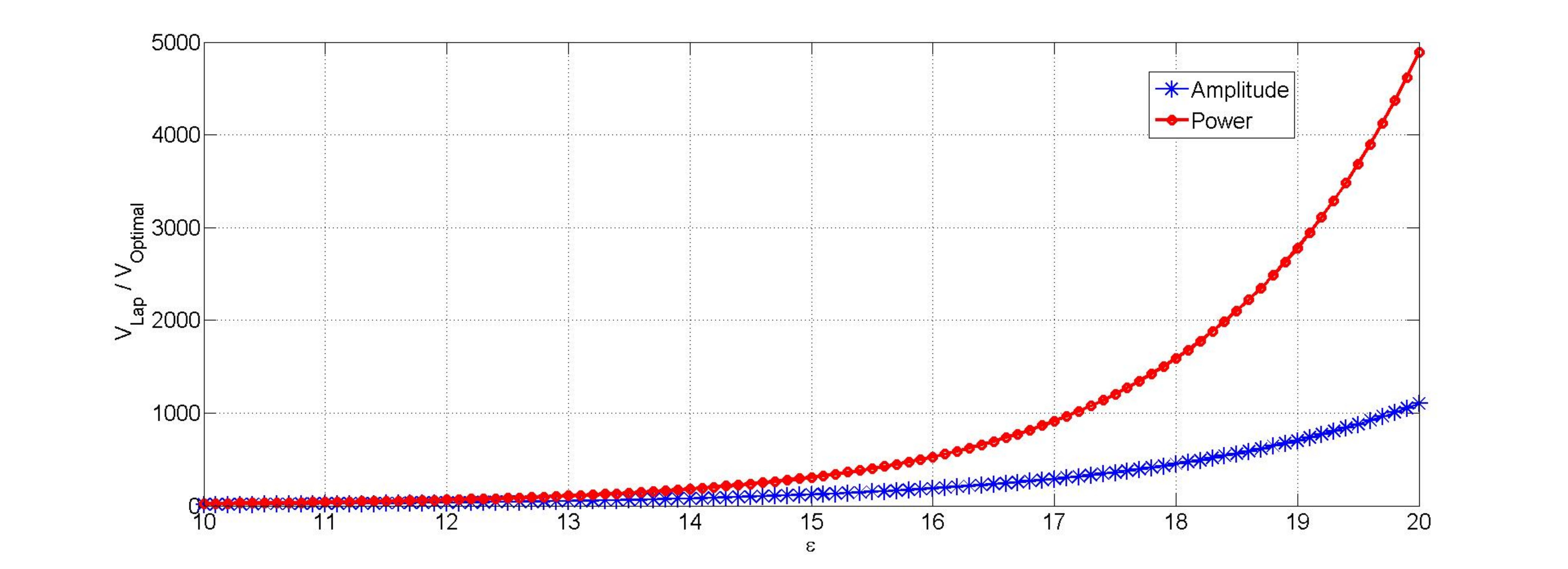}
\caption{ $10 \le \e \le 20$}
\label{fig:compare2}
\end{subfigure}
\caption{Multiplicative Gain of the Staircase Mechanism over the Laplacian Mechanism. }
\label{fig:comparison}
\end{figure}

\subsubsection{Relation to Shamai and Verdu, \cite{SV92}}
Shamai and Verdu \cite{SV92} consider the minimum variance noise for a fixed value of the average of false alarm and missed detection probabilities of binary hypothesis testing. In \cite{SV92}, the binary hypotheses correspond to the signal being in a binary set $\{ -\Delta, +\Delta\}$. Their solution involved the noise being discrete and, further, having a pmf on the integer lattice (scaled by $\Delta$). Our setting is related, but is differentiated via the following two key distinctions:
\begin{itemize}
\item Instead of a constraint on the sum of false alarm and missed detection probabilities, we have constraints on symmetric weighted combinations of the two error probabilities (as in Equations~\eqref{eq:famd1} and~\eqref{eq:famd2}).
\item Instead of the binary hypotheses corresponding to the signal being in a binary set $\{-\Delta, +\Delta\}$ we consider all possible binary hypotheses for the signal to be in $\{ x_1, x_2\}$ where $x_1, x_2 \in [-\Delta, \Delta]$ are arbitrtary.
\end{itemize}

\subsubsection{Relation to Ghosh et. al. \cite{Ghosh09} }

Ghosh, Roughgarden, and Sundararajan  \cite{Ghosh09} show that for a single count query with sensitivity $\D = 1$, for a general class of utility functions, to minimize the expected cost under a Bayesian framework the optimal mechanism to preserve differential privacy is the geometric mechanism, which adds noise with geometric distribution.

We discuss the relations and differences between \cite{Ghosh09} and our work in the following: Both \cite{Ghosh09} and our work are similar in that,  given the query output, the cost function only depends on the additive noise magnitude, and is an increasing function of noise magnitude. On the other hand, there are two main differences:
\begin{itemize}

	\item \cite{Ghosh09}  works under a Bayesian setting, while ours is to minimize the worst case cost.

	\item \cite{Ghosh09} studies a count query where the query output is integer-valued, bounded and sensitivity is unity.  In our work,  we first study general real-valued query function where the query output can take any real value, and then generalize the result to discrete setting where query output is integer valued.  In both cases, the sensitivity of query functions can be arbitrary, not restricted to one.
\end{itemize}

\subsubsection{Relation to  Gupte and Sundararajan  \cite{minimax10}  }

  Gupte and Sundararajan \cite{minimax10} derive the optimal noise probability distributions for a single count query with sensitivity $\D = 1$ for minimax (risk-averse) users. Their model is the same as the one in \cite{Ghosh09} except that their objective function is to minimize the worst case cost, the same as our objective.  \cite{minimax10} shows that although there is no universally optimal solution to the minimax optimization problem in \cite{minimax10} for a general class of cost functions, each solution (corresponding to different cost functions) can be derived from the same geometric mechanism by randomly remapping.

  As in \cite{Ghosh09}, \cite{minimax10} assumes the query-output is bounded. Our result shows that when the query sensitivity is one, without any boundedness knowledge about the query-output, the optimal mechanism is to add random noise with geometric distribution to the query output.

\subsubsection{Relation to Brenner and Nissim  \cite{Nissim10}  }

While \cite{Ghosh09} shows that  for a single count query with sensitivity $\D = 1$, there is a universally optimal mechanism for a general class of utility functions  under a Bayesian framework, Brenner and Nissim  \cite{Nissim10}  show that for general query functions no universally optimal mechanisms exist. Indeed, this follows directly from our results:  under our optimization framework, the optimal mechanism is adding noise with staircase-shaped probability distribution which is specified by three parameters $\e, \D$ and $\gamma^*$, where in general $\gamma^*$ depends on the cost function.  Generally, for different cost functions, the optimal noise probability distributions have staircase-shaped probability density functions specified by different parameters.

\subsubsection{Relation to   Nissim, Raskhodnikova and Smith \cite{NRS07}  }

Nissim, Raskhodnikova and Smith \cite{NRS07} show that for certain nonlinear query functions, one can improve the accuracy by adding data-dependent noise calibrated to the smooth sensitivity of the query function, which is based on the local sensitivity of the query function. In our model, we use   the global sensitivity of the query function only, and assume that the local sensitivity is the same as the global sensitivity, which holds for a general class of query functions, e.g., count, sum.

\subsubsection{Relation to Hardt and Talwar  \cite{geometry}  }

Hardt and Talwar \cite{geometry} study the tradeoff between privacy and error for answering a set of linear queries over a histogram in a differentially private way. The  error is defined as the worst  expectation of the $\ell^2$-norm of the noise. The lower bound given in \cite{geometry} is $\Omega( \e^{-1} d\sqrt{d})$, where $d$ is the number of linear queries. An immediate consequence of our result is that for fixed d, when $\e \to +\infty$, an upper bound of $\Theta(e^{-\frac{\e}{3d}}d \sqrt{d})$ is achievable by adding independent staircase-shaped noise with parameter $\frac{\e}{d}$ to each component.

\subsubsection{Relation to Other Works}

There are many existing works on studying how to improve the accuracy for answering more complex queries under differential privacy, in which the basic building block is the standard Laplacian mechanism. For example, Hay et. al. \cite{Hay10} show that  one can improve the accuracy for a general class of histogram queries, by exploiting the consistency constraints on the query output, and Li et. al. \cite{Li10}  study how to optimize linear counting queries under differential privacy by carefully choosing the set of linear queries to be answered. In these works, the error is measured by the mean squared error of query output estimates, which corresponds to the variance of the noise added to the query output to preserve differential privacy. In terms of $\epsilon$, the error bound in these works scales linearly to $\frac{1}{\e^2}$, because of the use of Laplacian noise. If Laplacian distribution is replaced by staircase distribution in these works, one can improve the error bound to $\Theta(e^{-C\e})$ (for some constant $C$ which depends on the number of queries) when $\e \to +\infty$ (corresponding to the low privacy regime).



\subsection{Organization}

The paper is organized as follows. We show the optimality of query-output independent perturbation in Section \ref{sec:optimality}, and present the optimal differentially private mechanism, staircase mechanism, in Section \ref{sec:result}. In Section \ref{sec:application}, we apply our main result to derive the optimal noise probability distribution with minimum expectation of noise amplitude and power, respectively, and  compare the performances with the Laplacian mechanism.  Section \ref{sec:gammaproperty} presents the asymptotic properties of $\gamma^*$ in the staircase mechanism for momentum cost functions, and suggests a heuristic choice of $\gamma$ that appears to work well for a wide class of cost functions. Section \ref{sec:discrete} generalizes the staircase mechanism for integer-valued query function in the discrete setting, and Section \ref{sec:abstractsetting} extends the staircase mechanism to the abstract setting. Section \ref{sec:conclusion} concludes this paper.


\section{Optimality of Query-Qutput Independent Perturbation} \label{sec:optimality}

Recall that the optimization problem we study in this work is
\begin{align}
	\mathop{\text{minimize}}\limits_{ \{\p_t\}_{t \in \R}   } & \ \sup_{t \in \R} \int_{x \in \R} \loss(x) \p_t(dx) \label{eqn:objecinopt1}\\
	\text{subject to} & \ \nm_{t_1} (S) \le e^{\e} \nm_{t_2}(S + t_1 - t_2), \forall \ \text{measurable set} \ S\subseteq \R, \ \forall |t_1 - t_2| \le \D,  \label{eqn:dpmeasure1}
\end{align}
where $\p_t$ is the noise probability distribution when the query output is $t$.

Our claim  is that in the optimal family of probability distributions, $\nm_{t}$ can be  independent of $t$, i.e., the probability distribution of noise is independent of the query output. We  prove this claim under a  technical condition which assumes that $\{\nm_{t}\}_{t \in \R}$ is piecewise constant and periodic (the period  can be arbitrary) in terms of $t$.

For any positive integer $n$, and for any positive real number $T$, define
\begin{align}
 	\KM_{T,n} \triangleq \{\; \{\nm_t\}_{t \in \R}  \, |\, \{\nm_t\}_{t \in \R} \; \text{satisfies} \; \eqref{eqn:diffgeneralnoise}, \;  & \nm_t = \nm_{k \frac{T}{n}},   \text{for}\;   t \in [k\frac{T}{n}, (k+1)\frac{T}{n}), k \in \Z, \nonumber  \\
 & \text{and} \; \nm_{t+T} = \nm_{t}, \forall t \in \R    \}.
 \end{align}

\begin{theorem} \label{thm:maingeneral}
 	Given any family of probability distribution $\{\nm_t\}_{t \in \R} \in  \cup_{T >0} \cup_{n \ge 1} \KM_{T,n}$,  there exists a probability distribution $\nm^*$ such that the family of probability distributions  $\{\p^*_t\}_{t \in \R}$ with $\p^*_t \equiv \p^*$  satisfies the differential privacy constraint \eqref{eqn:diffgeneralnoise} and
 	\begin{align}
 		  \sup_{t \in \R} \int_{x \in \R} \loss(x) \p^*_t(dx)     \le   \sup_{t \in \R} \int_{x \in \R} \loss(x) \p_t(dx) .
 	\end{align}	
 \end{theorem}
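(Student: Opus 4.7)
The plan is to define $\p^*$ as the period-average of the given family and verify that the constant family $\{\p^*_t \equiv \p^*\}_{t \in \R}$ satisfies both requirements. Given $\{\p_t\}_{t \in \R} \in \KM_{T,n}$, I set
\begin{align}
\p^*(S) := \frac{1}{T} \int_0^T \p_t(S)\, dt = \frac{1}{n} \sum_{k=0}^{n-1} \p_{kT/n}(S),
\end{align}
the two expressions being equal because $t \mapsto \p_t(S)$ is piecewise constant with value $\p_{kT/n}(S)$ on $[kT/n,(k+1)T/n)$. As a convex combination of probability measures, $\p^*$ is itself a probability measure. Intuitively, this collapses the $t$-indexed family into a single noise distribution by spatially averaging the noise law over a whole period of query outputs.

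For the differential privacy check, fix a measurable $S$ and any $d$ with $|d| \le \D$. Applying the original constraint at $t_1 = t,\ t_2 = t-d$ yields $\p_t(S) \le e^\e \p_{t-d}(S+d)$ pointwise in $t$. Averaging over $t \in [0,T]$ and substituting $u = t-d$,
\begin{align}
\p^*(S) \le \frac{e^\e}{T} \int_0^T \p_{t-d}(S+d)\, dt = \frac{e^\e}{T} \int_{-d}^{T-d} \p_u(S+d)\, du.
\end{align}
By $T$-periodicity, the shift $u \mapsto u+T$ sends the integral over $[-d,0]$ bijectively to the integral over $[T-d,T]$ while preserving the integrand, so $\int_{-d}^{T-d} = \int_0^T$, and the right-hand side becomes $e^\e \p^*(S+d)$. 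This is exactly the DP inequality for the constant family $\p^*_t \equiv \p^*$.

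For the cost, because $\p^*$ is independent of $t$, the worst-case cost is just $\int \loss(x)\, \p^*(dx)$. Fubini/Tonelli (trivially justified since the outer integral is a finite average of $n$ inner integrals by piecewise constancy) gives
\begin{align}
\int_{x \in \R} \loss(x)\, \p^*(dx) = \frac{1}{T} \int_0^T \left( \int_{x \in \R} \loss(x)\, \p_t(dx) \right) dt \le \sup_{t \in \R} \int_{x \in \R} \loss(x)\, \p_t(dx),
\end{align}
since an average cannot exceed a supremum.

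The main obstacle is the DP step: the translation by $d$ in the original inequality pushes the integration interval off $[0,T]$, and restoring it requires invoking periodicity via a shift-of-variable argument. Once that is in hand, the cost bound and the measurability of the construction follow immediately from the finite piecewise-constant structure built into the definition of $\KM_{T,n}$, and no properties of $\loss$ beyond integrability against the $\p_t$'s are used.
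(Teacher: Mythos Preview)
Your proof is correct and follows essentially the same route as the paper: both construct $\p^* = \frac{1}{n}\sum_{k=0}^{n-1}\p_{kT/n}$, verify the DP constraint by averaging the pointwise inequality over one period and invoking $T$-periodicity to restore the integration window, and bound the cost by noting an average is at most the supremum. The only cosmetic difference is that the paper factors the DP verification through two preliminary lemmas (shift-invariance of DP families and closure under convex combination), whereas you carry out the same computation inline.
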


\begin{proof}
  Here we briefly discuss the main proof technique. For complete proof, see  Appendix \ref{app:optimality}. The proof of Theorem \ref{thm:maingeneral} uses two properties on the family of probability distributions satisfying differential privacy constraint \eqref{eqn:diffgeneralnoise}. First, we show that for any family of  probability distributions satisfying \eqref{eqn:diffgeneralnoise}, any translation of the probability distributions will also preserve differential privacy, and the cost is the same.  Second, we show that given a collection of families of probability distributions each of which satisfies \eqref{eqn:diffgeneralnoise}, we can take a convex combination of them to construct a new family of probability distributions satisfying \eqref{eqn:diffgeneralnoise} and the new cost is not worse.  Due to these two properties, given any family of probability distributions    $\{\nm_t\}_{t \in \R}  \in \cup_{T >0} \cup_{n \ge 1} \KM_{T,n}$, one can take a convex combination of different translations of $ \{\nm_t\}_{t \in \R}$ to construct $\{\p^*_t\}_{t \in \R}$ with $\p^*_t \equiv \p^*$, and the cost is not worse. 
\end{proof}

Theorem \ref{thm:maingeneral} states that if we assume the family of noise probability distributions is piecewise constant (over intervals with length $\frac{T}{n}$) in terms of $t$, and periodic over $t$ (with period $T$), where $T,n$ can be arbitrary, then in the optimal mechanism we can assume $\p_t$ does not dependent on $t$. We conjecture that the technical condition can be done away with.

\section{Optimal Noise Probability distribution} \label{sec:result}

Due to Theorem \ref{thm:maingeneral}, to derive the optimal randomized mechanism to preserve differential privacy, we can restrict to  noise-adding mechanisms where the noise probability distribution does not depend on the query output. In this section we state our main result Theorem \ref{thm:main} on the optimal noise probability distribution.

Let $\p$ denote the probability distribution of the noise added to the query output. Then the optimization problem  in \eqref{eqn:objecinopt} and \eqref{eqn:diffgeneralnoise} is reduced to
\begin{align}
	\mathop{\text{minimize}}\limits_{ \p} & \ \int_{x \in \R} \loss(x) \p (dx) \\
	\text{subject to} & \ \p(S) \le e^{\e} \p(S + d), \forall \ \text{measurable set} \ S\subseteq \R, \ \forall |d| \le \D.  \label{eqn:dpconstraintfinal}
\end{align}

We assume that the cost function $\loss(\cdot)$ satisfies two (natural) properties.

\begin{property}\label{property1}
	$\loss(x)$ is a symmetric function, and monotonically increasing for $x \ge 0$, i.e, $\loss(x)$ satisfies
	\begin{align}
		\loss(x) &= \loss(-x), \forall x \in \R,
	\end{align}
	and
	\begin{align}
		\loss(x) &\le \loss(y), \forall 0 \le x \le y.
	\end{align}
\end{property}

In addition, we assume $\loss(x)$ satisfies a mild technical condition which essentially says that $\loss(\cdot)$ does not increase too fast (while still allowing it to be unbounded).


\begin{property}\label{property2}
	There exists a positive integer $T$ such that $\loss(T) >0$ and $\loss(x)$ satisfies
	\begin{align}
	 	\sup_{x \ge T} \frac{\loss(x+1)}{\loss(x)} < + \infty. \label{eqn:propertyL}
	 \end{align}
\end{property}

\begin{figure}[t]
\centering
\includegraphics[scale=0.6]{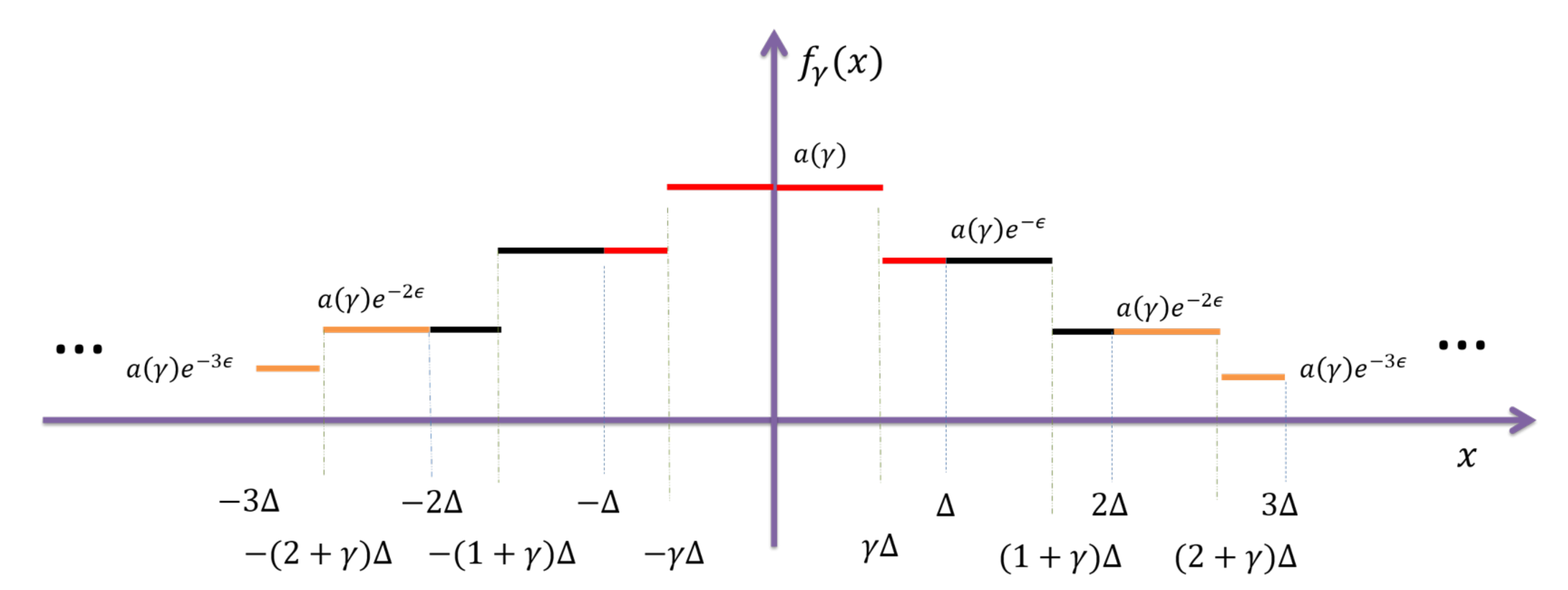}
\caption{The Staircase-Shaped Probability Density Function $f_{\gamma}(x)$}
\label{fig:fgamma}
\end{figure}

Consider a staircase-shaped probability distribution with probability density function (p.d.f.) $f_{\gamma}(\cdot)$ defined as
\begin{align}
f_{\gamma}(x)  =
\begin{cases}
  	 a(\gamma) & x \in [0, \gamma \D) \\
 e^{-\e} a(\gamma) & x \in [\gamma \D,   \D) \\
  e^{-k\e} f_{\gamma}(x - k\D)  & x \in [ k \D,   (k+1)\D) \; \text{for} \; k \in \N \\
 f_{\gamma}(-x) & x<0
  \end{cases}\label{eqn:deffgamma}
\end{align}
where
\begin{align}
	a(\gamma) \triangleq \frac{ 1 - e^{-\e}}{2 \D (\gamma + e^{-\e}(1-\gamma))}
\end{align}
is a normalizing constant to make $\int_{x\in \R} f_{\gamma}(x) dx = 1$.
It is easy to check that for any $\gamma \in [0,1]$, the probability distribution with p.d.f. $f_{\gamma}(\cdot)$ satisfies the differential privacy constraint \eqref{eqn:dpconstraintfinal}. Indeed, the probability density function $f_{\gamma}(x)$ satisfies
\begin{align}
	f_{\gamma}(x) \le e^{\e} f_{\gamma}(x + d) , \forall x \in \R, |d| \le \D,
\end{align}
which implies \eqref{eqn:dpconstraintfinal}.

Let $\sP$ denote the set of all probability distributions satisfying \eqref{eqn:dpconstraintfinal}.
Our main result on the optimal noise probability distribution is:
\begin{theorem}\label{thm:main}
If the cost function $\loss(x)$ satisfies Property \ref{property1} and Property \ref{property2}, then
\begin{align}
	\inf_{\p \in \sP}  \int_{x \in \R} \loss (x)  \p(dx)  = \inf_{\gamma \in [0,1]}  \int_{x \in \R} \loss (x)  f_{\gamma}(x) dx .
\end{align}
\end{theorem}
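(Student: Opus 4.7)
I plan to prove Theorem~\ref{thm:main} by a chain of reductions that successively narrows the feasible class from $\sP$ down to the staircase family $\{f_\gamma : \gamma \in [0,1]\}$ while never raising the infimum of $\int \loss(x)\,\p(dx)$. First, symmetrize: the reflected measure $\p^R(A) := \p(-A)$ lies in $\sP$ whenever $\p$ does (the DP constraint is symmetric under $d \mapsto -d$), and by convexity of $\sP$ the midpoint $\tfrac{1}{2}(\p + \p^R)$ is in $\sP$; Property~\ref{property1} keeps its cost equal to that of $\p$, so we may restrict to $\pdissym$. Next, reduce to monotonically decreasing densities: within $\pdissym$ I will drive $f$ to its monotone envelope on $[0,\infty)$ via iterated \emph{local} pairwise swaps. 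Whenever $0 \le x < y$ with $f(x) < f(y)$, I exchange the densities on small neighborhoods of $x$ and $y$ in a way that maintains the ratio bound $f(u) \le e^\e f(v)$ for every $|u-v|\le\D$. Each swap weakly decreases the cost because $\loss$ is increasing on $[0,\infty)$, and the iteration produces an element of $\pe$.

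\textbf{Geometric decay and step structure.} Within $\pe$, I force $f(x+\D) = e^{-\e} f(x)$ on $[0,\infty)$ by a mass-transport argument: if strict inequality $f(x+\D) > e^{-\e} f(x)$ held on a positive-measure set, moving a small mass from $x+\D$ inward to $x$ would respect DP on both sides and strictly lower the cost, so at the infimum the DP ratio must be tight across every wavelength. This yields $\pf$, whose members are determined by $g := f|_{[0,\D)}$, subject to $g$ being weakly decreasing with $g(0^+) \le e^\e g(\D^-)$. On this one-period problem, an exchange argument shows that the cost-minimizing $g$ is a two-level step function: any $g$ with more than one downward jump can be merged into a single jump of the same total drop at a location that weakly decreases $\int_0^\D \loss(x) g(x)\,dx$, since $\loss$ is increasing and the constraint is invariant under where the single drop sits in $[0,\D)$. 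The jump position $\gamma\D$ parametrizes the staircase family, and the normalization pins down $a(\gamma)$.

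\textbf{Expected main obstacle.} The hardest step is the monotone-density reduction: the classical decreasing rearrangement need not preserve the local constraint $f(u) \le e^\e f(v)$ for $|u-v|\le\D$, so one must design a sufficiently localized swap whose iteration both stays inside $\sP$ and provably converges to a monotone density. Property~\ref{property2} enters here and again in the geometric-decay step, where a finite-range truncation is implicit before the one-period problem can be treated as finite-dimensional: combined with the geometric tail decay produced above, the growth bound $\sup_{x \ge T} \loss(x+1)/\loss(x) < \infty$ ensures $\int \loss \, d\p < \infty$ and that tail contributions to the cost stay uniformly controlled through all the transports and limiting procedures, so the infimum really is attained in the limit by the two-parameter family $\{f_\gamma\}$.
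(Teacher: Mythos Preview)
Your high-level outline (symmetrize $\to$ monotone $\to$ geometric decay $\to$ two-level step on one period) matches the paper's strategy exactly, but you are missing one structural step that the paper uses to make everything else rigorous, and without it your argument has real gaps.

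The missing step is \emph{discretization}. After symmetrizing, the paper does not work with a density $f$ at all; instead, for each $i\ge 1$ it replaces $\p$ by the piecewise-constant approximant with density $a_k := \p\bigl([k\D/i,(k+1)\D/i)\bigr)\big/(\D/i)$ on $[k\D/i,(k+1)\D/i)$, checks that this approximant still lies in $\sP$, and uses Property~\ref{property2} to show the costs converge as $i\to\infty$. This device is what lets every subsequent reduction be carried out on a \emph{sequence} $\{a_0,a_1,\dots\}$ rather than on a general density. You implicitly assume a density $f$ exists and manipulate it directly; but membership in $\sP$ only rules out atoms, not singular continuous parts, and even if a density exists your ``iterated local pairwise swaps'' is a limiting procedure whose convergence inside $\sP$ you have no mechanism to establish. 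With the discretization in place, the monotone step becomes a single global sort of the sequence $\{a_k\}$, and the paper gives a short combinatorial argument (essentially: among any $i{+}1$ consecutive original $a$'s the DP bound holds against the minimum, so after sorting $b_k/b_{k+i}\le e^\e$) that the sorted sequence is still feasible---no iteration or convergence issue arises.

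Two smaller points. First, your mass-transport for geometric decay (``move mass from $x+\D$ to $x$'') can break monotonicity at $x$; the paper instead scales $a_0$ up and all later $a_k$ down multiplicatively, then repeats index-by-index, which keeps both monotonicity and DP intact automatically. Second, your one-period step reduction is morally right but compresses three separate moves in the paper: first force $a_0/a_{i-1}=e^\e$, then push every intermediate $a_k$ to one of the two endpoint values (this is where the ``merge jumps'' picture lives), and finally pass from the discrete breakpoint $k/i$ to a continuous $\gamma\in[0,1]$ by a direct cost comparison. None of these is hard once you are on a finite sequence, which again underlines that the discretization is the load-bearing idea you are missing.
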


\begin{proof}
	 Here we briefly discuss the main proof idea and technique. First, by deriving several properties on the probability distributions satisfying the $\e$-differential privacy constraint, we show that without loss of generality, one can ``discretize'' any valid probability distribution, even for those which do not have probability density functions. Second, we show that to minimize the cost, the probability density function of the discretized probability distribution should be monotonically and geometrically decaying. Lastly, we show that the optimal probability density function should be staircase-shaped. For the complete proof, see Appendix \ref{sec:proof}.
\end{proof}

Therefore, the optimal noise probability distribution to preserve $\e$-differential privacy for any real-valued query function has a staircase-shaped probability density function, which is specified by three parameters $\e$, $\D$ and $\gamma^* = \mathop{\arg \min} \limits_{\gamma \in [0,1] } \int_{x \in \R} \loss (x)  f_{\gamma}(x) dx $.

A natural and simple algorithm to generate random noise with staircase distribution is given in Algorithm \ref{algo:staircase_mech}.




\begin{algorithm}
\caption{Generation of Random Variable with Staircase Distribution}
\label{algo:staircase_mech}
\begin{algorithmic}
\State \textbf{Input: } $\e$, $\D$, and  $\gamma \in [0,1]$.
\State \textbf{Output: } $\noise$, a random variable (r.v.) with staircase distribution specified by $\e, \D$ and $\gamma$.
\\
\State Generate a r.v. $S$ with $\text{Pr}[S = 1] = \text{Pr}[S = -1] = \frac{1}{2}$.
\State Generate a geometric r.v. $G$ with $\text{Pr}[G = i] = (1-b)b^i $ for integer $i \ge 0$, where $b = e^{-\e}$.
\State Generate a r.v. $U$ uniformly distributed in $[0,1]$.
\State Generate a binary r.v. $B$ with $\text{Pr}[B = 0] = \frac{\gamma}{\gamma + (1-\gamma)b}$ and $\text{Pr}[B = 1] = \frac{(1-\gamma)b}{\gamma + (1-\gamma)b}$.
\State $\noise \gets S \left( (1-B)\left((G+\gamma U)\D\right) + B\left((G+\gamma + (1-\gamma)U)\D\right)\right)$.
\State Output $\noise$.
\end{algorithmic}
\end{algorithm}

In the formula,
\begin{align}
	\noise \gets S \left( (1-B)\left((G+\gamma U)\D\right) + B\left((G+\gamma + (1-\gamma)U)\D\right) \right),
\end{align}
\begin{itemize}
	\item $S$ determines the sign of the noise,
	\item $G$ determines which interval $[G\D, (G+1)\D)$ the noise lies in,
	\item $B$ determines which subinterval of $[G\D, (G+\gamma)\D)$ and $[(G+\gamma)\D, (G+1)\D)$ the noise lies in,
	\item $U$ helps to uniformly sample  the subinterval.
\end{itemize}

\section{Applications} \label{sec:application}
In this section, we apply our main result Theorem \ref{thm:main} to derive the parameter $\gamma^*$ of the staircase mechanism with minimum expectation of noise magnitude  and noise second moment, respectively, and then compare the performances with the Laplacian mechanism.

\subsection{Optimal Noise Probability Distribution with Minimum Expectation of Noise Amplitude}

To minimize the expectation of amplitude, we have cost function $\loss(x) = |x|$, and it is easy to see that it satisfies Property \ref{property1} and Property \ref{property2}.

To simplify notation, define $b \triangleq e^{-\e}$, and define
\begin{align}
	V(\p) \triangleq \int_{x \in \R} \loss(x)  \p(dx).
\end{align}
for a given probability distribution $\p$.

\begin{theorem}\label{thm:1}
	To minimize the expectation of the amplitude of noise, the optimal noise probability distribution has probability density function $f_{\gamma^*}(\cdot)$ with
\begin{align}
	\gamma^* = \frac{1}{1 + e^{\frac{\e}{2}}},
\end{align}
and the minimum expectation of noise amplitude   is
\begin{align}
	V(\p_{\gamma^*}) = \D \frac{e^{\frac{\e}{2}}}{e^{\e}-1}.
\end{align}

\end{theorem}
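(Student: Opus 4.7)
The plan is to invoke Theorem \ref{thm:main}, which says that the infimum of the cost over all $\p \in \sP$ equals the infimum of $\int |x|\, f_\gamma(x)\,dx$ over $\gamma \in [0,1]$ (note $\loss(x)=|x|$ clearly satisfies Properties \ref{property1} and \ref{property2}, with, e.g., $T=1$ and ratio $\sup_{x\ge 1}(x+1)/x = 2$). So the whole theorem reduces to a one-variable calculus problem: compute $V(\p_\gamma)$ as an explicit function of $\gamma$, minimize over $[0,1]$, and plug back in.

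First I would use symmetry of $f_\gamma$ to write $V(\p_\gamma) = 2 \int_0^\infty x\, f_\gamma(x)\,dx$. Then, writing $b \triangleq e^{-\e}$ and splitting the half-line into the intervals $[k\Delta,(k+1)\Delta)$, the recursion $f_\gamma(x) = e^{-k\e} f_\gamma(x-k\Delta)$ gives
\begin{align}
\int_{k\Delta}^{(k+1)\Delta} x\, f_\gamma(x)\,dx \;=\; b^k\!\left(I_1 + k\Delta\, I_0\right),
\end{align}
where $I_0 \triangleq \int_0^\Delta f_\gamma(y)\,dy$ and $I_1 \triangleq \int_0^\Delta y f_\gamma(y)\,dy$. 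Using the piecewise form of $f_\gamma$ on $[0,\Delta)$, direct integration gives $I_0 = \tfrac{1-b}{2}$ (consistent with total mass $1$) and $I_1 = \tfrac{a(\gamma)\Delta^2}{2}\bigl(\gamma^2(1-b) + b\bigr)$. Summing the geometric series $\sum_{k\ge 0} b^k = 1/(1-b)$ and $\sum_{k\ge 0} k b^k = b/(1-b)^2$ then yields, after substituting $a(\gamma) = (1-b)/\bigl(2\Delta(\gamma(1-b)+b)\bigr)$,
\begin{align}
V(\p_\gamma) \;=\; \frac{\Delta\bigl(\gamma^2(1-b)+b\bigr)}{2\bigl(\gamma(1-b)+b\bigr)} \;+\; \frac{\Delta b}{1-b}.
\end{align}

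Next I would differentiate. Only the first term depends on $\gamma$, and a short quotient-rule computation shows $\tfrac{d}{d\gamma}V(\p_\gamma) = 0$ iff $\gamma^2(1-b) + 2\gamma b - b = 0$. The quadratic formula gives the unique root in $[0,1]$ as $\gamma^* = \tfrac{-b+\sqrt{b}}{1-b} = \tfrac{\sqrt{b}}{1+\sqrt{b}} = \tfrac{1}{1+e^{\e/2}}$. Second-derivative positivity (or inspection of the sign change of the quadratic $\gamma^2(1-b)+2\gamma b - b$ on $[0,1]$, negative at $\gamma=0$ and positive at $\gamma=1$) confirms this is a minimum.

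Finally I would plug $\gamma^*$ back into the closed form. A useful shortcut is that $\gamma^*(1-b)+b = \sqrt{b}$, so the denominator collapses neatly, and after simplification one obtains $V(\p_{\gamma^*}) = \tfrac{\Delta\sqrt{b}}{1+\sqrt{b}} + \tfrac{\Delta b}{1-b} = \tfrac{\Delta\sqrt{b}}{1-b} = \Delta\, \tfrac{e^{\e/2}}{e^{\e}-1}$. The main obstacle is purely bookkeeping: the geometric series arising from the self-similar staircase structure has to be summed carefully, and the algebraic identity $\gamma^*(1-b)+b=\sqrt{b}$ is what makes the final expression simplify. Nothing in the argument is deep beyond Theorem \ref{thm:main}.
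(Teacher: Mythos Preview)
Your proposal is correct and follows essentially the same approach as the paper: both reduce to the one-variable problem via Theorem~\ref{thm:main}, compute $V(\p_\gamma)$ by summing the geometric series arising from the staircase structure to obtain $V(\p_\gamma)=\D\bigl(\tfrac{b}{1-b}+\tfrac{1}{2}\tfrac{b+(1-b)\gamma^2}{b+(1-b)\gamma}\bigr)$, solve the same quadratic $(1-b)\gamma^2+2b\gamma-b=0$ to get $\gamma^*=\tfrac{\sqrt{b}-b}{1-b}=\tfrac{1}{1+e^{\e/2}}$, and substitute back. Your use of the identity $\gamma^*(1-b)+b=\sqrt{b}$ to collapse the final expression is a nice shortcut the paper does not make explicit.
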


\begin{proof}
	See Appendix \ref{app:1}.
\end{proof}

\begin{figure}[t]
\centering
\includegraphics[scale=0.2]{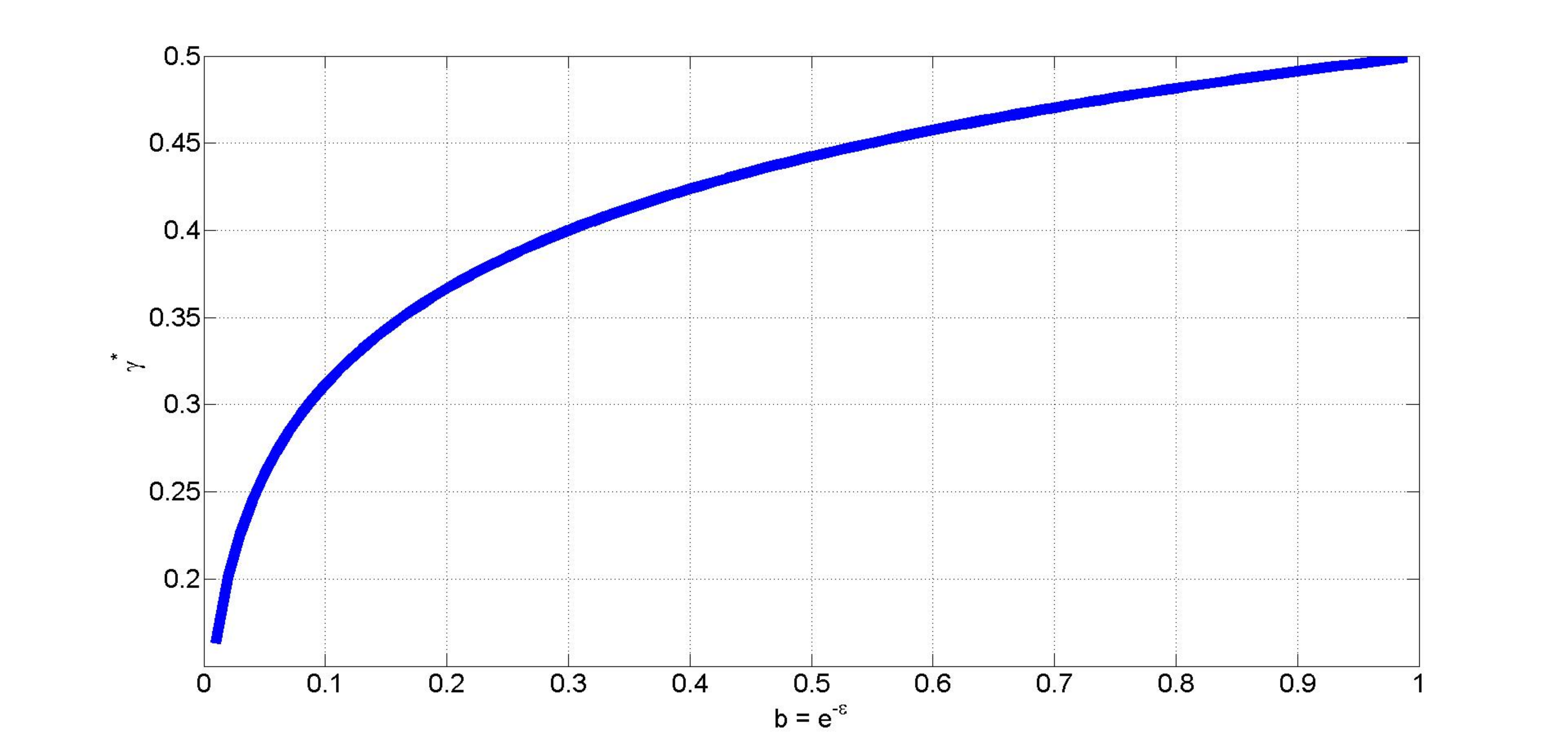}
\caption{Optimal $\gamma^*$ for cost function $L(x) = x^2$}
\label{fig:optimalr}
\end{figure}

Next, we compare the performances of the optimal noise probability distribution and the Laplacian mechanism.  The Laplace distribution has probability density function
\begin{align}
	f(x) = \frac{1}{2\lambda} e^{-\frac{|x|}{\lambda}},
\end{align}
where $\lambda = \frac{\D}{\e}$. So the expectation of the amplitude of noise with Laplace distribution is
\begin{align}
	V_{Lap} \triangleq \int_{-\infty}^{+\infty} |x|f(x) dx = \frac{\D}{\e}.
\end{align}

By comparing $V(\p_{\gamma^*})$ and $V_{Lap}$, it is easy to see  that in the high privacy regime ($\e$ is small) Laplacian mechanism is asymptotically optimal, and the additive gap from optimal value goes to 0 as $\e \to 0$; in the low privacy regime ($\e$ is large),  $V_{Lap} = \frac{\D}{\e})$, while  $V(\p_{\gamma^*}) =  \Theta(\D e^{-\frac{\e}{2}})$. Indeed,
\begin{corollary}\label{thm:2}
	Consider the cost function $\loss(x) = |x|$. In the high privacy regime ($\e$ is small),
	\begin{align}
		V_{Lap} - V(\p_{\gamma^*}) &= \D\left( \frac{\e}{24} - \frac{7\e^3}{5760} + O(\e^5) \right),
	\end{align}
	as $\e \to 0$.

	And in the low privacy regime ($\e$ is large),
	\begin{align}
		V_{Lap} &= \frac{\D}{\e}, \\
		V(\p_{\gamma^*}) &=  \Theta(\D e^{-\frac{\e}{2}}),
	\end{align}
	as $\e \to +\infty$.
\end{corollary}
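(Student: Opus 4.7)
The proof plan reduces to elementary asymptotic analysis of the closed-form expression from Theorem~\ref{thm:1}. My starting point is
\begin{align*}
V(\p_{\gamma^*}) \;=\; \D \frac{e^{\e/2}}{e^{\e}-1} \;=\; \frac{\D}{e^{\e/2}-e^{-\e/2}} \;=\; \frac{\D}{2\sinh(\e/2)},
\end{align*}
and $V_{\mathrm{Lap}} = \D/\e$. So the quantity to analyze is simply
\begin{align*}
\frac{V_{\mathrm{Lap}} - V(\p_{\gamma^*})}{\D} \;=\; \frac{1}{\e} - \frac{1}{2\sinh(\e/2)}.
\end{align*}

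For the high-privacy asymptotic ($\e \to 0$), I would Taylor-expand $2\sinh(\e/2) = \e + \e^3/24 + \e^5/1920 + O(\e^7)$ around $\e = 0$, factor out $\e$, and invert the resulting series $1/(1+u)$ with $u = \e^2/24 + \e^4/1920 + O(\e^6)$. Collecting terms through order $\e^3$ gives $\frac{1}{2\sinh(\e/2)} = \frac{1}{\e} - \frac{\e}{24} + \frac{7\e^3}{5760} + O(\e^5)$, where the coefficient $7/5760$ arises as $1/576 - 1/1920$ from the $u^2$ contribution minus the next term of $u$. Subtracting from $1/\e$ and multiplying by $\D$ yields the claimed expansion. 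This is the main technical step, though it is entirely routine; the only thing to be careful about is tracking the $\e^4$ contributions properly.

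For the low-privacy asymptotic ($\e \to +\infty$), the formula $V(\p_{\gamma^*}) = \D / (e^{\e/2} - e^{-\e/2})$ immediately yields
\begin{align*}
V(\p_{\gamma^*}) \;=\; \D\, e^{-\e/2}\, \frac{1}{1 - e^{-\e}} \;=\; \D e^{-\e/2}\bigl(1 + O(e^{-\e})\bigr),
\end{align*}
so $V(\p_{\gamma^*}) = \Theta(\D e^{-\e/2})$ is immediate. The formula $V_{\mathrm{Lap}} = \D/\e$ is just the mean absolute value of a $\mathrm{Lap}(\D/\e)$ random variable, which was already noted in the text preceding the corollary.

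I do not anticipate any real obstacle: the hardest part is keeping the series expansion straight through the $\e^3$ term, which is an exercise in careful bookkeeping rather than a conceptual difficulty. The entire proof can be compressed into a few lines invoking the identity $V(\p_{\gamma^*}) = \D/(2\sinh(\e/2))$, the Taylor series of $\csc h$ at zero (or, equivalently, the fact that $\pi/\sin(\pi z) - 1/z$ has a known series), and the trivial $e^{\e/2} \to \infty$ limit.
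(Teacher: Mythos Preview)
Your proposal is correct. The paper does not supply a separate proof of this corollary; it treats the statement as an immediate consequence of the closed form $V(\p_{\gamma^*}) = \D\, e^{\e/2}/(e^{\e}-1)$ from Theorem~\ref{thm:1}, and your Taylor expansion via the identity $V(\p_{\gamma^*}) = \D/(2\sinh(\e/2))$ is exactly the routine verification the paper leaves to the reader.
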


\subsection{Optimal Noise Probability Distribution with Minimum Power}

Given the probability distribution $\p$ of the noise, the power of noise is defined as $\int_{x \in \R} x^2 \p (dx)$. Accordingly, the cost function   $\loss(x) = x^2 $,  and it is easy to see it satisfies Property \ref{property1} and Property \ref{property2}.

Recall $b \triangleq e^{-\e}$.

\begin{theorem}\label{thm:3}
	To minimize the power of noise (accordingly, $\loss(x) = x^2$), the optimal noise probability distribution has probability density function $f_{\gamma^*}(\cdot)$ with
\begin{align}
	\gamma^* = -\frac{b}{1-b} + \frac{(b-2b^2+2b^4-b^5)^{1/3}}{2^{1/3} (1-b)^2},
\end{align}
and the minimum power of noise is
\begin{align}
	V(\p_{\gamma^*}) = \D^2\frac{ 2^{-2/3} b^{2/3} (1+b)^{2/3} + b }{ (1-b)^2 }.
\end{align}

\end{theorem}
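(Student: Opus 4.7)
The plan is to invoke Theorem \ref{thm:main}. Since $\loss(x) = x^2$ trivially satisfies Properties \ref{property1} and \ref{property2}, that theorem reduces the problem to the one-dimensional optimization $\min_{\gamma \in [0,1]} V(\p_\gamma)$ with $V(\p_\gamma) = \int_\R x^2 f_\gamma(x)\,dx$. Thus I only need to compute $V(\p_\gamma)$ in closed form and minimize it.

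To compute $V(\p_\gamma)$, I would split the positive axis into the pieces $[k\D,(k+\gamma)\D)$ and $[(k+\gamma)\D,(k+1)\D)$ on which $f_\gamma$ is constant (with values $a(\gamma) b^k$ and $a(\gamma) b^{k+1}$, where $b = e^{-\e}$), integrate $x^2$ on each piece, and use symmetry. The second moment then becomes a linear combination of the three standard geometric sums $\sum_{k\ge 0} k^n b^k$ for $n \in \{0,1,2\}$. Substituting $a(\gamma) = (1-b)/(2\D[(1-b)\gamma + b])$ leaves $V(\p_\gamma)/\D^2$ as an explicit rational function of $\gamma$. Differentiating with respect to $\gamma$ and clearing denominators produces the cubic
\begin{align*}
2(1-b)^2\gamma^3 + 6b(1-b)\gamma^2 + 6b^2\gamma - b(1+2b) \;=\; 0.
\end{align*}

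The key trick is the substitution $y := (1-b)\gamma + b$, which is exactly the denominator appearing in $a(\gamma)$. Using the identity $(y-b)^3 + 3b(y-b)^2 + 3b^2(y-b) = y^3 - b^3$ together with the simplification $2b^3 + b(1-b)(1+2b) = b(1+b)$, the cubic collapses to $2y^3 = b(1+b)$, yielding the unique positive root $y^* = (b(1+b)/2)^{1/3}$. Back-substituting $\gamma^* = (y^* - b)/(1-b)$ and using the polynomial identity $(b+b^2)(1-b)^3 = b - 2b^2 + 2b^4 - b^5$ produces the stated closed form for $\gamma^*$. A short check (using $b \in (0,1)$) confirms $b < y^* < 1$, hence $\gamma^* \in (0,1)$, and the second-derivative/boundary check confirms this critical point is the minimizer.

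For the minimum value, rather than substituting the closed-form $\gamma^*$ directly into the rational formula (which is very painful), I would use the stationary equation to eliminate $\gamma^3$ from $V(\p_\gamma)$ and then rewrite everything in terms of $y$. After cancellations one obtains the compact expression
\begin{align*}
V(\p_\gamma)\big|_{\text{stationary}} \;=\; \frac{\D^2 b}{(1-b)^2} \;+\; \frac{\D^2 b(1+b)}{2(1-b)^2\, y},
\end{align*}
into which $y^* = (b(1+b)/2)^{1/3}$ is substituted to yield the stated $V(\p_{\gamma^*}) = \D^2[\,b + 2^{-2/3}b^{2/3}(1+b)^{2/3}\,]/(1-b)^2$. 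The main obstacle is entirely algebraic bookkeeping: the crucial insight that makes the argument clean is the substitution $y = (1-b)\gamma + b$, which simultaneously depresses the cubic \emph{and} diagonalizes $V$ at the critical point. Without this substitution, the brute-force route is forbiddingly messy.
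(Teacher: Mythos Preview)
Your proposal is correct and follows essentially the same route as the paper: compute $V(\p_\gamma)$ via the geometric series $\sum k^n b^k$, differentiate to obtain the cubic $\tfrac{2}{3}(1-b)^2\gamma^3 + 2b(1-b)\gamma^2 + 2b^2\gamma - \tfrac{2b^2+b}{3}=0$ (your equation scaled by $3$), solve it, and substitute back. Your substitution $y=(1-b)\gamma+b$, which collapses the cubic to $2y^3=b(1+b)$ and lets you write $V$ at the critical point as $\D^2(b+y^2)/(1-b)^2$, is a cleaner execution than the paper's, which simply states the root of the cubic and the final value without this intermediate simplification.
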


\begin{proof}
	See Appendix \ref{app:3}.
\end{proof}

Next, we compare the performances of the optimal noise probability distribution and the Laplacian mechanism.  The power of noise with Laplace distribution with $\lambda = \frac{\D}{\e}$ is
\begin{align}
	V_{Lap} \triangleq \int_{-\infty}^{+\infty} x^2 \frac{1}{2\lambda} e^{-\frac{|x|}{\lambda}} dx = 2 \frac{\D^2}{\e^2}.
\end{align}

By comparing $V(\p_{\gamma^*})$ and $V_{Lap}$, it is easy to see  that in the high privacy regime ($\e$ is small) Laplacian mechanism is asymptotically optimal, and the additive gap from optimal value is upper bounded by a constant as $\e \to 0$; in the low privacy regime ($\e$ is large),  $V_{Lap} = \Theta(\frac{2\D^2}{\e^2})$, while  $V(\p_{\gamma^*}) =  \Theta(\D^2 e^{-\frac{2\e}{3}})$. Indeed,
\begin{corollary}\label{thm:4}
	Consider the cost function $\loss(x) = x^2$. In the high privacy regime ($\e$ is small),
	\begin{align}
		   V_{Lap}- V(\p_{\gamma^*}) &= \D^2 \left( \frac{1}{12} - \frac{\e^2}{720} + O(\e^4) \right),
	\end{align}
	as $\e \to 0$.

	And in the low privacy regime ($\e$ is large),
	\begin{align}
		V_{Lap} &= \frac{2\D^2}{\e^2}, \\
		V(\p_{\gamma^*}) &=  \Theta(\D^2 e^{-\frac{2\e}{3}}),
	\end{align}
	as $\e \to +\infty$.
\end{corollary}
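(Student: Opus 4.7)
Both claims follow directly from the closed-form expressions in Theorem \ref{thm:3}, namely
\[
V(\p_{\gamma^*}) = \D^2 \cdot \frac{ 2^{-2/3} b^{2/3}(1+b)^{2/3} + b }{(1-b)^2}, \qquad b = e^{-\e},
\]
together with the elementary $V_{Lap} = 2\D^2/\e^2$. So the plan is to substitute $b = e^{-\e}$ and perform two distinct asymptotic analyses: $b \to 0$ for the low-privacy regime and $b \to 1$ for the high-privacy regime. No new privacy-theoretic reasoning is required; this is pure calculus with the formula.

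\textbf{Low privacy regime.} Write the numerator as $2^{-2/3} b^{2/3}(1+b)^{2/3} + b$. As $\e \to +\infty$ we have $b \to 0$, so the denominator satisfies $(1-b)^2 = 1 + O(b)$, while $(1+b)^{2/3} = 1 + O(b)$. The two terms in the numerator are $2^{-2/3} b^{2/3}(1 + O(b))$ and $b$; since $b = e^{-\e} = o(b^{2/3})$ as $\e\to\infty$, the first term dominates. I would factor out $b^{2/3}$ to write
\[
V(\p_{\gamma^*}) = \D^2 b^{2/3} \left( 2^{-2/3}(1+b)^{2/3} + b^{1/3} \right) (1-b)^{-2},
\]
and note the bracketed quantity lies between two positive constants (concretely, $2^{-2/3}$ and $2^{-2/3} \cdot 2^{2/3} + 1 = 2$) for all $b \in (0, 1/2]$, which gives $V(\p_{\gamma^*}) = \Theta(\D^2 e^{-2\e/3})$. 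The statement $V_{Lap} = 2\D^2/\e^2$ is immediate.

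\textbf{High privacy regime.} Here the task is to show that the $\Theta(1/\e^2)$ singularities in $V_{Lap}$ and $V(\p_{\gamma^*})$ exactly cancel, producing the constant $\frac{1}{12}$ plus an $O(\e^2)$ correction. The cleanest substitution is $u := 1 - b = 1 - e^{-\e}$, which gives $u = \e - \e^2/2 + \e^3/6 - \e^4/24 + O(\e^5)$ and $(1-b)^2 = u^2$. The numerator simplifies to
\[
2^{-2/3} \bigl(b(1+b)\bigr)^{2/3} + b = \bigl(1 - \tfrac{3u}{2} + \tfrac{u^2}{2}\bigr)^{2/3} + (1-u),
\]
using $b(1+b) = (1-u)(2-u) = 2(1 - 3u/2 + u^2/2)$. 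A binomial expansion of $(1 + v)^{2/3}$ with $v = -3u/2 + u^2/2$ yields $(1 - 3u/2 + u^2/2)^{2/3} = 1 - u + u^2/12 + 0 \cdot u^3 + c_4 u^4 + O(u^5)$ for an explicit $c_4$ (one checks that the $u^1$ and $u^3$ coefficients collapse because of identities among the binomial coefficients $\binom{2/3}{k}$). Adding $(1-u)$ yields $N(u) = 2 - 2u + u^2/12 + O(u^4)$, so
\[
V(\p_{\gamma^*}) = \D^2 \cdot \frac{2 - 2u + u^2/12 + O(u^4)}{u^2} = \frac{2\D^2}{u^2} - \frac{2\D^2}{u} + \frac{\D^2}{12} + O(\D^2 u^2).
\]
Finally I would re-expand $2/u^2 - 2/u$ in powers of $\e$ using $u = \e - \e^2/2 + \e^3/6 - \e^4/24 + O(\e^5)$; the leading $2/\e^2$ matches $V_{Lap}/\D^2$ exactly, and the sub-leading terms of $2/u^2 - 2/u$ should contribute an $O(\e^2)$ correction that combines with $1/12$ and the $O(u^2) = O(\e^2)$ remainder to yield $\frac{1}{12} - \frac{\e^2}{720} + O(\e^4)$.

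\textbf{Main obstacle.} The routine but error-prone step is the high-privacy Taylor computation: I need to carry the expansion of $(1 - 3u/2 + u^2/2)^{2/3}$ to order $u^4$, and the expansion of $u(\e)$ to order $\e^5$, in order to isolate the $-\e^2/720$ coefficient and bound the $O(\e^4)$ remainder uniformly. The cancellation of the $u^3$ term in the binomial expansion (which is what makes the $O(\e)$ term vanish after subtraction) must be verified carefully, since a miscalculation there would produce a spurious $O(\e)$ term and invalidate the stated asymptotics. Everything else is bookkeeping.
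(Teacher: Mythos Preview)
Your approach is exactly the natural one, and it is essentially what the paper (implicitly) relies on: the corollary is stated without proof, with only the remark that it follows ``by comparing $V(\p_{\gamma^*})$ and $V_{Lap}$'' using the closed form from Theorem~\ref{thm:3}. So there is nothing to compare against---your plan \emph{is} the intended proof.

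One correction to your high-privacy bookkeeping: the sub-leading part of $2/u^2 - 2/u$ is \emph{not} $O(\e^2)$. Using $u = \e - \e^2/2 + \e^3/6 - \cdots$ one gets
\[
\frac{2}{u^2} = \frac{2}{\e^2} + \frac{2}{\e} + \frac{5}{6} + \frac{\e}{6} + O(\e^2), \qquad \frac{2}{u} = \frac{2}{\e} + 1 + \frac{\e}{6} + O(\e^2),
\]
so $2/u^2 - 2/u = 2/\e^2 - \tfrac{1}{6} + O(\e^2)$: there is a surviving constant $-1/6$. Consequently
\[
\frac{V_{Lap} - V(\p_{\gamma^*})}{\D^2} = \frac{2}{\e^2} - \left(\frac{2}{\e^2} - \frac{1}{6} + \frac{1}{12} + O(\e^2)\right) = \frac{1}{6} - \frac{1}{12} + O(\e^2) = \frac{1}{12} + O(\e^2).
\]
The leading $1/12$ therefore arises from the combination $1/6 - 1/12$, not directly from the $u^2/12$ term in $N(u)$ as your write-up suggests. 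This is only a narrative slip---your method is correct and would produce the right answer once the expansion is carried out carefully (you already flagged this as the error-prone step). To extract the $-\e^2/720$ coefficient you will indeed need $N(u)$ through $u^4$ and $u(\e)$ through $\e^5$; the vanishing of the $u^3$ coefficient in $N(u)$ (which you correctly note) is what kills any $O(\e)$ term in the difference.
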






\section{Property of $\gamma^*$} \label{sec:gammaproperty}

In this section, we derive some asymptotic properties of the optimal $\gamma^*$ for moment cost functions, and give a heuristic choice of $\gamma$ which only depends on $\e$.

\subsection{Asymptotic Properties of $\gamma^*$}

In Section \ref{sec:application}, we have seen that for the cost functions $\loss(x) = |x| $ and $\loss(x) = x^2$, the optimal $\gamma^*$ lies in the interval $[0,\frac{1}{2}]$ for all $\e$ and is a monotonically decreasing function of $\e$; and furthermore, $\gamma^* \to \frac{1}{2}$ as $\e$ goes to $0$, and $\gamma^* \to 0$ as $\e$ goes to $+\infty$.

We generalize these asymptotic properties of $\gamma$ as a function of $\e$ to all moment cost functions. More precisely, given $m \in \N$ and $m \ge 1$,

\begin{theorem} \label{thm:gammaprop}
	Consider the cost function $\loss(x) = |x|^m$. Let $\gamma^*$ be the optimal $\gamma$ in the staircase mechanism for $\loss(x)$, i.e.,
	\begin{align}
		\gamma^* = \mathop{\arg \min} \limits_{\gamma \in [0,1] } \int_{x \in \R} |x|^m  f_{\gamma}(x) dx.
	\end{align}
	We have
	\begin{align}
		\gamma^* &\to \frac{1}{2}, \; \text{as} \; \e \to 0, \\
		\gamma^* &\to 0,  \; \text{as}\;  \e \to +\infty.
	\end{align}
\end{theorem}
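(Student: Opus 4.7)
My plan is to derive a closed form for $V(\gamma,\e):=\int_\R|x|^m f_\gamma(x)\,dx$ from the periodic staircase structure of $f_\gamma$, and then handle the two limits separately: direct estimates at $b:=e^{-\e}\to 0$, and a Laurent expansion at $\alpha:=1-b\to 0$. Integrating $|x|^m f_\gamma(x)$ over each period $[k\D,(k+1)\D)$ and using symmetry, a telescoping in the sum collapses it to
\begin{align*}
V(\gamma,\e)=\frac{(1-b)^2\D^m}{(m+1)\,c(\gamma)}\,T_{m+1}(\gamma,b),\qquad c(\gamma):=\gamma+b(1-\gamma),
\end{align*}
where $T_s(\gamma,b):=\sum_{k\ge 0}b^k(k+\gamma)^s$. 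A binomial expansion of $(k+\gamma)^{m+1}$ combined with $\sum_{k\ge 0}k^j b^k=A_j(b)/(1-b)^{j+1}$ yields the finite representation $T_{m+1}(\gamma,b)=\sum_{j=0}^{m+1}\binom{m+1}{j}\gamma^{m+1-j}A_j(b)/(1-b)^{j+1}$, where $A_j$ is an Eulerian-type polynomial with $A_j(1)=j!$ and $A_j'(1)=(j+1)!/2$ for $j\ge 1$.

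For $\e\to+\infty$ ($b\to 0$), the $k=0$ term of $T_{m+1}$ yields the pointwise lower bound $V(\gamma,\e)\ge\D^m(1-b)^2\gamma^{m+1}/[(m+1)(\gamma+b)]$, whence $\inf_{\gamma\in[\delta,1]}V(\gamma,\e)\ge C(\delta)>0$ for any fixed $\delta>0$ and sufficiently small $b$. On the other hand, the choice $\gamma_\e:=b^{1/(m+1)}\to 0$ gives $V(\gamma_\e,\e)=\Theta(\D^m b^{m/(m+1)})\to 0$. Hence any minimizer must lie in $[0,\delta)$ for $\e$ large enough, proving $\gamma^*\to 0$.

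For $\e\to 0$ ($\alpha\to 0$), I will Laurent-expand $V$ in $\alpha$. Combining the finite sum for $T_{m+1}$, the geometric expansion $1/c(\gamma)=\sum_{k\ge 0}\alpha^k(1-\gamma)^k$, and the Taylor expansion $A_j(b)=j!-(j+1)!\,\alpha/2+O(\alpha^2)$, the coefficients of $\alpha^{-m}$ and $\alpha^{-m+1}$ turn out to be purely constant in $\gamma$, while the coefficient of $\alpha^{-m+2}$ equals $\tfrac{\D^m m!}{2}\gamma(\gamma-1)$ plus a $\gamma$-independent constant. Consequently
\begin{align*}
V(\gamma,\e)-V(\tfrac12,\e)=\frac{\D^m m!}{2}\,\alpha^{2-m}\bigl(\gamma-\tfrac12\bigr)^2+O(\alpha^{3-m}),
\end{align*}
uniformly in $\gamma\in[0,1]$. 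Since the leading $\gamma$-dependent term is a strict quadratic uniquely minimized at $\gamma=1/2$, every minimizer $\gamma^*$ must converge to $1/2$ as $\e\to 0$.

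The main obstacle will be the algebraic cancellation of all $\gamma$-dependence at orders $\alpha^{-m}$ and $\alpha^{-m+1}$: at each order several contributions from the three adjacent values $j\in\{m-1,m,m+1\}$ must combine to eliminate every linear-in-$\gamma$ piece, and there is no manifest symmetry making this obvious. Verifying the cancellation reduces to a finite polynomial identity; once in hand, the uniform-in-$\gamma$ remainder bound is routine because every higher Laurent coefficient is a polynomial in $\gamma$ of bounded degree with coefficients analytic in $\alpha$ at $0$.
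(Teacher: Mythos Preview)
Your proposal is correct and takes a genuinely different route from the paper. The paper computes $V(\p_\gamma)$ in the form $\frac{\D^{n-1}(1-b)}{n}\sum_{i=1}^n\binom{n}{i}c_{n-i}\frac{\gamma^i(1-b)+b}{\gamma(1-b)+b}$ (with $n=m+1$, $c_i=\sum_k k^ib^k$), differentiates, and studies the resulting polynomial equation $h'(\gamma^*)=0$: using a recursion for $c_i$ it shows $c_i\to 0$ for $i\ge 1$ as $b\to 0$ (whence the constant term of the equation vanishes and $\gamma^*\to 0$) and $c_i=\Omega(i!/(1-b)^{i+1})$ as $b\to 1$ (whence the dominant surviving terms yield $2\gamma^*-1\to 0$). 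You instead analyze $V$ directly: for $b\to 0$ you compare values, exhibiting a test point $\gamma_\e=b^{1/(m+1)}$ with $V\to 0$ while $V\ge C(\delta)>0$ on $[\delta,1]$; for $b\to 1$ you Laurent-expand $\alpha^m V$ (which is indeed jointly polynomial/analytic in $(\gamma,\alpha)$, so the uniform remainder is legitimate) and show that $\gamma$-dependence first appears at order $\alpha^{2-m}$ as $\tfrac{\D^m m!}{2}\gamma(\gamma-1)$. Your telescoping identity $V=\frac{(1-b)^2\D^m}{(m+1)c(\gamma)}T_{m+1}$ is a cleaner starting point than the paper's binomial sum, and the cancellation you flag as the ``main obstacle'' does go through (the $(m+1)!\gamma$ contributions from $j=m$ and from $1/c(\gamma)$ cancel exactly at order $\alpha^{-m+1}$, and similarly at the next order the linear pieces combine to $-\tfrac{(m+1)!}{2}\gamma$). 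The advantage of your approach is that it never assumes uniqueness of the critical point in $(0,1)$: you locate the minimizer by value comparisons and a uniformly controlled expansion, whereas the paper implicitly passes through a single root of $h'$. The paper's route, on the other hand, would more readily yield higher-order asymptotics of $\gamma^*$ if one wanted them.
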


\begin{proof}
	See Appendix \ref{app:gammaprop}.
\end{proof}

\begin{corollary}
	For all the cost functions $\loss(\cdot)$ which can be written as 
\begin{align}
	\loss(x) = \sum_{i=1}^n \alpha_i |x|^{d_i},
\end{align}
where $n \ge 1$, $\alpha_i \in \R, d_i \in \N$ and $\alpha_i,d_i \ge 0$ for all $i$,  the optimal $\gamma^*$ in the staircase mechanism has the following asymptotic properties:
	\begin{align}
		\gamma^* &\to \frac{1}{2}, \; \text{as} \; \e \to 0, \\
		\gamma^* &\to 0,  \; \text{as}\;  \e \to +\infty.
	\end{align}
\end{corollary}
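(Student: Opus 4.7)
The plan is to reduce the corollary to Theorem~\ref{thm:gammaprop} by exploiting linearity of expectation. For a staircase distribution $f_\gamma$, define $V_m(\gamma,\e) := \int_\R |x|^m f_\gamma(x)\,dx$. Then for $\loss(x) = \sum_{i=1}^n \alpha_i |x|^{d_i}$ one has
\begin{align}
V(\gamma,\e) := \int_\R \loss(x) f_\gamma(x)\,dx = \sum_{i=1}^n \alpha_i V_{d_i}(\gamma,\e).
\end{align}
Any term with $\alpha_i=0$ vanishes and any term with $d_i=0$ equals $\alpha_i$, a constant in $\gamma$ that does not affect $\arg\min_{\gamma\in[0,1]} V(\gamma,\e)$. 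So without loss of generality $\alpha_i>0$ and $d_i\ge 1$, placing every summand under the hypothesis of Theorem~\ref{thm:gammaprop}.

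What remains is to deduce the asymptotic behaviour of $\gamma^*(\e) := \arg\min_\gamma V(\gamma,\e)$ from that of the argmins $\gamma^*_m(\e) := \arg\min_\gamma V_m(\gamma,\e)$ for the individual moment costs. Convergence of each $\gamma^*_m(\e)$ is not automatically inherited by the argmin of a positive linear combination, so one cannot invoke Theorem~\ref{thm:gammaprop} as a black box. The cleanest fix is to revisit the proof of Theorem~\ref{thm:gammaprop} in Appendix~\ref{app:gammaprop} and extract the underlying derivative-sign statement that drives it: for every $\eta\in(0,\tfrac12)$ there exist thresholds $\e_0(\eta)$ and $\e_1(\eta)$ such that, uniformly over the finite set of moments $\{d_1,\dots,d_n\}$, $\partial V_{d_i}/\partial \gamma$ is negative on $[0,\tfrac12-\eta]$ and positive on $[\tfrac12+\eta,1]$ whenever $\e<\e_0$, and is strictly positive on $[\eta,1]$ whenever $\e>\e_1$. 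Sign statements of this form are preserved by positive linear combinations, so the identity $\partial V/\partial\gamma = \sum_i \alpha_i \,\partial V_{d_i}/\partial \gamma$ produces the same sign pattern for $V$; the limits $\gamma^*(\e)\to \tfrac12$ and $\gamma^*(\e)\to 0$ then follow immediately because $\gamma^*(\e)$ cannot lie in any region on which the derivative is strictly signed.

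The main obstacle is obtaining the uniform-in-$m$ derivative sign statement above. If the proof of Theorem~\ref{thm:gammaprop} only furnishes it for each $m$ separately, the uniform version over a \emph{finite} set $\{d_1,\dots,d_n\}$ is free: just take the maximum of the finitely many thresholds $\e_0(\eta,d_i)$ and $\e_1(\eta,d_i)$. A more robust fallback, should the extracted sign statement turn out to be awkward, is a $\Gamma$-convergence argument: after dividing $V_m(\gamma,\e)$ by an appropriate $\e$-dependent normalizing factor, one shows uniform convergence on compacts in $(0,1)$ to a limit function strictly minimized at $\tfrac12$ (as $\e\to 0$) or at $0$ (as $\e\to+\infty$); positive linear combinations of functions with a common unique strict minimizer share that minimizer, and argmin convergence for the sum follows from a standard argmin-stability theorem.
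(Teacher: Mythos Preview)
Your proposal is correct. The paper gives no proof of this Corollary at all; it is simply stated immediately after Theorem~\ref{thm:gammaprop}. You are right that the black-box statement of Theorem~\ref{thm:gammaprop} does not by itself imply the result, since argmins of summands need not coincide with the argmin of the sum, and your plan to extract a derivative-sign statement from the computations in Appendix~\ref{app:gammaprop} is exactly the right fix.

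For the record, the sign statements you need are indeed available there. Writing $b=e^{-\e}$ and $n=m+1$, the numerator of $h'(\gamma)$ in equation~\eqref{eqn:gammanum} is a polynomial in $\gamma$ with $b$-dependent coefficients: as $b\to 0$ it converges uniformly on $[0,1]$ to $(n-1)\gamma^n$, hence is strictly positive on $[\eta,1]$; as $b\to 1$, after dividing by the positive factor $\binom{n}{2}c_{n-2}\,b(1-b)$, it converges uniformly on $[0,1]$ to $2\gamma-1$, hence is negative on $[0,\tfrac12-\eta]$ and positive on $[\tfrac12+\eta,1]$. Taking the worst threshold over the finitely many exponents $d_i$ and summing with weights $\alpha_i>0$ gives the same sign pattern for $\partial V/\partial\gamma$, forcing $\gamma^*\to 0$ and $\gamma^*\to\tfrac12$ respectively. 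This is not a different route from the paper so much as a rigorous completion of what the word ``Corollary'' leaves implicit; the $\Gamma$-convergence fallback you mention is unnecessary.
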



\subsection{A Heuristic Choice of $\gamma$}

We have shown that in general the optimal $\gamma^*$ in the staircase mechanism depends on both $\e$ and the cost function $\loss(\cdot)$. Here we give a heuristic choice  of $\gamma$ which depends only on $\e$, and show that the performance is reasonably good in the low privacy regime.

Consider a particular choice of $\gamma$, which is
\begin{align}
	\gammaheu := \frac{b}{2} = \frac{e^{-\e}}{2}.
\end{align}

It is easy to see that $\gammaheu$ has the same asymptotic properties as the optimal $\gamma^*$ for momentum cost functions, i.e.,
\begin{align}
	\gammaheu \to 0, \; \text{as} \;  b \to 0, \\
	\gammaheu \to \frac{1}{2}, \; \text{as} \;  b \to 1.
\end{align}

Furthermore, the probability that the noise magnitude is less than $\frac{e^{-\e}}{2}\D$ is approximately $\frac{1}{3}$ in the low privacy regime ($\e \to +\infty$). Indeed,
\begin{align}
	\text{Pr}[ |\noise| \le \frac{e^{-\e}}{2} \D] = \text{Pr}[ |\noise| \le \gammaheu \D]
	= 2 a(\gammaheu) \gammaheu \D
	= \frac{1-b}{\gammaheu + b(1-\gammaheu)} \gammaheu
	= \frac{b-b^2}{3b-b^2},
\end{align}
which goes to $\frac{1}{3}$ as $\e \to +\infty$ (accordingly,  $b \to 0$).

On the other hand, for Laplace mechanism,
\begin{align}
	\text{Pr}[ |\noise| \le \frac{e^{-\e}}{2} \D] &= \int_{-\frac{e^{-\e}}{2} \D}^{\frac{e^{-\e}}{2} \D} \frac{1}{2\lambda} e^{-\frac{|x|}{\lambda}}dx
	= 1 - e^{-\frac{\e e^{-\e}}{2}},
\end{align}
which goes to zero as $\e \to +\infty$.

We conclude that in the low privacy regime as $\e \to +\infty$, the staircase mechanism with the heuristic parameter $\gammaheu  = \frac{e^{-\e}}{2}$ can guarantee with probability about $\frac{1}{3}$ the additive noise is very close to zero, while the probability given by Laplacian mechanism is approximately zero.

\section{Extension to The Discrete Setting} \label{sec:discrete}

In this section, we extend our main result Theorem \ref{thm:maingeneral} and Theorem \ref{thm:main} to the discrete settings, and show that the optimal noise-adding mechanism in the discrete setting is a discrete variant of the staircase mechanism in the continuous setting.

\subsection{Problem Formulation}

We first give the problem formulation in the discrete setting.

Consider an integer-valued query function \footnote{Without loss of generality, we assume that in the discrete setting the query output is integer-valued. Indeed, any uniformly-spaced discrete setting can be reduced to the integer-valued setting by scaling the query output.  }
\begin{align}
	q: \database \rightarrow \Z,
\end{align}
where $\database$ is the domain of the databases. Let $\D$ denote the sensitivity of the query function $q$ as defined in \eqref{def:sensitivity}. Clearly, $\D$ is an integer in this discrete setting.

In the discrete setting, a generic randomized mechanism $\KM$ is a family of noise probability distributions over the domain $\Z$ indexed by the query output (denoted by $i$), i.e.,
\begin{align}
	\KM = \{ \p_i : i \in \Z \},
\end{align}
and given dataset $D$, the mechanism $\KM$ will release the query output $i = q(D)$ corrupted by additive random noise with probability distribution  $\p_i$:
\begin{align}
    \KM(D) = i + X_{i},
\end{align}
where $X_{i}$ is a discrete random variable with probability distribution $\p_{i}$.

Then, the $\e$-differential privacy constraint \eqref{eqn:dpgeneral} on $\KM$ is that for any $i_1, i_2 \in Z$ such that $ |i_1 - i_2| \le \D$ (corresponding to the query outputs for two neighboring datasets), and for any subset $ S \subset \Z$,
\begin{align}
	\nm_{i_1} (j) \le e^{\e} \nm_{i_2}(j + i_1 - i_2), \forall j \in \Z, |i_1 - i_2| \le \D,  \label{eqn:disgeneralnoise}
\end{align}
and the  goal is to minimize the worst-case cost
\begin{align}
	\sup_{i \in \Z} \sum_{j=-\infty}^{+\infty} \loss(j) \nm_i(j)
\end{align}
subject to the differential privacy constraint  \eqref{eqn:disgeneralnoise}.

\subsection{Optimality of Query-Qutput Independent Perturbation}

In this section, we show that query-output independent perturbation is optimal in the discrete setting.

For any integer $n \ge 1$, define
\begin{align}
 	\KM_{n} \triangleq \{ \; \{\nm_i\}_{i \in \Z}  | \; \{\nm_i\}_{i \in \Z} \; \text{satisfies}\;  \eqref{eqn:disgeneralnoise},\;  \text{and} \;  \nm_{i+n} = \nm_{i}, \forall i \in \Z  \}.
 \end{align}

\begin{theorem}\label{thm:dismaingeneral}
	Given any family of probability distribution $\{\nm_i\}_{i \in \Z} \in  \cup_{n \ge 1} \KM_{n}$,  there exists a probability distribution $\nm^*$ such that the family of probability distributions  $\{\p^*_i\}_{i \in \Z}$ with $\p^*_i \equiv \p^*$  satisfies the differential privacy constraint \eqref{eqn:disgeneralnoise} and
 	\begin{align}
 		 \sup_{i \in \Z} \sum_{j=-\infty}^{+\infty} \loss(j) \nm^*_i(j)     \le   \sup_{i \in \Z} \sum_{j=-\infty}^{+\infty} \loss(j) \nm_i(j) .
 	\end{align}	
\end{theorem}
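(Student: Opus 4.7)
The plan is to mirror the continuous-setting proof of Theorem \ref{thm:maingeneral} almost verbatim, exploiting two elementary properties of families satisfying the differential privacy constraint \eqref{eqn:disgeneralnoise}: (i) closure under integer translation of the index, and (ii) closure under convex combinations. Since the indices are already discrete and the period $n$ is an integer, unlike the continuous version no piecewise-constant approximation step is needed.

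First I would observe that if $\{\nm_i\}_{i\in\Z}$ satisfies \eqref{eqn:disgeneralnoise} and $k\in\Z$ is any integer, then the shifted family $\{\nm'_i\}_{i\in\Z}$ defined by $\nm'_i = \nm_{i+k}$ also satisfies \eqref{eqn:disgeneralnoise}: applying the original constraint with indices $i_1+k$ and $i_2+k$ (whose difference still has absolute value at most $\D$) gives exactly the shifted constraint. Moreover $\sup_{i\in\Z}\sum_j \loss(j)\nm'_i(j) = \sup_{i\in\Z}\sum_j \loss(j)\nm_i(j)$ because the supremum over $i\in\Z$ is invariant under an integer shift. Second, the constraint \eqref{eqn:disgeneralnoise} is linear in each $\nm_i$, so a convex combination of families each satisfying \eqref{eqn:disgeneralnoise} again satisfies it, and the worst-case cost of a convex combination is at most the largest of the individual worst-case costs (by sublinearity of $\sup$).

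Given $\{\nm_i\}\in\KM_n$, I would then define
\begin{align}
\nm^* \;:=\; \frac{1}{n}\sum_{k=0}^{n-1} \nm_k
\end{align}
and take the constant family $\p^*_i \equiv \nm^*$. By the periodicity assumption $\nm_{i+n}=\nm_i$, for any $i\in\Z$ we have $\frac{1}{n}\sum_{k=0}^{n-1}\nm_{i+k} = \nm^*$, since $\{i+k \bmod n : 0\le k\le n-1\}$ is a full residue system regardless of $i$. Hence the constant family $\{\p^*_i\}$ is precisely the convex combination (with equal weights $1/n$) of the $n$ shifted families $\{\nm_{i+k}\}_{i\in\Z}$ for $k=0,1,\ldots,n-1$. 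By the two observations of the previous paragraph it satisfies \eqref{eqn:disgeneralnoise}, and its worst-case cost is upper-bounded by that of the original family.

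The main potential obstacle is essentially notational: one must carefully distinguish the query-output index $i$ from the noise value $j$ and verify that \eqref{eqn:disgeneralnoise} remains a pointwise inequality of probability mass functions under the simultaneous shift $i\mapsto i+k$ and $j\mapsto j+i_1-i_2$. Once this bookkeeping is unwound, the construction above immediately certifies both requirements of the theorem, and no analogue of the ``piecewise constant over intervals of length $T/n$'' technical assumption from the continuous case is required---integer indexing makes everything line up cleanly, which is precisely why the discrete statement can be made without the extra $\KM_{T,n}$ qualifier.
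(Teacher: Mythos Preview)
Your proposal is correct and follows exactly the approach the paper intends: the paper's own proof reads ``The proof is essentially the same as the proof of Theorem \ref{thm:maingeneral}, and thus is omitted,'' and your two observations (shift invariance and closure under convex combination) followed by averaging over the $n$ cyclic shifts are precisely the discrete analogues of Lemmas \ref{lem:shiftm} and \ref{lem:averagem} used there. Your remark that the piecewise-constant hypothesis is unnecessary in the discrete setting is also accurate and explains why the statement only requires periodicity.
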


\begin{IEEEproof}
	The proof is essentially the same as the proof of Theorem \ref{thm:maingeneral}, and thus is omitted.
\end{IEEEproof}

Theorem \ref{thm:dismaingeneral} states that if we assume the family of noise probability distributions is periodic in terms of $i$ (the period can be arbitrary), then in the optimal mechanism we can assume $\p_i$ does not dependent on $i$. We conjecture that the technical condition can be done away with.


\subsection{Optimal Noise Probability Distribution}

Due to Theorem \ref{thm:dismaingeneral}, we restrict to query-output independent perturbation mechanisms.

Let $q(D)$ be the value of the query function evaluated at dataset $D$. The noise-adding mechanism $\KM$ will output
\begin{align}
 	\KM(D) = q(D) + \noise,
 \end{align}
where $\noise$ is the integer-valued  noise added by the mechanism to the output of query function.
Let $\p$ be the probability distribution of the noise $\noise$. Then the optimization problem we study is
\begin{align}
	\mathop{\text{minimize}}\limits_{ \p} & \ \sum_{i = -\infty}^{+\infty} \loss(i) \p(i) \label{eqn:optdis}\\
	\text{subject to} & \ \p(i) \le e^{\e} \p(i+d), \forall i \in \Z, d \in \Z, |d| \le |\D|.  \label{eqn:dpdiscrete} 
\end{align}

It turns out that when the cost function $\loss(\cdot)$ is symmetric and monotonically increasing for $i \ge 0$, the solution to the above optimization problem is a discrete variant of the staircase mechanism in the continuous setting.

As in the continuous setting, we also assume that the cost function $\loss(\cdot)$ is symmetric and monotonically increasing for $x \ge 0$, i.e.,
\begin{property}\label{propertydis}
	\begin{align}
	\loss(i) &= \loss(-i), \forall i \in \Z \\
	\loss(i) &\le \loss(i), \forall i,j \in \Z, 0 \le i \le j .
\end{align}
\end{property}

The easiest case is $\D = 1$. 
In the case that $\D = 1$, the solution is the geometric mechanism, which was proposed in \cite{Ghosh09}.

Recall $b \triangleq e^{-\e}$.

\begin{theorem}\label{thm:discrete1}
	If the cost function $\loss(\cdot)$ satisfies Property \ref{propertydis} and $\Delta =1$, then the geometric mechanism, which has a probability mass function $\p$ with $\p(i) = \frac{1-b}{1+b}b^{|i|}, \forall i \in \Z$, is the optimal solution to \eqref{eqn:optdis}.
\end{theorem}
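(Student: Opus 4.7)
The plan is to prove Theorem~\ref{thm:discrete1} via a correlation inequality (Chebyshev's sum inequality), taking the candidate geometric distribution itself as the reference measure. Throughout write $b := e^{-\e}$ and $\p^*(i) := c\,b^{|i|}$ with $c := (1-b)/(1+b)$. First I would reduce to symmetric distributions: given any feasible $\p$, set $\p'(i) := (\p(i)+\p(-i))/2$. Adding the DP inequalities $\p(i) \le e^{\e} \p(i+d)$ and $\p(-i) \le e^{\e}\p(-i-d)$ shows $\p'$ is again feasible, and symmetry of $\loss$ gives $V(\p')=V(\p)$. So it suffices to prove $V(\p) \ge V(\p^*)$ for symmetric feasible $\p$.

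Next I would introduce $G(i) := \p(i)/b^{|i|}$. Because $\D = 1$, constraint~\eqref{eqn:dpdiscrete} reads $\p(i \pm 1) \ge b\p(i)$; in particular $\p(i)>0$ for every $i$ (a single zero would propagate via the reverse direction $\p(i+1) \le e^{\e}\p(i)$ and force $\p \equiv 0$, contradicting normalization), and dividing $\p(i+1) \ge b\p(i)$ by $b^{|i|+1}$ for $i \ge 0$ (and the analogous inequality for $i \le 0$) shows that $G$ is symmetric and non-decreasing in $|i|$. Taking $\p^*$ as the reference probability measure, the normalization $\sum_i \p(i)=1$ immediately yields
\begin{align*}
E_{\p^*}[\loss] &= V(\p^*), \\
E_{\p^*}[G] &= c \sum_i G(i) b^{|i|} = c \sum_i \p(i) = c, \\
E_{\p^*}[\loss \cdot G] &= c \sum_i \loss(i) \p(i) = c\,V(\p).
\end{align*}

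Because $\loss$ and $G$ are both symmetric and non-decreasing in $|i|$, the product $(\loss(i)-\loss(j))(G(i)-G(j))$ is non-negative for every pair $i,j \in \Z$. Hence Chebyshev's sum inequality applied under $\p^*$ yields $E_{\p^*}[\loss \cdot G] \ge E_{\p^*}[\loss] \cdot E_{\p^*}[G]$; substituting the three identities above and dividing by $c>0$ gives $V(\p) \ge V(\p^*)$, which is the theorem. The main obstacle is really conceptual rather than technical: the key is to recognize that the natural reference measure for the correlation inequality is $\p^*$ itself, so that the cost rewrites as $\frac{1}{c}E_{\p^*}[\loss \cdot G]$ and the theorem collapses to the comonotonicity of $\loss(|i|)$ and $G(|i|)$. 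The symmetrization, the monotonicity of $G$, and the positivity of $\p$ are all direct and routine consequences of the $\D=1$ DP constraint.
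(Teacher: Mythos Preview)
Your argument is correct and is genuinely different from the paper's. The paper proves Theorem~\ref{thm:discrete1} as a by-product of a longer reduction chain built for all $\Delta$: it first sorts an arbitrary feasible pmf into one that is monotone in $|i|$ (Lemma~\ref{lem:doublesizemono}), then symmetrizes (Lemma~\ref{lem:monosym}), and finally shows by an iterative scaling argument (Lemma~\ref{lem:discretepd}) that one may force $\p(i)/\p(i+\Delta)=e^{\e}$ for every $i\ge 0$; when $\Delta=1$ this last condition, together with symmetry, pins down the geometric distribution. Your route bypasses both the sorting step and the iterative scaling: after the same symmetrization you observe that the likelihood ratio $G=\p/\p^*$ is comonotone with $\loss$ as a function of $|i|$, and a single application of Chebyshev's correlation inequality with $\p^*$ as the reference measure finishes the proof. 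This is cleaner and more conceptual for $\Delta=1$, and in fact yields the inequality $V(\p)\ge V(\p^*)$ directly rather than through a sequence of cost-nonincreasing modifications. The price is that your argument is tailored to $\Delta=1$: for $\Delta\ge 2$ there is no single reference $\p^*$ (the optimal $\p_{r^*}$ depends on $\loss$), and the paper's step-by-step reduction is what carries over. One small technical remark: to make the Chebyshev step fully rigorous you should note (as you implicitly use) that after shifting $\loss$ by $\loss(0)$ one may take $\loss\ge 0$, so the expansion $E_{\p^*}[(\loss(i)-\loss(j))(G(i)-G(j))]=2E_{\p^*}[\loss G]-2E_{\p^*}[\loss]E_{\p^*}[G]$ involves no $\infty-\infty$ ambiguity, and that the case $V(\p^*)=+\infty$ is trivial because the lower bound $\p(i)\ge \p(0)b^{|i|}$ forces every feasible $\p$ to have infinite cost as well.
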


\begin{proof}
	See Appendix \ref{sec:discrete_proof}.
\end{proof}

\begin{figure}[t]
\centering
\includegraphics[scale=0.4]{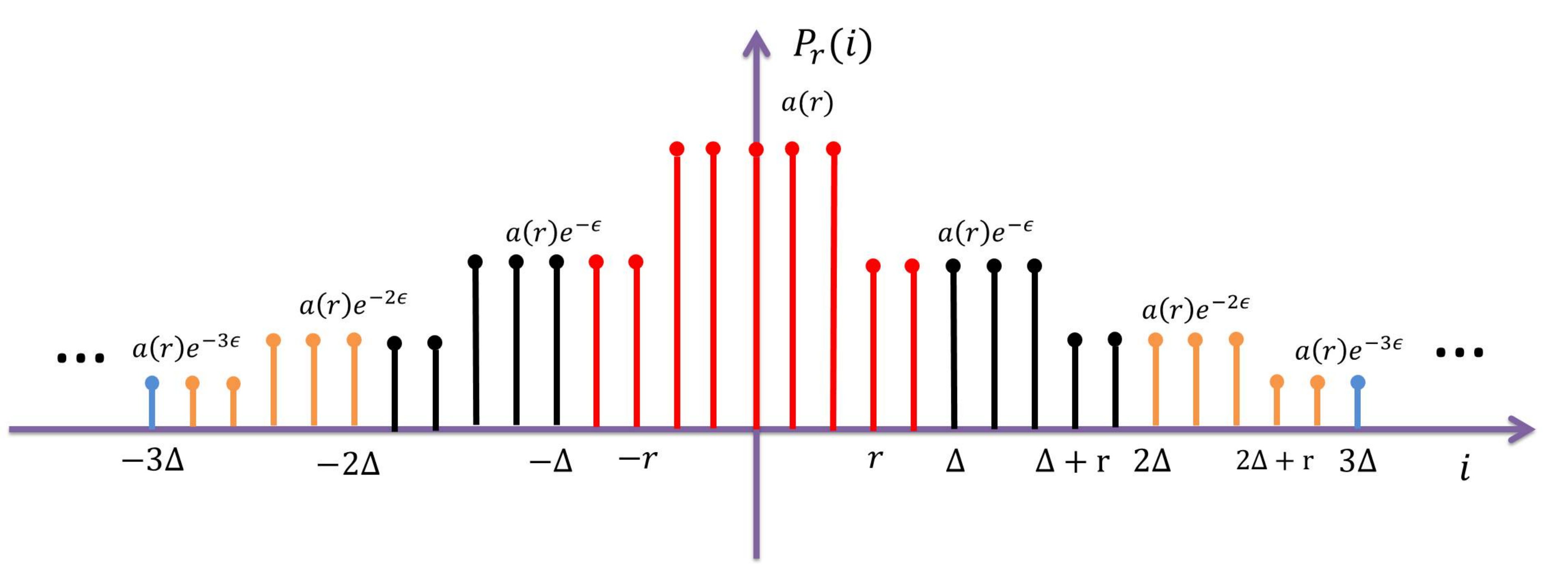}
\caption{The Staircase-Shaped Probability Mass Function $\p_{\mt}(i)$}
\label{fig:disfgamma}
\end{figure}

For fixed general $\D \ge 2$, consider a class of  symmetric and  staircase-shaped probability mass functions defined as follows. Given an integer $ 1 \le \mt \le \D$,   denote $\p_{\mt}$ as the  probability mass function defined by
\begin{align}
\p_{\mt}(i)  =
\begin{cases}
  	 a(\mt) & 0 \le i < \mt  \\
 e^{-\e} a(\mt) & \mt \le i < \D  \\
  e^{-k\e} \p_{\mt}(i - k\D)  &  k \D \le i <   (k+1)\D \; \text{for} \; k \in \N \\
 \p_{\mt}(-i) & i<0
  \end{cases} \label{eqn:defprdis}
\end{align}
for $i \in \Z$, where
\begin{align}
	a(\mt) \triangleq \frac{ 1 - b}{2 \mt + 2 b (\D - \mt) - (1 - b) }.
\end{align}

It is easy to verify that for any $1 \le \mt \le \D$, $\p_{\mt}$ is a valid probability mass function and it satisfies the $\e$-differential privacy constraint \eqref{eqn:dpdiscrete}. We plot the staircase-shaped probability mass function $\p_{\mt}(i)$ in Figure \ref{fig:disfgamma}.

Let $\sP$ be the set of all probability mass functions which satisfy the $\e$-differential privacy constraint \eqref{eqn:dpdiscrete}.

\begin{theorem}\label{thm:discrete2}
For $\D \ge 2$, if the cost function $\loss(x)$ satisfies Property \ref{propertydis}, then
\begin{align}
	\inf_{\p \in \sP}  \sum_{i = -\infty}^{+\infty} \loss(i) \p(i)  = \min_{\{\mt \in \N | 1 \le \mt \le \D \}}  \sum_{i = -\infty}^{+\infty} \loss(i) \p_{\mt}(i).
\end{align}
\end{theorem}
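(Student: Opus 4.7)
The plan is to follow the three-step structure of the proof of Theorem \ref{thm:main} (Appendix \ref{sec:proof}), adapted to the integer lattice. Since we are already in a discrete setting, the first ``discretization'' step of that proof is automatic here; the remaining structural reductions — monotone and geometric decay, and then the two-level staircase shape within each block — carry over with essentially the same arguments applied to pmfs on $\Z$. One direction of the claimed equality is immediate: each $\p_{\mt}$ is a valid element of $\sP$ (directly verified from \eqref{eqn:defprdis} against \eqref{eqn:dpdiscrete}), so $\inf_{\p \in \sP} \sum_i \loss(i) \p(i) \le \min_{\mt} \sum_i \loss(i) \p_{\mt}(i)$. For the reverse direction, given any $\p \in \sP$, I would construct a staircase pmf $\p_{\mt}$ with no larger cost via a sequence of DP-preserving, cost-non-increasing transformations.

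First, I would symmetrize: the reflected pmf $\p^{-}(i) := \p(-i)$ lies in $\sP$ (since \eqref{eqn:dpdiscrete} is invariant under $i \mapsto -i$), so by convexity of $\sP$ so does $\tilde{\p} := \tfrac{1}{2}(\p + \p^{-})$, and by Property \ref{propertydis} it has the same cost as $\p$; thus we may assume $\p(i) = \p(-i)$. Second, I would establish monotonicity of $\p$ on $\{0,1,2,\ldots\}$: a rearrangement restricted to sliding windows of length $\D$ — the diameter of the DP coupling — can be carried out in a way that preserves \eqref{eqn:dpdiscrete} and, by Property \ref{propertydis}, does not increase the cost, yielding a symmetric $\p$ that is non-increasing in $|i|$.

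The core step is to partition $\{0,1,2,\ldots\}$ into blocks $B_k := \{k\D, k\D+1, \ldots, (k+1)\D - 1\}$ and show that, in the optimum, within each block $B_k$ the pmf takes at most two values — a high value on a prefix of length $\mt_k$ and a low value on the suffix — with the within-block ratio equal to $e^{\e}$ and consecutive blocks related by a factor $e^{-\e}$. Both facts follow from local exchange arguments: whenever a DP constraint (within a block, or across adjacent blocks) has strict slack, a small transfer of probability mass from a higher-$|i|$ position to a lower-$|i|$ position remains feasible and, by Property \ref{propertydis}, strictly reduces the cost. Pushing these exchanges to their extreme forces every DP constraint at distance $\D$ to be tight within each window, collapsing each block's profile to the two-level staircase with ratio $e^{\e}$ and fixing the inter-block decay at $e^{-\e}$. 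Symmetry and the translation structure of the constraints then force $\mt_k$ to be a common integer $\mt$ across all $k$, making the pmf exactly $\p_{\mt}$ for some $\mt \in \{1, \ldots, \D\}$. Since this is a finite family, the infimum over $\sP$ is attained and equals $\min_{\mt} \sum_i \loss(i) \p_{\mt}(i)$.

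The main obstacle will be the exchange argument in the core step: rigorously showing that whenever $\p$ is not already a staircase of the specified form, a DP-feasible, cost-reducing perturbation exists, and that iterating such perturbations actually converges to a staircase. The bookkeeping is delicate because a mass transfer at position $i$ interacts with DP constraints spanning the window $[i-\D, i+\D]$, and degenerate cases in which several constraints are simultaneously tight require careful case analysis. I expect to handle this by identifying a canonical improvement move that always exists outside the staircase family, combined with a compactness or extreme-point argument on the (by then periodic-decay) block profile to conclude convergence to the staircase limit.
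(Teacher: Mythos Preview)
Your proposal is correct and follows essentially the same approach as the paper's proof in Appendix~\ref{sec:discrete_proof}. Two minor organizational differences are worth noting: the paper reverses your first two reductions (it first sorts the entire two-sided pmf into a monotone-in-$|i|$ shape via a global rearrangement with a nontrivial DP-preservation check, and only then symmetrizes), and it dissolves the convergence worry you flag by establishing the geometric decay $\p(i)/\p(i+\D)=e^{\e}$ for all $i\ge 0$ \emph{before} attacking the block shape, so that the remaining problem is a finite-dimensional optimization over $\p(0),\ldots,\p(\D-1)$; the two-level profile and the extremal ratio $\lambda\in\{e^{-\e},1\}$ then follow from finitely many pairwise mass transfers plus a linearity-in-parameter argument, with no limiting procedure needed.
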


\begin{proof}
	 See Appendix \ref{sec:discrete_proof}.
\end{proof}

Therefore, the optimal noise probability distribution to preserve $\e$-differential privacy for integer-valued query function has a staircase-shaped probability mass function, which is specified by three parameters $\e$, $\D$ and $\mt^* = \mathop{\arg \min} \limits_{\{\mt \in \N | 1 \le \mt \le \D \} } \sum_{i = -\infty}^{+\infty} \loss(i) \p_{\mt}(i)$. In the case $\D = 1$, the staircase-shaped probability mass function is reduced to the geometric mechanism.





\section{Extension to The Abstract Setting} \label{sec:abstractsetting}

In this section, we show how to extend the staircase mechanism to the abstract setting. The approach is essentially the same as the exponential mechanism in \cite{McSherry07}, except that we replace the exponential function by the staircase function.

Consider a privacy mechanism which maps  an input from a domain $\database$ to some output in a range $\mathcal{R}$. Let $\mu$ be the base measure of $\mathcal{R}$. In addition, we have a cost function $\mathcal{C}: \database \times \mathcal{R} \rightarrow [0, +\infty)$.

Define $\D $ as
\begin{align}
 	\D  \triangleq \max_{r \in \mathcal{R}, \;   D_1,D_2 \subseteq \database: |D_1 - D_2| \le 1} | \mathcal{C}(D_1,r) - \mathcal{C}(D_2,r)|,
 \end{align}
i.e., the maximum difference of cost function for any two inputs which differ only on one single value over all $r \in \mathcal{R}$  \cite{McSherry07}.

A randomized mechanism $\KM$ achieves $\e$-differential privacy if for any
$D_1,D_2 \subseteq \database$ such that $|D_1 - D_2| \le 1$, and for any measurable subset $S \subset \mathcal{R}$,
\begin{align}
	\text{Pr}[\KM(D_1) \in S] \le \exp(\e) \;  \text{Pr}[\KM(D_2) \in S].
\end{align}

\begin{definition}[staircase mechanism in the abstract setting] \label{def:stairabstract}
For fixed $\gamma \in [0,1]$, given input $D \in \database$, the staircase mechanism in the abstract setting will output an element in $\mathcal{R}$ with the probability distribution defined as
\begin{align}
 	\p_D (S) = \frac{ \int_{r \in S} f_{\gamma}(\mathcal{C}(D, r))  \mu(dr) }{\int_{r \in \mathcal{R}} f_{\gamma}(\mathcal{C}(D, r))  \mu(dr) },  \forall \; \text{measurable set} \; S \subset \mathcal{R},
 \end{align}
where $f_{\gamma}$ is the staircase-shaped function defined in \eqref{eqn:deffgamma}.
\end{definition}

\begin{theorem}
	The staircase mechanism in the abstract setting in Definition \ref{def:stairabstract} achieves $2\e$-differential privacy.
\end{theorem}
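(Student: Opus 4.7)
The plan is to mimic the standard analysis of the exponential mechanism, using the key multiplicative bound on the staircase density in place of the exponential's two-point ratio. The only fact I will need about $f_\gamma$ from the paper is the shift inequality
\[
 f_\gamma(x) \le e^{\e} f_\gamma(x+d), \qquad \forall x \in \R,\ |d| \le \D,
\]
which is established just after the definition of $f_\gamma$ in Section~\ref{sec:result}.

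First, fix neighboring datasets $D_1,D_2$ with $|D_1-D_2|\le 1$ and any $r\in\mathcal{R}$. By definition of $\D$ in the abstract setting, $|\mathcal{C}(D_1,r)-\mathcal{C}(D_2,r)|\le \D$, so setting $d_r := \mathcal{C}(D_2,r)-\mathcal{C}(D_1,r)$ the shift inequality yields the pointwise bounds
\[
 f_\gamma(\mathcal{C}(D_1,r)) \le e^{\e}\, f_\gamma(\mathcal{C}(D_2,r)),
\qquad
 f_\gamma(\mathcal{C}(D_2,r)) \le e^{\e}\, f_\gamma(\mathcal{C}(D_1,r)).
\]

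Second, for any measurable $S\subset\mathcal{R}$ I integrate the first inequality over $S$ (bounding the numerator of $\p_{D_1}(S)$ from above) and the second inequality over all of $\mathcal{R}$ (bounding the denominator of $\p_{D_1}(S)$ from below, equivalently upper bounding $1/\int_{\mathcal{R}} f_\gamma(\mathcal{C}(D_1,r))\,\mu(dr)$ by $e^{\e}$ times $1/\int_{\mathcal{R}} f_\gamma(\mathcal{C}(D_2,r))\,\mu(dr)$). Combining the two one-sided $e^{\e}$ factors produces
\[
 \p_{D_1}(S) \;\le\; e^{2\e}\,\p_{D_2}(S),
\]
which is exactly the $2\e$-differential privacy condition.

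There is essentially no obstacle here: the cost function enters the mechanism only through the value $\mathcal{C}(D,r)$, and its worst-case change under a single-record modification is $\D$, which is precisely the window on which the staircase pdf satisfies the multiplicative shift inequality. The factor of $2$ in $2\e$ is the usual exponential-mechanism penalty for having both numerator and denominator depend on the dataset; it could only be avoided if $\mathcal{C}(D,r)$ changed in a controlled one-sided way, which we do not assume. The proof requires no integrability side condition beyond the implicit assumption in Definition~\ref{def:stairabstract} that $\int_{\mathcal{R}} f_\gamma(\mathcal{C}(D,r))\,\mu(dr)$ is finite and positive, so that $\p_D$ is a well-defined probability measure.
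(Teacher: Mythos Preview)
Your proposal is correct and follows exactly the same route as the paper: use the pointwise shift inequality for $f_\gamma$ to bound the numerator by $e^{\e}$ times the corresponding integral for $D_2$, then use the reverse inequality on the denominator to gain the second factor of $e^{\e}$, yielding $\p_{D_1}(S)\le e^{2\e}\p_{D_2}(S)$. The paper's proof is identical in structure and content.
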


\begin{IEEEproof}
	For any $D_1,D_2 \in \database$ such that $|D_1 - D_2| \le 1$, and for any measurable set $S \subset \mathcal{R}$,
	\begin{align}
		\p_{D_1} (S) &= \frac{ \int_{r \in S} f_{\gamma}(\mathcal{C}(D_1, r))  \mu(dr) }{\int_{r \in \mathcal{R}} f_{\gamma}(\mathcal{C}(D_1, r))  \mu(dr) } \\
		&\le e^{\e} \frac{ \int_{r \in S} f_{\gamma}(\mathcal{C}(D_2, r))  \mu(dr) }{\int_{r \in \mathcal{R}} f_{\gamma}(\mathcal{C}(D_1, r))  \mu(dr) }  \\
		&\le e^{\e} e^{\e} \frac{ \int_{r \in S} f_{\gamma}(\mathcal{C}(D_2, r))  \mu(dr) }{\int_{r \in \mathcal{R}} f_{\gamma}(\mathcal{C}(D_2, r))  \mu(dr) } \\
		&= e^{2\e} \p_{D_2} (S),
	\end{align}
	where we have used the property that $f_{\gamma}(\mathcal{C}(D_1, r)) \le e^{\e} f_{\gamma}(\mathcal{C}(D_2, r))$ and $f_{\gamma}(\mathcal{C}(D_2, r)) \le e^{\e} f_{\gamma}(\mathcal{C}(D_1, r))$ for all $r \in \mathcal{R}$.

	Therefore, the staircase mechanism in the abstract setting achieves $2\e$-differential privacy for any $\gamma \in [0,1]$.

\end{IEEEproof}

In the case that the output range $\mathcal{R}$ is the set of real numbers $\R$ and the cost function $\mathcal{C}(d, r) = |r - q(d)|$ for some real-valued query function $q$,  the above mechanism is reduced to the staircase mechanism in the continuous setting.




\section{Conclusion} \label{sec:conclusion}
In this work we show that adding query-output independent noise with staircase distribution is optimal among all randomized mechanisms (subject to a mild technical condition) that preserve  differential privacy. The optimality is  for  single real-valued query function under a very general utility-maximization (or cost-minimization) framework. The class of optimal noise probability distributions has  {\em staircase-shaped} probability density functions which are symmetric (around the origin), monotonically decreasing and geometrically decaying for $x \ge 0$. The staircase mechanism can be viewed as a {\em geometric mixture of uniform probability distributions}, providing a simple algorithmic description for the mechanism.  Furthermore, the staircase mechanism naturally generalizes to discrete query output settings as well as more abstract settings. 

We explicitly derive the parameter of the staircase mechanism  with minimum expectation of noise amplitude and power.  Comparing the optimal performances with those of the standard Laplacian mechanism, we show that in the high privacy regime ($\e$ is small), Laplacian mechanism is asymptotically optimal as $\e \to 0$;  in the low privacy regime ($\e$ is large), the minimum expectation of noise amplitude  and minimum noise power are $\Theta(\D e^{-\frac{\e}{2}})$
and $\Theta(\D^2 e^{-\frac{2\e}{3}})$ as $\e \to +\infty$, while  the expectation of noise amplitude and power using the Laplacian mechanism are $\frac{\D}{\e}$ and $\frac{2\D^2}{\e^2}$, where $\D$ is the sensitivity of the query function. We conclude that the gains are more pronounced in the moderate to low privacy regime.










\appendices

\section{Proof of Theorem \ref{thm:maingeneral}} \label{app:optimality}

We first give two lemmas on the properties of $\{\nm_{t}\}_{t \in \R}$ which satisfies \eqref{eqn:diffgeneralnoise}.

\begin{lemma}\label{lem:shiftm}
Given 	$\{\nm_{t}\}_{t \in \R}$ satisfying \eqref{eqn:diffgeneralnoise}, and given any scalar $\alpha \in \R$, consider the family of noise probability measures $\{\nm^{(\alpha)}_{t}\}_{t \in \R}$ defined by
\begin{align}
	\nm^{(\alpha)}_{t} \triangleq \nm_{t+\alpha}, \forall t \in \R. \label{eqn:defshift}
\end{align}

Then $\{\nm^{(\alpha)}_{t}\}_{t \in \R}$ also satisfies the differential privacy constraint, i.e.,  $\forall |t_1 - t_2| \le \D$,
\begin{align}
	\nm^{(\alpha)}_{t_1} (S) \le e^{\e} \nm^{(\alpha)}_{t_2}(S + t_1 - t_2).\label{eqn:satequality}
\end{align}
Furthermore, $\{\nm_{t}\}_{t \in \R}$ and $\{\nm^{(\alpha)}_{t}\}_{t \in \R}$ have the same cost, i.e.,
\begin{align}
	\sup_{t \in \R} \int_{x \in \R} \loss(x) \nm_t(dx) = \sup_{t \in \R} \int_{x \in \R} \loss(x) \nm^{(\alpha)}_t(dx). \label{eqn:samecost}
\end{align}
\end{lemma}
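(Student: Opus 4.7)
The plan is to verify both claims by direct substitution, since the shifted family is defined purely by reindexing ($\nm^{(\alpha)}_t := \nm_{t+\alpha}$) and no spatial transformation is applied to the measures themselves. Consequently the differential privacy inequality, which depends only on the \emph{difference} of the indices, is invariant under a common translation of the indices, and the worst-case cost, which is a supremum over all indices, is invariant under a reparametrization of that supremum.

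For the privacy claim \eqref{eqn:satequality}, I would fix $t_1, t_2 \in \R$ with $|t_1 - t_2| \le \D$ and an arbitrary measurable set $S \subseteq \R$. Unfolding \eqref{eqn:defshift}, what must be shown is $\nm_{t_1+\alpha}(S) \le e^{\e} \nm_{t_2+\alpha}(S + t_1 - t_2)$. Set $t_1' := t_1 + \alpha$ and $t_2' := t_2 + \alpha$; then $|t_1' - t_2'| = |t_1 - t_2| \le \D$ and $t_1' - t_2' = t_1 - t_2$, so the desired inequality is exactly the original hypothesis \eqref{eqn:diffgeneralnoise} applied to $(t_1', t_2')$ and the set $S$. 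For the cost equality \eqref{eqn:samecost}, I would compute
\begin{equation*}
\sup_{t \in \R} \int_{x \in \R} \loss(x)\, \nm^{(\alpha)}_t(dx) \;=\; \sup_{t \in \R} \int_{x \in \R} \loss(x)\, \nm_{t+\alpha}(dx) \;=\; \sup_{s \in \R} \int_{x \in \R} \loss(x)\, \nm_s(dx),
\end{equation*}
where the last equality uses the change of variables $s = t + \alpha$ in the supremum, together with the fact that $s$ ranges over all of $\R$ as $t$ does.

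There is no real obstacle in this lemma; it is a mechanical verification. Its role in the sequel is that of a building block for Theorem \ref{thm:maingeneral}: combined with the convex-combination property of DP families, the translation invariance recorded here will let one average differently-translated copies of $\{\nm_t\}_{t\in\R}$ to produce a query-output-independent family $\{\p^*_t \equiv \p^*\}$ whose cost is no worse than that of the original.
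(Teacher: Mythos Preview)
Your proposal is correct and matches the paper's own argument essentially line for line: both verify the DP inequality by substituting $t_i' = t_i + \alpha$ and invoking \eqref{eqn:diffgeneralnoise}, and both obtain the cost equality from the reindexing $s = t + \alpha$ of the supremum. Your closing remark about the role of this lemma alongside the convex-combination property is exactly how the paper uses it.
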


\begin{IEEEproof}
	Since by definition the family of probability measures $\{\nm^{(\alpha)}_t\}_{t \in \R}$ is a shifted version of $\{\nm_t\}_{t \in \R}$, \eqref{eqn:samecost} holds.

	Next we show that $\{\nm^{(\alpha)}_t\}_{t \in \R}$ satisfies \eqref{eqn:satequality}. Given any $t_1, t_2$ such that $|t_1 - t_2| \le \D$, then for any measurable set $S \subset \R$, we have
	\begin{align}
		\nm^{(\alpha)}_{t_1} &= \nm_{t_1 + \alpha} (S) \\
			&\le e^{\e} \nm_{t_2 + \alpha} (S + (t_1 + \alpha) - (t_2 + \alpha)) \\
			&= e^{\e} \nm_{t_2 + \alpha} (S + t_1 - t_2) \\
			&= e^{\e} \nm^{(\alpha)}_{t_2} (S + t_1 - t_2).
	\end{align}

	This completes the proof.
\end{IEEEproof}

Next we show that given a collection of families of probability measures each of which satisfies the differential privacy constraint \eqref{eqn:diffgeneralnoise}, we can take a convex combination of them to construct a new family of probability measures satisfying \eqref{eqn:diffgeneralnoise} and the new cost is not worse. More precisely,

\begin{lemma}\label{lem:averagem}
 	Given a collection of finite number of  families of probability measures $\{\nm^{[i]}_t\}_{t \in \R}$ ($i \in \{1,2,3,\dots,n\}$), such that  for each $i$,  $\{\nm^{[i]}_t\}_{t \in \R}$  satisfies \eqref{eqn:diffgeneralnoise} and
 	\begin{align}
 		\sup_{t \in \R} \int_{x \in \R} \loss(x) \nm^{[i]}_t(dx)  = Q, \forall i,
 	\end{align}
 	for some real number $Q$,
 	consider the family of probability measures $\{\tilde{\nu}_t\}_{t \in \R}$ defined by
 	\begin{align}
 		\tilde{\nu}_t \triangleq \sum_{i=1}^n c_i \nm^{[i]}_t, \forall t \in \R,
 	\end{align}
 	i.e., for any measurable set $S \subset \R$,
 	\begin{align}
 		\tilde{\nu}_t (S) = \sum_{i=1}^n c_i \nm^{[i]}_t (S),
 	\end{align}
 	where $c_i \ge 0$, and $\sum_{i=1}^n c_i = 1$.

 	Then $\{\tilde{\nu}_t\}_{t \in \R}$ also satisfies the differential privacy constraint \eqref{eqn:diffgeneralnoise}, and
 	\begin{align}
 		\sup_{t \in \R} \int_{x \in \R} \loss(x) \tilde{\nu}_t(dx) \le Q. \label{eqn:}
 	\end{align}
 \end{lemma}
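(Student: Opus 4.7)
The proof plan is short because both the differential privacy constraint \eqref{eqn:diffgeneralnoise} and the cost functional $\int \loss(x)\, \nm_t(dx)$ are linear (in fact affine) in the measure $\nm_t$; the lemma should therefore follow by pushing the convex combination through each of these linear operations.

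First I would verify that $\tilde{\nu}_t$ is a well-defined probability measure on $\R$ for each $t$: this is immediate, since a convex combination with nonnegative coefficients summing to one of probability measures is again a probability measure (countable additivity is preserved under finite linear combinations, and $\tilde{\nu}_t(\R) = \sum_i c_i \nm_t^{[i]}(\R) = \sum_i c_i = 1$).

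Next I would check the differential privacy constraint. Fix $t_1, t_2 \in \R$ with $|t_1 - t_2| \le \D$ and a measurable set $S \subset \R$. For each index $i$, since $\{\nm_t^{[i]}\}_{t \in \R}$ satisfies \eqref{eqn:diffgeneralnoise}, we have $\nm_{t_1}^{[i]}(S) \le e^{\e}\, \nm_{t_2}^{[i]}(S + t_1 - t_2)$. Multiplying by $c_i \ge 0$ and summing over $i$ yields
\begin{align}
\tilde{\nu}_{t_1}(S) \;=\; \sum_{i=1}^n c_i\, \nm_{t_1}^{[i]}(S) \;\le\; e^{\e} \sum_{i=1}^n c_i\, \nm_{t_2}^{[i]}(S + t_1 - t_2) \;=\; e^{\e}\, \tilde{\nu}_{t_2}(S + t_1 - t_2),
\end{align}
which is exactly \eqref{eqn:diffgeneralnoise} for $\{\tilde{\nu}_t\}_{t \in \R}$.

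Finally, I would bound the cost. For each fixed $t \in \R$ and each $i$, the hypothesis $\sup_t \int \loss(x)\, \nm_t^{[i]}(dx) = Q$ in particular gives $\int \loss(x)\, \nm_t^{[i]}(dx) \le Q$. Using monotone/linearity of the integral with respect to a finite nonnegative linear combination of measures,
\begin{align}
\int_{x \in \R} \loss(x)\, \tilde{\nu}_t(dx) \;=\; \sum_{i=1}^n c_i \int_{x \in \R} \loss(x)\, \nm_t^{[i]}(dx) \;\le\; \sum_{i=1}^n c_i\, Q \;=\; Q.
\end{align}
Taking the supremum over $t$ on the left-hand side yields the desired bound. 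There is no real obstacle here; the only minor subtlety is justifying the interchange of the finite sum and the integral, which is immediate because the sum has only finitely many terms and $\loss \ge 0$ (or, alternatively, by standard properties of integration with respect to a finite linear combination of measures).
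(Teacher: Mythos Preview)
Your proof is correct and follows essentially the same approach as the paper's own proof: verify the differential privacy constraint by summing the per-$i$ inequalities weighted by $c_i$, and bound the cost by linearity of the integral together with $\sum_i c_i = 1$. The only addition is that you explicitly note $\tilde{\nu}_t$ is a probability measure, which the paper leaves implicit.
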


 \begin{IEEEproof}
 	First we show that $\{\tilde{\nu}_t\}_{t \in \R}$ also satisfies the differential privacy constraint \eqref{eqn:diffgeneralnoise}. Indeed, $\forall \, |t_1 - t_2| \le \D$, $\forall$ measurable set $S \subset \R$,
 	\begin{align}
 		\tilde{\nu}_{t_1} (S) &= \sum_{i=1}^n c_i \nm^{[i]}_{t_1} (S) \\
 						&\le \sum_{i=1}^n c_i e^{\e} \nm^{[i]}_{t_2} (S + t_1 - t_2) \\
 						&= e^{\e} \tilde{\nu}_{t_2} (S + t_1 - t_2).
 	\end{align}

 	Next we show that the cost of $\{\tilde{\nu}_t\}_{t \in \R}$ is no bigger than $Q$. Indeed, for any $t \in \R$,
 	\begin{align}
 		\int_{x \in \R} \loss(x) \tilde{\nu}_t(dx) &= \sum_{i=1}^n c_i \int_{x \in \R} \loss(x) \tilde{\nu}^{[i]}_t(dx) \\
 		& \le \sum_{i=1}^n c_i Q \\
 		& = Q.
 	\end{align}
 	Therefore,
 	\begin{align}
 		\sup_{t \in \R} \int_{x \in \R} \loss(x) \tilde{\nu}_t(dx) \le Q.
 	\end{align}
 \end{IEEEproof}

 Applying Lemma \ref{lem:shiftm} and Lemma \ref{lem:averagem}, we can prove the conjecture under the assumption that the family of probability measures $\{\nm_t\}_{t \in \R}$ is piecewise constant and periodic over $t$.

 \begin{IEEEproof}[Proof of Theorem \ref{thm:maingeneral}]
 	We first prove that for any family of probability measures $\{\nm_t\}_{t \in \R} \in \KM_{T,n}$, there exists a new family of probability measures $\{\tilde{\nm}_t\}_{t \in \R} \in \KM_{T,n}$ such that $\tilde{\nm}_t = \tilde{\nm}$ for all $t \in \R$, i.e., the added noise is independent of query output $t$, and
 	\begin{align}
 	 	\sup_{t \in \R} \int_{x \in \R} \loss(x) \tilde{\nm}_t(dx)  \le \sup_{t \in \R} \int_{x \in \R} \loss(x) \nm_t(dx) .
 	 \end{align}

 	 Indeed, consider the collection of probability measures $\{\nm_t^{(i\frac{T}{n})}\}_{t \in \R}$ for $i \in \{0, 1, 2, \dots, n-1\}$, where $\{\nm_t^{(\alpha)}\}$ is defined in  \eqref{eqn:defshift}. Due to Lemma \ref{lem:shiftm}, for all $i$, $\{\nm_t^{(i\frac{T}{n})}\}_{t \in \R}$ satisfies the differential privacy constraint \eqref{eqn:diffgeneralnoise}, and the cost is the same as the cost of $\{\nm_t\}_{t \in \R}$.

 	 Define
 	 \begin{align}
 	 	\tilde{\nm}_t = \sum_{i=0}^{n-1} \frac{1}{n} \nm_t^{(i\frac{T}{n})}.
 	 \end{align}
 	 Then due to Lemma \ref{lem:averagem}, $\{\tilde{\nm}_t\}_{t \in \R}$ satisfies \eqref{eqn:diffgeneralnoise}, and the cost of is not worse, i.e.,
 	 \begin{align}
 	  	\sup_{t \in \R} \int_{x \in \R} \loss(x) \tilde{\nm}_t(dx)  \le \sup_{t \in \R} \int_{x \in \R} \loss(x) \nm_t(dx).
 	  \end{align}

 	  Furthermore, since $\{\nm_t\}_{t \in \R} \in \KM_{T,n}$, for any $t \in \R$,
 	  \begin{align}
 	  	  \tilde{\nm}_t &= \sum_{i=0}^{n-1} \frac{1}{n} \nm_t^{(i\frac{T}{n})} 
 	  	  = \sum_{i=0}^{n-1} \frac{1}{n} \nm_{i\frac{T}{n}}.
 	  \end{align}
 	  Hence,  $\tilde{\nm}_t$ is independent of $t$.

 	  Therefore, among the collection of probability measures in $	\cup_{T >0} \cup_{n \ge 1} \KM_{T,n}$, to minimize the cost we only need to consider the families of noise probability measures which are independent of the query output $t$. Then due to Theorem \ref{thm:main}, the staircase mechanism is optimal among all query-output-independent noise-adding mechanisms. This completes the proof of Theorem \ref{thm:maingeneral}.
 \end{IEEEproof}

\section{Proof of Theorem \ref{thm:main}}\label{sec:proof}
In this section, we give detailed and rigorous proof of Theorem \ref{thm:main}.

\subsection{Outline of Proof}
 The key idea of the proof is to use a sequence of probability distributions with piecewise constant probability density functions to approximate any probability distribution satisfying the differential privacy constraint \eqref{eqn:dpconstraintfinal}.  The proof consists of 8 steps in total, and in each step we narrow down the set of  probability distributions where the optimal probability distribution should lie in:

 \begin{itemize}
 	\item Step 1 proves that we only need to consider symmetric probability distributions.

 	\item Step 2 and Step 3 prove that  we only need to consider probability distributions which have symmetric piecewise constant probability density functions.

 	\item Step 4 proves that we only need to consider those symmetric piecewise constant probability density functions which are monotonically decreasing for $x \ge 0$.

 	\item Step 5 proves that optimal probability density function should periodically decay.

 	\item Step 6, Step 7 and Step 8 prove that the optimal probability density function over the interval $[0, \D)$ is a step function, and they conclude the proof of Theorem \ref{thm:main}.
 \end{itemize}

\subsection{Step 1}

Define
\begin{align}
 	V^* \triangleq \inf_{\p \in \sP}  \int_{x \in \R} \loss (x)  \p(dx). \label{def:Vstar}
 \end{align}
Our goal is to prove that $V^* =  \inf\limits_{\gamma \in [0,1]}  \int_{x \in \R} \loss (x)  \p_{\gamma}(dx) $.

If $V^* = + \infty$, then due to the definition of $V^*$, we have
\begin{align}
	 \inf_{\gamma \in [0,1]}  \int_{x \in \R} \loss (x)  \p_{\gamma}(dx) \ge V^* = + \infty,
\end{align}
and thus $\inf_{\gamma \in [0,1]}  \int_{x \in \R} \loss (x) = V^* = + \infty$. So we only need to consider the case $V^* < +\infty$, i.e., $V^*$ is finite. Therefore, in the rest of the proof, we assume $V^*$ is finite.

First we prove that we only need to consider symmetric probability measures.
\begin{lemma}\label{lem:symmetric}
 	Given $\p \in \sP$, define a symmetric probability distribution $\psym$ as
 	\begin{align}
 		\psym(S) \triangleq \frac{\p(S) + \p(-S)}{2}, \forall \ \text{measurable set} \ S   \subseteq \R,\label{eqn:definesym}
 	\end{align}
 	where the set $ -S \triangleq \{ - x \ | \ x \in S   \} $.
 	Then $\psym \in \sP$, i.e., $\psym$ satisfies the differential privacy constraint \eqref{eqn:dpconstraintfinal}, and
 	\begin{align}
 		\int_{x \in \R} \loss (x)  \psym (dx) = \int_{x \in \R} \loss (x)  \p (dx).
 	\end{align}
 \end{lemma}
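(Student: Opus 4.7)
The plan is to decompose $\psym$ as a convex combination of $\p$ and its reflection $\p^-$, verify that both belong to $\sP$, and then use the convexity of the differential privacy constraint together with the symmetry of $\loss$.

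First I would define the reflected probability distribution $\p^-(S) \triangleq \p(-S)$ for every measurable $S \subseteq \R$, so that $\psym = \tfrac{1}{2}\p + \tfrac{1}{2}\p^-$. Checking that $\p^-$ is a bona fide probability distribution is immediate from the fact that $S \mapsto -S$ is a measurable bijection of $\R$. Next I would verify $\p^- \in \sP$: for any measurable $S$ and any $|d|\le\D$, set $S' = -S$ and $d' = -d$; then $|d'|\le\D$ as well and
\begin{align}
\p^-(S) = \p(S') \le e^{\e}\, \p(S' + d') = e^{\e}\, \p(-S - d) = e^{\e}\, \p^-(S + d),
\end{align}
where the inequality uses that $\p\in\sP$ applied to the shift $d'$.

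Then I would observe that the differential privacy constraint \eqref{eqn:dpconstraintfinal} is preserved under convex combinations: for any measurable $S$ and $|d|\le\D$,
\begin{align}
\psym(S) = \tfrac{1}{2}\p(S) + \tfrac{1}{2}\p^-(S) \le \tfrac{e^{\e}}{2}\p(S+d) + \tfrac{e^{\e}}{2}\p^-(S+d) = e^{\e}\,\psym(S+d),
\end{align}
so $\psym \in \sP$. This is the same averaging idea used in Lemma \ref{lem:averagem}; here the two families being averaged are the constant families $\{\p\}$ and $\{\p^-\}$.

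Finally I would verify that the cost is preserved. By the change of variables $x \mapsto -x$, together with Property \ref{property1} ($\loss(-x) = \loss(x)$),
\begin{align}
\int_{x\in\R}\loss(x)\,\p^-(dx) = \int_{x\in\R}\loss(-x)\,\p(dx) = \int_{x\in\R}\loss(x)\,\p(dx),
\end{align}
and averaging gives $\int \loss(x)\,\psym(dx) = \int \loss(x)\,\p(dx)$. The only mildly subtle point — which I expect to be the main (albeit minor) obstacle — is being careful that the DP constraint for $\p^-$ indeed follows from that for $\p$ by substituting $d' = -d$ and exploiting that the allowed shifts form a symmetric interval $[-\D,\D]$; everything else is a routine convex-combination / change-of-variables argument.
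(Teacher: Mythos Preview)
Your proposal is correct and follows essentially the same approach as the paper: the paper directly applies $\p(S)\le e^{\e}\p(S+d)$ and $\p(-S)\le e^{\e}\p(-S-d)$ to the average, whereas you repackage this as first showing $\p^-\in\sP$ and then invoking convexity of the constraint, but the underlying inequality and the use of $\loss(-x)=\loss(x)$ are identical.
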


\begin{IEEEproof}
	It is easy to verify that $\psym$ is a valid probability distribution. Due to the definition  of   $\psym$ in \eqref{eqn:definesym}, we have
	\begin{align}
		\psym(S) = \frac{\p(S) + \p(-S)}{2}  = \psym(-S),
	\end{align}
	for any measurable set $S \subseteq \R$. Thus, $\psym$ is a symmetric probability distribution.

	Next, we show that $\psym$ satisfies \eqref{eqn:dpconstraintfinal}. Indeed, $\forall$ measurable set $S \subseteq \R$ and   $\forall |d| \le \D$,
	\begin{align}
		\psym(S) &= \frac{\p(S) + \p(-S)}{2} \\
		   & \le \frac{e^{\e} \p(S+d)  + e^{\e} \p(-S-d)}{2} \label{eqn:temp1} \\
		   & = \frac{e^{\e} \p(S+d)  + e^{\e} \p(-(S+d))}{2}  \\
		   & = e^{\e} \psym(S+d),
	\end{align}
	where in \eqref{eqn:temp1} we use the facts $\p(S) \le e^{\e} \p(S+d) $ and $\p(-S) \le e^{\e} \p(-S-d) $.

	Lastly, since $\loss(x)$ is symmetric,
	\begin{align}
		\int_{x \in \R} \loss (x)  \p (dx) & = \int_{x \in \R} \frac{\loss (x) + \loss (-x)} {2}  \p (dx) \\
		&= \int_{x \in \R} \loss (x)  \psym (dx).
	\end{align}
\end{IEEEproof}

Therefore,  if we define
\begin{align}
	\sP_{\text{sym}} \triangleq \{ \psym | \p \in \sP  \},
\end{align}
due to Lemma \ref{lem:symmetric},
 \begin{lemma}
 \begin{align}
 	V^* = \inf_{\p \in \sP_{\text{sym}}}  \int_{x \in \R} \loss (x)  \p(dx) .
 \end{align}
 \end{lemma}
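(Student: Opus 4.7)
The plan is to establish the equality via two inequalities, both of which follow directly from Lemma~\ref{lem:symmetric}. First, I would observe the containment $\sP_{\text{sym}} \subseteq \sP$: by definition every element of $\sP_{\text{sym}}$ is of the form $\psym$ for some $\p \in \sP$, and Lemma~\ref{lem:symmetric} guarantees that such a $\psym$ is itself an element of $\sP$. Since taking an infimum over a smaller set cannot decrease its value, this containment immediately yields
\begin{align}
V^* \;=\; \inf_{\p \in \sP} \int_{x \in \R} \loss(x)\, \p(dx) \;\le\; \inf_{\p \in \sP_{\text{sym}}} \int_{x \in \R} \loss(x)\, \p(dx).
\end{align}

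For the reverse inequality, I would exploit the cost-preservation half of Lemma~\ref{lem:symmetric}: for every $\p \in \sP$ the symmetrization $\psym$ lies in $\sP_{\text{sym}}$ and has the same cost as $\p$, namely $\int \loss(x)\, \psym(dx) = \int \loss(x)\, \p(dx)$. Consequently, for every fixed $\p \in \sP$,
\begin{align}
\inf_{\p' \in \sP_{\text{sym}}} \int_{x \in \R} \loss(x)\, \p'(dx) \;\le\; \int_{x \in \R} \loss(x)\, \psym(dx) \;=\; \int_{x \in \R} \loss(x)\, \p(dx),
\end{align}
and taking the infimum over $\p \in \sP$ on the right-hand side produces $\inf_{\p' \in \sP_{\text{sym}}} \int \loss(x)\, \p'(dx) \le V^*$, which combined with the previous display gives the desired equality.

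There is no genuine obstacle here, since Lemma~\ref{lem:symmetric} has already done the substantive work of verifying simultaneously that the symmetrization map $\p \mapsto \psym$ preserves both the $\e$-differential privacy constraint and the expected cost. The remaining argument is purely a bookkeeping exercise in combining these two properties; the only care required is to keep the two infima straight, noting that the forward containment relies on the privacy-preserving half of Lemma~\ref{lem:symmetric} while the reverse inequality relies on its cost-preserving half.
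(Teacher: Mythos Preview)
Your proposal is correct and takes essentially the same approach as the paper, which simply states the lemma as a direct consequence of Lemma~\ref{lem:symmetric} without writing out the two-inequality argument explicitly. You have just spelled out in detail what the paper leaves implicit.
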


\subsection{Step 2}

Next we prove that for any probability distribution $\p$ satisfying differential privacy constraint \eqref{eqn:dpconstraintfinal}, the probability $\text{Pr}(\noise = x) = 0, \forall x \in \R $, and $\p([y,z]) \neq 0$ for any $y < z \in \R$.

\begin{lemma}\label{lem:no_point_mass}
	 $\forall \p \in \sP, \forall x \in \R$,  $\p(\{x\}) = 0$. And, for any $y < z \in \R$, $\p([y,z]) \neq 0$.
\end{lemma}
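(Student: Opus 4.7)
The statement has two parts: (a) no point has positive mass, and (b) every nondegenerate interval has positive mass. For each, I will derive a contradiction by using the differential privacy constraint $\p(S) \le e^{\e}\p(S+d)$ to propagate ``bad'' mass behavior across the real line, until the total mass of $\R$ is forced to be either infinite or zero.

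\textbf{No point masses.} First I would suppose, for contradiction, that there exists $x_0 \in \R$ with $\p(\{x_0\}) = c > 0$. Applying \eqref{eqn:dpconstraintfinal} with $S = \{x_0\}$ and any $d$ with $|d| \le \D$ yields
\begin{align*}
c = \p(\{x_0\}) \le e^{\e}\,\p(\{x_0\} + d) = e^{\e}\,\p(\{x_0 + d\}),
\end{align*}
so $\p(\{y\}) \ge c e^{-\e}$ for every $y \in [x_0 - \D,\, x_0 + \D]$. That interval is uncountable, so the countable additivity of $\p$ is already violated on this bounded set; equivalently, $\p$ would have infinite total mass, contradicting $\p(\R) = 1$.

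\textbf{Positive mass on intervals.} For the second part, suppose $\p([y,z]) = 0$ for some $y < z$. I would use the constraint in the ``backward'' direction: applying \eqref{eqn:dpconstraintfinal} with the set $A = [y,z] - d$ and shift $d$ (so that $A + d = [y,z]$), one gets $\p([y-d, z-d]) \le e^{\e}\p([y,z]) = 0$ for every $|d| \le \D$. Hence every translate of $[y,z]$ by an amount of magnitude at most $\D$ also has zero $\p$-measure. Now set $\delta = \min(\D,\, z - y) > 0$ and iterate: the intervals $[y + k\delta,\, z + k\delta]$ for $k \in \Z$ all have zero measure, and since adjacent intervals overlap (as $\delta \le z - y$), their union is all of $\R$. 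Countable additivity then gives $\p(\R) = 0$, contradicting that $\p$ is a probability distribution.

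\textbf{Main obstacle.} Neither step is technically deep; the only care needed is to apply the DP constraint in the direction that matches the propagation we want. For (a) we use the forward direction $\p(\{x_0\}) \le e^{\e}\p(\{x_0\} + d)$ to \emph{lower-bound} nearby point masses, while for (b) we use its equivalent reformulation (obtained by relabeling $S \mapsto S - d$) to \emph{propagate zeros} to nearby shifted intervals. After this, the iteration argument in (b) just needs $\delta \le z - y$ to ensure the translates cover $\R$; choosing $\delta = \min(\D, z - y)$ handles both the case $z - y \le \D$ and $z - y > \D$ uniformly.
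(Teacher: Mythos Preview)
Your proposal is correct and follows essentially the same approach as the paper: propagate a positive point mass to uncountably many nearby points via the DP constraint to force infinite total mass, and propagate a zero-mass interval to translates that cover $\R$ to force zero total mass. Your part (b) is in fact slightly more careful than the paper's, since you explicitly choose the step $\delta = \min(\D, z-y)$ to guarantee the translates cover $\R$, whereas the paper leaves this implicit.
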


\begin{IEEEproof}
	Given $\p \in \sP$, suppose $\p(\{x_0\}) = p_0 >0$, for some $x_0 \in \R$.
	Then for any $x \in [x_0, x_0 + \D]$,
	\begin{align}
		\p(\{x\}) \ge  e^{-\e},
	\end{align}
	due to \eqref{eqn:dpconstraintfinal}.

	So $\p(\{x\})$ is strictly lower bounded by a positive constant for uncountable number of $x$, and thus $\p([x_0, x_0 + \D]) =  + \infty$, which contradicts with the fact $\p$ is a probability distribution.

	Therefore,  $\forall \p \in \sP, \forall x \in \R$,  $\p(\{x\}) = 0$.

    Suppose $\p([y,z]) = 0$ for some $y <z \in \R$. Then from \eqref{eqn:dpconstraintfinal} we have for any $|d| \le \D$,
    \begin{align}
        \p([y+d,z+d]) \le e^{\e}  \p([y,z]) = 0,
    \end{align}
    and thus $\p([y+d,z+d]) = 0$. By induction,  for any $k \in \Z$,
    $\p([y+kd, z+kd]) = 0$, which implies that $\p((-\infty, +\infty)) =0 $. Contradiction. So for any $y < z \in \R$, $\p([y,z]) \neq 0$.
\end{IEEEproof}

\subsection{Step 3}
In this subsection, we show that for any $\p \in \sP_{\text{sym}}$ with
\begin{align}
	V(\p) \triangleq \int_{x \in \R} \loss(x)  \p(dx) < + \infty,
\end{align}
we can use a sequence of probability measures $\{ \p_i \in \sP_{\text{sym}} \}_{i \ge 1}$ with symmetric piecewise constant probability density functions to approximate $\p$ with $\lim_{i \to +\infty} V(\p_i) = V(\p)$.

\begin{lemma}\label{lem:approx}
	Given $\p \in \sP_{\text{sym}}$ with $V(\p) < + \infty$, any positive integer $i \in N$, define  $\p_i$  as the probability distribution with a symmetric probability density function $f_i(x)$ defined as
	\begin{align}
		f_i(x) = \begin{cases}
			a_k \triangleq \frac{ \p([k\frac{D}{i}, (k+1)\frac{D}{i} )}{\frac{D}{i}}  & x\in [k\frac{D}{i}, (k+1)\frac{D}{i} ) \;  \text{for} \; k \in \N \\
			f_i(-x) &   x < 0
		\end{cases} 	
	\end{align}

	Then $\p_i \in \sP_{\text{sym}}$ and
	\begin{align}
		\lim_{i \to +\infty} V(\p_i) = V(\p).
	\end{align}
	
\end{lemma}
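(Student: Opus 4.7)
The plan is to verify $\p_i \in \sP_{\text{sym}}$ and then prove $V(\p_i) \to V(\p)$. For membership: non-negativity of $f_i$ is immediate, symmetry is forced by the definition, and normalization follows from $\int_\R f_i\,dx = 2\sum_{k\ge 0} a_k(\D/i) = 2\sum_{k\ge 0}\p(I_k) = 2\p([0,\infty)) = 1$ (writing $I_k := [k\D/i,(k+1)\D/i)$), where the last equality uses the symmetry of $\p$ together with $\p(\{0\})=0$ from Lemma~\ref{lem:no_point_mass}. For the DP constraint \eqref{eqn:dpconstraintfinal}, the piecewise-constant structure of $f_i$ reduces the bound $f_i(x)\le e^\e f_i(y)$ for $|x-y|\le \D$ to discrete inequalities $a_k \le e^\e a_j$ whenever $I_k, I_j$ contain points within distance $\D$. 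A short interval computation shows this forces $|k-j|\le i$, and then $d := (j-k)\D/i$ satisfies $|d|\le \D$ and $I_k + d = I_j$; applying the DP constraint on $\p$ with this shift gives $\p(I_k) \le e^\e \p(I_j)$, hence $a_k \le e^\e a_j$ after dividing by $\D/i$. Integrating this pointwise density bound recovers \eqref{eqn:dpconstraintfinal} for $\p_i$.

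For $V(\p_i) \to V(\p)$, I would introduce a coupling. Let $X \sim \p$ and $U \sim \mathrm{Unif}[0,1)$ be independent of $X$; set $K := \lfloor Xi/\D\rfloor$ and define $Y := (K+U)\D/i$. A direct computation of $\Pr(Y\in B)$ for $B\subseteq I_k$ shows that $Y$ has density $f_i$, so $Y\sim \p_i$; by construction $X, Y\in I_K$, giving $|Y - X|\le \D/i$. Hence $|V(\p_i)-V(\p)| \le \mathbb{E}|\loss(Y)-\loss(X)|$, and the task reduces to dominated convergence. Pointwise, $|Y - X|\le \D/i\to 0$ gives $Y\to X$ almost surely; since $\loss$ is monotone on $[0,\infty)$, it has only countably many discontinuities, a set that carries $\p$-measure zero by Lemma~\ref{lem:no_point_mass}, so $\loss(Y)\to \loss(X)$ almost surely.

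The dominating function is where Property~\ref{property2} enters. Evenness and monotonicity of $\loss$ on $[0,\infty)$ give $|\loss(Y)-\loss(X)| \le 2\loss(|X|+\D)$. Setting $C := \sup_{x\ge T}\loss(x+1)/\loss(x) < \infty$ and iterating $\lceil \D\rceil$ times yields $\loss(|X|+\D)\le C^{\lceil\D\rceil}\loss(|X|)$ on $\{|X|\ge T\}$, while $\loss(|X|+\D)\le \loss(T+\D)$ uniformly on $\{|X|<T\}$. So the dominating function is $\p$-integrable with $\mathbb{E}[2\loss(|X|+\D)] \le 2C^{\lceil\D\rceil}V(\p) + 2\loss(T+\D) < \infty$, and dominated convergence completes the proof. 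The main obstacle is precisely this last step: without Property~\ref{property2} one could have cost functions with $\int\loss(|X|+\D)\,d\p = \infty$ even though $V(\p)<\infty$, and the coupling argument would collapse; Lemma~\ref{lem:no_point_mass} is equally essential for the almost-sure continuity of $\loss$ along $\p$. Everything else---normalization, the distribution of $Y$, and DP propagation through block-averaging---is bookkeeping.
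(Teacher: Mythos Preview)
Your argument is correct, and for the convergence half it is genuinely different from the paper's. For membership in $\sP_{\text{sym}}$ you and the paper do essentially the same thing: normalize via symmetry and Lemma~\ref{lem:no_point_mass}, then propagate the DP inequality to the block averages $a_k$ using the shift $d=(j-k)\D/i$. (A small omission: you should also treat the cross-sign case $x\ge 0$, $y<0$; here $k+j\le i$, so $|j-k|\le i$ still holds, and symmetry of $\p$ together with the no-atom lemma lets you identify $\p(-I_j)$ with $\p(I_j)$. The paper glosses over this too.)

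For $V(\p_i)\to V(\p)$, the paper argues by truncation: it fixes $T^*$ so the $\p$-tail beyond $T^*$ is small, uses Property~\ref{property2} to show the $\p_i$-tail beyond $T^*$ is uniformly small in $i$, and then invokes Riemann--Stieltjes convergence on the bounded piece $[0,T^*)$. Your coupling route is cleaner: constructing $Y$ with $|Y-X|<\D/i$ and $Y\sim\p_i$ converts the problem into a single dominated-convergence statement, and the dominating function $2\loss(|X|+\D)$ is exactly where Property~\ref{property2} enters. Both proofs use Property~\ref{property2} for the same purpose (uniform tail control of $\loss$ under a bounded shift), but yours packages it more transparently, and it sidesteps a minor sloppiness in the paper's argument (the paper asserts $\p_i([N,N+1))=\p([N,N+1))$, which need not hold unless $N$ and $N+1$ are grid points). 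The price is that your verification that $Y\sim\p_i$ on the negative half-line again leans on symmetry of $\p$ and the no-atom lemma; worth stating explicitly.
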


\begin{IEEEproof}

	First we prove that $\p_i \in \sP_{\text{sym}}$, i.e., $\p_i$ is symmetric and satisfies the differential privacy constraint \eqref{eqn:dpconstraintfinal}.

	By definition $f_i(x)$ is a symmetric and nonnegative function, and
	\begin{align}
		\int_{-\infty}^{+\infty} f_i (x) dx &= 2 \int_{0}^{+\infty} f_i (x) dx  \\
		&= 2 \int_{x \in [0, +\infty)}  \p (dx) \\
		&= 2 \int_{x \in (0, +\infty)}  \p (dx) \label{eqn:limzero} \\
		& = 1,
	\end{align}
where in \eqref{eqn:limzero} we used the fact $\p(\{0\}) = 0$ due to Lemma \ref{lem:no_point_mass}. In addition, due to Lemma \ref{lem:no_point_mass}, $a_k >0, \forall k \in \N$.

	So  $f_i (x)$ is a valid symmetric probability density function, and thus $\p_i$ is a valid symmetric probability distribution.

	Define the density sequence of $\p_i$  as the sequence $\{a_0, a_1, a_2, \dots, a_n, \dots \}$. Since $\p$ satisfies \eqref{eqn:dpconstraintfinal}, it is easy to see that
	\begin{align}
		a_j \le e^{\e} a_{j+k} \; \text{and} \; a_{j+k} \le e^{\e} a_j, \forall j \ge 0, 0 \le k \le i.
	\end{align}

	Therefore, for any $x,y$ such that $|x-y| \le \D$, we have
	\begin{align}
		f_i(x) \le e^{\e} f_i(y) \; \text{and} \; f_i(y) \le e^{\e} f_i(x),
	\end{align}
	which implies that $\p_i$ satisfies \eqref{eqn:dpconstraintfinal}. Hence, $\p_i \in \sP_{\text{sym}}$.

	Next we show that
	\begin{align}
		\lim_{i \to +\infty} V(\p_i) = V(\p).
	\end{align}

	Since $\loss(x)$ satisfies Property \ref{property2}, we can assume there exists a constant $B > 0$ such that
\begin{align}
 	\loss(x+1) \le B \loss(x), \forall x \ge T.
 \end{align}

Given $\delta >0$, since $V(\p)$ is finite, there exists integer $T^* >T$ such that
\begin{align}
	\int_{x \ge T^*} \loss(x)  \p(dx) < \frac{\delta}{B}.
\end{align}

For any integers $i \ge 1 $, $N \ge T^* $,
\begin{align}
	\int_{x \in [N,N+1)} \loss(x) \p_i (dx) & \le \p_i([N,N+1)) \loss(N+1) \\
	&= \p([N,N+1)) \loss(N+1) \\
	&\le \int_{x \in [N,N+1)} B \loss(x) \p  (dx).
\end{align}

Therefore,
\begin{align}
	\int_{x \in [T^*, +\infty)} \loss(x) \p_i (dx) &\le \int_{x \in [T^*, +\infty)} B \loss(x) \p (dx) \\
	&\le B \frac{\delta}{B} \\
	&= \delta.
\end{align}

For $x\in [0, T^*)$, $\loss(x)$ is a bounded function, and thus by the definition of  Riemann$\text{-}$Stieltjes integral, we have
\begin{align}
	\lim_{i \to \infty} \int_{x \in [0, T^*)} \loss(x) \p_i (dx) = \int_{x \in [0, T^*)} \loss(x) \p (dx).
\end{align}

So  there exists a sufficiently large integer  $i^*$ such that for all $i \ge i^*$
\begin{align}
	\left | \int_{x \in [0, T^*)} \loss(x) \p_{i} (dx) - \int_{x \in [0, T^*)} \loss(x) \p (dx) \right |  \le \delta.
\end{align}

Hence, for all $i \ge i^*$
\begin{align}
	&\; |V(\p_i) - V(\p)| \\
	=&\; \left | \int_{x \in \R} \loss(x) \p_{i} (dx) - \int_{x \in \R} \loss(x) \p (dx) \right | \\
	=&\; 2 \left |\int_{x \in [0, T^*)} \loss(x) \p_{i} (dx) - \int_{x \in [0, T^*)} \loss(x) \p(dx) + \int_{x \in [T^*, +\infty)} \loss(x) \p_{i} (dx)  - \int_{x \in [T^*, +\infty)} \loss(x) \p (dx) \right | \\
	\le &\; 2 \left |\int_{x \in [0, T^*)} \loss(x) \p_{i} (dx) - \int_{x \in [0, T^*)} \loss(x) \p(dx) \right |  + 2  \int_{x \in [T^*, +\infty)} \loss(x) \p_{i} (dx)  + 2 \int_{x \in [T^*, +\infty)} \loss(x) \p (dx) \\
	\le &\; 2 (\delta + \delta + \frac{\delta}{B}) \\
	\le &\; (4+ \frac{2}{B}) \delta.
\end{align}

Therefore,
\begin{align}
	\lim_{i \to +\infty}    \int_{x \in \R } \loss(x) \p_i (dx) = \int_{x \in \R } \loss(x) \p (dx).
\end{align}

\end{IEEEproof}

Define $\sPsymi  \triangleq \{\p_i | \p \in \sP_{\text{sym}}\}$ for $ i \ge 1$, i.e., $\sPsymi$ is the set of probability distributions satisfying differential privacy constraint \eqref{eqn:dpconstraintfinal} and having symmetric piecewise constant (over intervals $[k\frac{\D}{i}, (k+1)\frac{\D}{i} ) \; \forall k \in \N$ ) probability density functions.

Due to Lemma \ref{lem:approx},
\begin{lemma}\label{lem:piecewiseconstant}
	\begin{align}
	V^*  = \inf_{\p \in \cup_{i=1}^{\infty} \sPsymi} \int_{x \in \R } \loss(x) \p (dx).
\end{align}
\end{lemma}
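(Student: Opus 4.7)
The plan is to observe that Lemma \ref{lem:piecewiseconstant} is essentially a direct corollary of Lemma \ref{lem:symmetric} and Lemma \ref{lem:approx}, combined with a straightforward diagonalization to pass from approximating a single probability distribution to approximating the infimum. First I would note that by Lemma \ref{lem:symmetric}, $V^* = \inf_{\p \in \sP_{\text{sym}}} \int \loss(x)\p(dx)$, so the goal reduces to showing that restricting from $\sP_{\text{sym}}$ to the subcollection $\cup_{i \ge 1} \sPsymi$ does not change the infimum.

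For the easy inequality, I would point out that by the construction in Lemma \ref{lem:approx}, each $\p_i \in \sPsymi$ built from some $\p \in \sP_{\text{sym}}$ is itself an element of $\sP_{\text{sym}} \subseteq \sP$. Hence $\cup_{i \ge 1} \sPsymi \subseteq \sP_{\text{sym}} \subseteq \sP$, which immediately gives
\begin{align}
\inf_{\p \in \cup_{i \ge 1} \sPsymi} \int_{x \in \R} \loss(x)\p(dx) \;\ge\; V^*.
\end{align}

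For the reverse inequality, I would split into cases. If $V^* = +\infty$, the inequality is trivial. Otherwise $V^* < +\infty$, and for any $\delta > 0$ there exists $\p \in \sP_{\text{sym}}$ with $V(\p) \le V^* + \delta < +\infty$. Applying Lemma \ref{lem:approx} to this $\p$ produces a sequence $\{\p_i\}_{i \ge 1} \subseteq \cup_{j \ge 1} \sPsymi$ with $V(\p_i) \to V(\p) \le V^* + \delta$, so for $i$ large enough $V(\p_i) \le V^* + 2\delta$, giving $\inf_{\p' \in \cup_j \sPsymi} V(\p') \le V^* + 2\delta$. Letting $\delta \to 0$ closes the argument. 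A diagonal selection over a sequence $\p^{(k)} \in \sP_{\text{sym}}$ with $V(\p^{(k)}) \to V^*$ works equally well if one prefers a single approximating sequence.

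There is no real obstacle here — the technical content was already absorbed into Lemma \ref{lem:approx}, whose proof handled the tail-control of $\loss$ via Property \ref{property2} (the uniform bound $\loss(x+1) \le B\loss(x)$ for $x \ge T$), the absence of point masses via Lemma \ref{lem:no_point_mass}, and the convergence of the Riemann–Stieltjes sum on the bounded interval $[0,T^*)$. The only subtlety worth flagging in the proof is that one must first argue $V(\p) < +\infty$ before invoking Lemma \ref{lem:approx}; this is automatic once $V^*$ is finite and one chooses $\p$ close enough to the infimum.
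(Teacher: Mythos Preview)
Your proposal is correct and takes essentially the same approach as the paper: the paper's entire proof is the phrase ``Due to Lemma \ref{lem:approx},'' and you have simply spelled out the routine $\delta$-argument and the easy containment $\cup_{i \ge 1} \sPsymi \subseteq \sP_{\text{sym}} \subseteq \sP$ that make this citation work.
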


Therefore, to characterize $V^*$, we only need to study probability distributions with symmetric and piecewise constant probability density functions.

\subsection{Step 4}

 Next we show that indeed we only need to consider those probability distributions with symmetric piecewise constant probability density functions which are \emph{monotonically decreasing} when $x \ge 0$.

\begin{lemma}\label{lem:monotone}
  Given $\p_a \in \sPsymi$ with symmetric piecewise constant probability density function $f(\cdot)$, let $\{a_0,a_1,\dots,a_n,\dots\}$ be the density sequence of $f(\cdot)$, i.e,
  \begin{align}
  	f(x) = a_k, x \in [k\frac{\D}{i}, (k+1)\frac{\D}{i} ) \; \forall k \in \N.
  \end{align}
   Then we can construct a new probability distribution $\p_b \in \sPsymi$ the probability density function of which is monotonically decreasing when $x \ge 0$, and
  \begin{align}
   	\int_{x \in \R } \loss(x) \p_b  (dx) \le  \int_{x \in \R } \loss(x) \p_a(dx).
   \end{align}
\end{lemma}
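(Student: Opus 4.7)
The plan is to take $\p_b$ to be the distribution whose symmetric piecewise-constant density on the grid $\{k\D/i\}_{k \in \N}$ has density sequence $\{b_k\}_{k \ge 0}$ equal to the monotone decreasing rearrangement of $\{a_k\}_{k \ge 0}$. Since $\sum_{k} a_k \cdot (\D/i) = 1/2 < \infty$, the sequence $\{a_k\}$ converges to zero and its supremum is attained at each stage, so the rearrangement is well-defined and $\p_b$ has total mass $1$, is symmetric, and is piecewise constant on the same grid. Once $\p_b \in \sPsymi$ is verified, the cost inequality follows from the rearrangement inequality: writing the cost of either distribution as $2 \sum_{k \ge 0} (\cdot)_k\, c_k$ with $c_k := \int_{k\D/i}^{(k+1)\D/i} \loss(x)\,dx$, monotonicity of $\loss$ on $[0,\infty)$ (Property \ref{property1}) makes $\{c_k\}$ non-decreasing, and pairing a non-increasing sequence with a non-decreasing sequence minimizes the inner product, giving $\sum_k b_k c_k \le \sum_k a_k c_k$.

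The main obstacle is checking that the rearranged density still satisfies \eqref{eqn:dpconstraintfinal}. Reading the constraint off the piecewise-constant density produces two kinds of inequalities on the density sequence: a same-sign one $a_k / a_{k'} \le e^{\e}$ whenever $|k - k'| \le i$, and an across-zero one $a_k / a_{k'} \le e^{\e}$ whenever $k + k' \le i - 1$; the latter is always implied by the former since $|k - k'| \le k + k'$. Because $\{b_k\}$ is non-increasing, the same-sign family further collapses to the single condition $b_j \le e^{\e} b_{j+i}$ for every $j \ge 0$, which is all that remains to verify.

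I will prove this by contradiction. Suppose $b_j > e^{\e} b_{j+i}$ for some $j \ge 0$, and set $\beta := b_{j+i}$, $A := \{k \in \N : a_k > e^{\e} \beta\}$, and $B := \{k \in \N : a_k > \beta\}$. Since sorting preserves the multiset, $|A| \ge j + 1$ (because $b_0, \ldots, b_j$ all exceed $e^{\e} \beta$) while $|B| \le j + i$ (because at most $j+i$ entries of the sorted sequence strictly exceed $b_{j+i}$). Lemma \ref{lem:no_point_mass} ensures $\beta > 0$, and since $a_k \to 0$ the set $A$ is finite, so $k^* := \max A$ exists; also $A \subseteq B$. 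Applying the DP constraint on $\p_a$ at $k^*$ gives $a_{k^* + l} \ge a_{k^*}/e^{\e} > \beta$ for every $0 \le l \le i$, so the forward window $\{k^*, k^*+1, \ldots, k^* + i\} \subseteq B$. Since $k^*$ is the \emph{maximum} of $A$, this window meets $A$ only at $k^*$ itself, contributing $i$ fresh indices to $B \setminus A$. Therefore $|B| \ge |A| + i \ge j + i + 1$, contradicting $|B| \le j + i$. This establishes $b_j \le e^{\e} b_{j+i}$ for all $j$, hence $\p_b \in \sPsymi$, completing the construction.
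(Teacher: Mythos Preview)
Your proposal is correct, and the construction (decreasing rearrangement of $\{a_k\}$) together with the cost comparison via the rearrangement inequality matches the paper exactly. Where you diverge is in the verification of the privacy constraint $b_j \le e^{\e} b_{j+i}$: the paper argues inductively on $j$, at each step running a small window-sliding algorithm to locate an index $j^* \le j$ for which the forward window $\{\pi(j^*),\dots,\pi(j^*)+i\}$ in the original sequence avoids all earlier $\pi(\cdot)$ values, and then uses $a_{\pi(j^*)}/a^*(\pi(j^*)) \le e^{\e}$ together with a rank comparison. Your level-set counting argument (bounding $|A|$ and $|B|$ and exhibiting $i$ indices in $B\setminus A$ via the forward window at $\max A$) reaches the same conclusion in one stroke without induction or an auxiliary algorithm; it is shorter and, to my eye, cleaner. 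The paper's approach has the mild advantage of being constructive (it identifies a specific witness window for each $j$), but for the purpose of proving the lemma your contradiction argument is entirely sufficient.
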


\begin{proof}
Since $a_k > 0, \forall k \in \N$, and
\begin{align}
	\sum_{k=0}^{+\infty} a_k \frac{\D}{i} = \frac{1}{2},
\end{align}
we have $\lim_{k \to +\infty} a_k = 0$.

Given the density sequence $\{a_0,a_1,\dots,a_n,\dots \}$, construct a new monotonically decreasing  density sequence $\{b_0,b_1,\dots,b_n,\dots \}$ and a bijective mapping $\pi: \N \to \N$ as follows
\begin{align}
	I_0 &= \argmax_{k \in \N} a_k, \label{eqn:max1}\\
	\pi (0) &= \min_{n \in I_0} n,  \text{i.e., the smallest element in} \;  I_0, \\
	b_0 &= a_{\pi(0)}, \\
& \\
 \forall    m \in \N  & \; \text{and} \; m \ge 1, \\
		I_m &= \argmax_{k \in \N \backslash \{ \pi(j) |  j< m \}} a_k,  \label{eqn:max2} \\
	\pi (m) &= \min_{n \in I_m} n,  \text{i.e., the smallest element in} \; I_m, \\
	b_m &= a_{\pi(m)}.
\end{align}


Since the sequence $\{a_k\}$ converges to 0,  the maximum of $\{a_k\}$ always exists in \eqref{eqn:max1} and \eqref{eqn:max2}. Therefore, $I_m$ is well defined for all $m \in \N$.

Note that since $\sum_{k=0}^{\infty} a_k \frac{\D}{i} = \frac{1}{2}$ and the sequence $\{b_k\}_{k \in \N}$ is simply a permutation of $\{a_k\}_{k \in \N}$, $\sum_{k=1}^{\infty} b_k \frac{\D}{i} = \frac{1}{2}$.

Therefore, if we define a function $g(\cdot)$ as
	\begin{align}
		g(x) = \begin{cases}
			b_k  & x\in [k\frac{D}{i}, (k+1)\frac{D}{i} ) \;  \text{for} \; k \in \N \\
			g(-x) &   x < 0
		\end{cases} 	
	\end{align}
 then $g(\cdot)$  is a valid symmetric probability density function, and
  \begin{align}
   	\int_{x\in \R} \loss(x) g(x) dx \le \int_{x\in \R} \loss(x) f(x) dx.
   \end{align}

Next, we prove that the probability distribution $\p_b$ with probability density function $g(\cdot)$   satisfies the differential privacy constraint \eqref{eqn:dpconstraintfinal}. Since $\{b_k\}_{k \in \N}$ is a monotonically  decreasing sequence, it is sufficient and necessary to prove that  for all $k \in \N$,
\begin{align}
		\frac{b_k}{b_{k+i}} \le e^{\e}.
\end{align}

To simplify notation, given $k$, we define
\begin{align}
	a^*(k) = \min_{ k \le j \le k+i} a_k,
\end{align}
i.e., $a^*(k)$ denotes the smallest number of  $\{ a_{k}, a_{k+1}, \dots, a_{k+i} \}$.

First, when $k = 0$, it is easy to prove that $\frac{b_0}{b_{i}} \le e^{\e}$. Indeed, recall that $b_0 = a_{\pi(0)}$ and consider the $i+1$ consecutive numbers $\{a_{\pi(0)}, a_{\pi(0)+1}, \dots, a_{\pi(0)+i} \}$ in the original sequence $\{a_k\}_{k \in \N}$.    Then  $a^*(0) \le b_i$, since $b_i$ is the $(i+1)$th largest number in the sequence $\{a_k\}_{k \in \N}$. Therefore,
\begin{align}
	\frac{b_0}{b_{i}}  = \frac{a_{\pi(0)}}{b_{i}} \le  \frac{a_{\pi(0)}}{a^*(0)} \le e^{\e}.
\end{align}

For $k = 1$, $b_1 = a_{\pi(1)}$ and consider the $i+1$ consecutive numbers $\{a_{\pi(1)}, a_{\pi(1)+1}, \dots, a_{\pi(1)+i} \}$. If $\pi(0) \notin [\pi(1), \pi(1)+i]$, then $a^*(\pi(1)) \le b_{i+1}$, and thus
\begin{align}
	\frac{b_1}{b_{i+1}}  = \frac{a_{\pi(1)}}{b_{1+i}} \le  \frac{a_{\pi(1)}}{a^*(\pi(1))} \le e^{\e}.
\end{align}
If $\pi(0) \in [\pi(1), \pi(1)+i]$, then $a^*(\pi(0)) \le b_{i+1}$ and $\frac{a_{\pi(0)}}{a^*(\pi(0))} \le e^{\e}$. Therefore,
\begin{align}
	\frac{b_1}{b_{i+1}}  \le \frac{b_0}{b_{1+i}} \le  \frac{b_0}{a^*(\pi(0))} \le e^{\e}.
\end{align}
Hence, $\frac{b_k}{b_{k+i}} \le e^{\e}$ holds for $k = 1$.

In general, given $k$,  we prove $\frac{b_k}{b_{k+i}} \le e^{\e}$ as follows. First, if $\pi_{j} \notin [\pi(k), \pi(k)+i], \forall j <k$, then $a^*{\pi(k)} \le b_{k+i}$, and hence
\begin{align}
	\frac{b_k}{b_{i+k}}  = \frac{a_{\pi(k)}}{b_{i+k}} \le  \frac{a_{\pi(k)}}{a^*(\pi(k))} \le e^{\e}.
\end{align}
If there exists $j < k$ and $\pi_{j} \in [\pi(k)+1, \pi(k)+i]$, we use   Algorithm \ref{algo:1} to compute a number $j^*$ such that $ j^* < k $ and $\pi_{j} \notin [\pi(j^*)+1, \pi(j^*)+i], \forall j<k$.

\begin{algorithm}
\caption{}
\label{algo:1}
\begin{algorithmic}
\State $j^*  \gets k$
\While{ there exists some $j <k $ and $\pi_{j} \in [\pi(j^*)+1, \pi(j^*)+i]$}
	\State $j^*  \gets j$
\EndWhile
\State Output $j^*$
\end{algorithmic}
\end{algorithm}

It is easy to show that the   loop in Algorithm \ref{algo:1} will  terminate after at most $k$ steps.

After finding $j^*$, we have $j^* < k$, and $a^*(\pi(j^*)) \le b_{k+i}$. Therefore
\begin{align}
 	\frac{b_k}{b_{i+k}}  \le \frac{a_{\pi(j^*)}}{b_{i+k}} \le  \frac{a_{\pi(j^*)}}{a^*(\pi(j^*))} \le e^{\e}.
 \end{align}

 So $\frac{b_k}{b_{k+i}} \le e^{\e} $  holds for all $k \in \N$. Therefore, $\p_b \in \sPsymi$.

 This completes the proof of Lemma \ref{lem:monotone}.

\end{proof}

Therefore, if we define
\begin{align}
	\pa \triangleq \{ \p | \p \in \sPsymi, \; \text{and} \; \text{the density sequence of}\; \p \; \text{is  monotonically decreasing} \},
\end{align}
  then
due to Lemma \ref{lem:monotone},
\begin{lemma}\label{lem:monotonelem}
	\begin{align}
	V^* = \inf_{\p \in \cup_{i=1}^{\infty} \pa}   \int_{x \in \R}  \loss(x)  \p(dx) .
\end{align}
\end{lemma}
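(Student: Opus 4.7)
The target statement is essentially a direct corollary of the two preceding lemmas (Lemma \ref{lem:piecewiseconstant} and Lemma \ref{lem:monotone}), so the plan is to establish the two inequalities $\inf_{\p \in \cup_{i \ge 1} \pa} V(\p) \ge V^*$ and $\inf_{\p \in \cup_{i \ge 1} \pa} V(\p) \le V^*$, where I abbreviate $V(\p) := \int_{x \in \R} \loss(x)\, \p(dx)$.

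The first inequality is just set inclusion. By the definition of $\pa$, every member of $\pa$ is also a member of $\sPsymi$, so $\cup_{i \ge 1} \pa \subseteq \cup_{i \ge 1} \sPsymi$. Taking an infimum over a smaller set can only be larger, and combining with Lemma \ref{lem:piecewiseconstant}, which identifies $V^* = \inf_{\p \in \cup_i \sPsymi} V(\p)$, yields $\inf_{\p \in \cup_i \pa} V(\p) \ge V^*$.

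For the reverse inequality, I would pick an arbitrary $\p \in \cup_{i \ge 1} \sPsymi$, say $\p \in \sPsymi$ for some $i$. By Lemma \ref{lem:monotone}, there exists a probability distribution $\p_b \in \sPsymi$ whose density sequence is monotonically decreasing (so $\p_b \in \pa$ by definition) and satisfies $V(\p_b) \le V(\p)$. Hence $\inf_{\p' \in \pa} V(\p') \le V(\p)$ for every $\p \in \sPsymi$, and taking the infimum over $\p \in \cup_{i \ge 1} \sPsymi$ together with Lemma \ref{lem:piecewiseconstant} gives $\inf_{\p' \in \cup_i \pa} V(\p') \le V^*$.

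Since the two preceding lemmas do all the real work (discretization to piecewise constant densities, and monotone rearrangement while preserving the differential privacy constraint at scale $\D/i$), there is no genuine obstacle here. The only point worth checking carefully is that the rearranged distribution $\p_b$ produced by Lemma \ref{lem:monotone} actually lies in the same class $\sPsymi$ as $\p_a$ (i.e., the discretization parameter $i$ is preserved by the rearrangement), which is precisely what the statement of Lemma \ref{lem:monotone} asserts; this is what lets the inclusion $\pa \subseteq \sPsymi$ and the inequality $V(\p_b) \le V(\p_a)$ be invoked together without issue.
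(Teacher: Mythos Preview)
Your proposal is correct and is precisely the argument the paper intends: the paper records Lemma \ref{lem:monotonelem} as an immediate consequence (``due to Lemma \ref{lem:monotone}''), and your two-inequality write-up simply spells out that deduction, using Lemma \ref{lem:piecewiseconstant} for the value $V^* = \inf_{\p \in \cup_i \sPsymi} V(\p)$ and Lemma \ref{lem:monotone} for the monotone rearrangement within the same $\sPsymi$.
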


\subsection{Step 5}

Next we show that among all symmetric piecewise constant probability density functions, we only need to consider those which are periodically decaying.

More precisely, given positive integer $i$,
\begin{align}
	\pb \triangleq \{ \p | \p \in \pa, \; \text{and} \; \p \text{ has   density sequence}\;  \{a_0,a_1,\dots,a_n,\dots,\} \; \text{satisfying}  \frac{a_k}{a_{k+i}} = e^{\e}, \forall k \in \N \},
\end{align}
then
\begin{lemma}\label{lem:pd}
\begin{align}
	V^* = \inf_{ \p  \in \cup_{i=1}^{\infty} \pb}  \int_{x \in \R} \loss(x)  \p(dx).
\end{align}	
\end{lemma}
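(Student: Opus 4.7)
The plan is to establish the non-trivial direction $\inf_{\pa} \ge \inf_{\pb}$, since $\pb \subset \pa$ gives the reverse inequality. Given any $\p \in \pa$ with monotone decreasing density sequence $\{a_k\}_{k \ge 0}$ satisfying the DP constraint $a_k \le e^{\e} a_{k+i}$, I will construct an explicit $\p' \in \pb$ whose cost is no larger, which shows $\inf_{\pb}$ is at most the cost of any element of $\pa$.

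The construction I propose is: for $0 \le k < i$, set $b_k := (1-e^{-\e}) \sum_{m \ge 0} a_{k+mi}$, and extend periodically by $b_{k+ji} := e^{-j\e} b_k$ for $j \ge 1$. The periodic decay $b_{k+i}/b_k = e^{\e}$ is baked in, so once I verify the remaining structural properties the resulting $\p'$ lies in $\pb$. Normalization follows by telescoping, $\sum_{k \ge 0} b_k = (1-e^{-\e})^{-1} \sum_{k=0}^{i-1} b_k = \sum_k a_k$. Monotonicity within the first block is inherited directly from $\{a_k\}$, and at the block transition $b_{i-1} \ge e^{-\e} b_0$ I will reduce to the termwise inequality $a_{i-1+mi} \ge e^{-\e} a_{mi}$, which follows from monotonicity $a_{(m+1)i-1} \ge a_{(m+1)i}$ together with the DP bound $a_{(m+1)i} \ge e^{-\e} a_{mi}$ (with the $m=0$ case using $a_{i-1} \ge a_i \ge e^{-\e} a_0$).

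The heart of the argument is the cost comparison. Let $L_k := \int_{k\D/i}^{(k+1)\D/i} \loss(x)\, dx$, which is nondecreasing in $k$ by Property~\ref{property1}. Splitting by residue class modulo $i$, the target inequality $\sum_k b_k L_k \le \sum_k a_k L_k$ decouples into $i$ parallel statements of the form
\[
\sum_{m \ge 0} \alpha_m L_{k+mi} \;\ge\; \Bigl(\sum_{m \ge 0} \alpha_m\Bigr) (1-e^{-\e}) \sum_{j \ge 0} e^{-j\e} L_{k+ji}, \qquad \alpha_m := a_{k+mi}.
\]
I will prove each of these by showing that the normalized measure $\alpha_m / \sum_m \alpha_m$ on $\N$ stochastically dominates the geometric law $(1-e^{-\e}) e^{-m\e}$; since $L_{k+mi}$ is nondecreasing in $m$, the standard coupling characterization of first-order stochastic dominance then delivers the desired inequality.

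The main obstacle, and the only nontrivial estimate, is the tail bound $\sum_{m \ge N} \alpha_m \ge e^{-N\e} \sum_{m \ge 0} \alpha_m$ for every $N \ge 0$. My plan is to extract it from the DP constraint alone: iterating $\alpha_m \le e^{\e} \alpha_{m+1}$ gives $\alpha_m \le e^{(N-m)\e} \alpha_N$ for $m < N$, so $\sum_{m < N} \alpha_m \le \alpha_N \, e^{\e}(e^{N\e}-1)/(e^{\e}-1)$; iterating in the other direction gives $\alpha_{N+k} \ge e^{-k\e} \alpha_N$, so $\sum_{m \ge N} \alpha_m \ge \alpha_N \, e^{\e}/(e^{\e}-1)$. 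The ratio of these two bounds is $1/(e^{N\e}-1)$, which rearranges exactly to the target dominance. Summing the resulting pointwise inequalities over the $i$ residue classes yields the cost comparison, and since $\p' \in \pb$ this closes the argument.
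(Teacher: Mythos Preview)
Your proposal is correct and takes a genuinely different route from the paper's proof. The paper argues by \emph{sequential local modification}: first rescale $a_0$ up and $a_1,a_2,\ldots$ down until $a_0/a_i=e^\e$, then for each $k\ge 1$ transfer mass from $a_{k+i}$ to $a_k$ until $a_k/a_{k+i}=e^\e$, checking at each step that cost does not increase. This is conceptually simple but leaves implicit how the infinitely many steps compose (later steps perturb indices that earlier steps have already fixed).

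Your approach instead gives a \emph{one-shot global projection}: you replace the sequence along each residue class $\{a_{k+mi}\}_{m\ge 0}$ by the geometric sequence with the same total mass, and you prove the cost comparison via stochastic dominance. The key estimate---that the DP constraint $\alpha_{m+1}\ge e^{-\e}\alpha_m$ alone forces the tail bound $\sum_{m\ge N}\alpha_m \ge e^{-N\e}\sum_{m\ge 0}\alpha_m$---is clean and uses nothing beyond the constraint itself. This makes your argument more self-contained and avoids any limit or inductive bookkeeping; in exchange, the paper's step-by-step argument is perhaps easier to discover, since each move is a transparent ``shift mass toward the origin'' operation.
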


\begin{IEEEproof}
	Due to Lemma \ref{lem:monotonelem}, we only need to consider probability distributions with symmetric and piecewise constant probability density functions which are monotonically decreasing for $x \ge 0$.
 	
 	We first show that given $\p_a \in \pa$ with density sequence $\{a_0,a_1,\dots,a_n,\dots,\}$, if $\frac{a_{0}}{a_{i}} < e^{\e}$, then we can construct a probability distributions $\p_b \in \pa$ with density sequence $\{b_0,b_1,\dots,b_n,\dots,\}$ such that $\frac{b_{0}}{b_{i}} = e^{\e}$ and
 	\begin{align}
 		V(\p_a) \ge V(\p_b).
 	\end{align}

 	Define a new sequence $\{b_0,b_1,\dots,b_n,\dots\}$ by scaling up $a_0$ and scaling down $\{a_1,a_2,\dots\}$. More precisely, let  $\delta = \frac{i}{2D ( (\frac{i}{2D}-a_0)e^{-\e}\frac{a_0}{a_i} + a_0  )} - 1 >0$, and set
 	\begin{align}
 	 	b_0 &= a_0 (1 + \delta), \\
 	 	b_k &= a_k ( 1 - \delta'), \forall \; k \ge 1,
 	 \end{align}
 	 where $ \delta' \triangleq \frac{a_0 \delta}{ \frac{i}{2D} - a_0} >0$,
 	 and we have chosen $\delta$ such that $\frac{b_0}{b_i} = \frac{a_0}{a_k} \frac{\frac{i}{2D} - a_0}{ \frac{i}{2D(1+\delta)} - a_0} = e^{\e}$.

 	It is easy to see the sequence $\{b_0,b_1,\dots,b_n,\dots,\}$ correspond to a valid probability density function and it also satisfies the differential privacy constraint \eqref{eqn:dpconstraintfinal}, i.e.,
 	\begin{align}
 		\frac{b_k}{b_{k+i}} \le e^{\e}, \forall k \ge 0.
 	\end{align}

 	Let $\p_b$ be the probability distribution with $\{b_0,b_1,\dots,b_n,\dots,\}$ as the density sequence of its probability density function. Next we show  $V(\p_b) \le V(\p_a)$.

 	It is easy to compute $V(\p_a)$, which is
 	\begin{align}
 		V(\p_a) = 2\frac{\D}{i} \left( a_0 \int_{0}^{\frac{\D}{i}} \loss(x) dx  + \sum_{k=1}^{\infty} a_k \int_{k\frac{\D}{i}}^{(k+1)\frac{\D}{i}}    \loss(x) dx \right).
 	\end{align}

 	Similarly, we can compute $V(\p_b)$ by
 	\begin{align}
 		V(\p_b) &= 2\frac{\D}{i} \left( b_0 \int_{0}^{\frac{\D}{i}} \loss(x) dx  + \sum_{k=1}^{\infty} b_k \int_{k\frac{\D}{i}}^{(k+1)\frac{\D}{i}}    \loss(x) dx \right) \\
 		&= V(\p_a) + 2\frac{\D}{i}  \left(a_0 \delta \int_{0}^{\frac{D}{i}} \loss(x) dx  -  \delta' \sum_{k=1}^{\infty} a_k \int_{k\frac{D}{i}}^{(k+1)\frac{D}{i}}    \loss(x) dx \right) \\
 		&= V(\p_a) +  2\frac{\D}{i} \frac{a_0 \delta}{ \frac{i}{2\D} - a_0  } \left( \sum_{k=1}^{\infty} a_k \int_{0}^{\frac{\D}{i}} \loss(x) dx - \sum_{k=1}^{\infty} a_k \int_{k\frac{\D}{i}}^{(k+1)\frac{\D}{i}}   \loss(x) dx \right) \\
 		&= V(\p_a) +  2\frac{\D}{i} \frac{a_0 \delta}{ \frac{i}{2\D} - a_0  }  \sum_{k=1}^{\infty} a_k \left(  \int_{0}^{\frac{\D}{i}} \loss(x) dx -     \int_{k\frac{\D}{i}}^{(k+1)\frac{\D}{i}}   \loss(x) dx     \right)  \\
 		&\le V(\p_a),
 	\end{align}
where in the last step we used the fact that $\left(  \int_{0}^{\frac{\D}{i}} \loss(x) dx -     \int_{k\frac{\D}{i}}^{(k+1)\frac{\D}{i}}   \loss(x) dx     \right) \le 0$, since $\loss(\cdot)$ is a monotonically increasing function for $x \ge 0$.

 	Therefore, for given $i \in \N$, we only need to consider $\p \in \pa$ with	 density sequence $\{a_0,a_1,\dots,a_n,\dots\}$ satisfying   $\frac{a_0}{a_i} = e^{\e}$.

 	Next, we argue that among all probability distributions  $\p \in \pa$ with	 density sequence $\{a_0,a_1,\dots,a_n,\dots,\}$ satisfying   $\frac{a_0}{a_i} = e^{\e}$,  we only need to consider those probability distributions with density sequence also  satisfying $\frac{a_1}{a_{i+1}} = e^{\e}$.

 	Given $\p_a \in \pa$ with	 density sequence $\{a_0,a_1,\dots,a_n,\dots\}$ satisfying   $\frac{a_0}{a_i} = e^{\e}$ and $\frac{a_1}{a_{i+1}} < e^{\e}$,  we can construct a new probability distribution $\p_b \in \pa$ with density sequence $\{b_0,b_1,\dots,b_n,\dots\}$  satisfying
 	\begin{align}
 	 	\frac{b_0}{b_i} &= e^{\e}, \\
 	 	\frac{b_1}{b_{i+1}} &= e^{\e},
 	 \end{align}
	and $V(\p_a) \ge V(\p_b)$.

	First, it is easy to see $a_1$ is strictly  less than $a_0$, since if $a_0 = a_1$, then $\frac{a_1}{a_{i+1}} = \frac{a_0}{a_{i+1}} \ge \frac{a_0}{a_i} = e^{\e}$. Then we construct a new density sequence by increasing $a_1$ and decreasing $a_{i+1}$. More precisely, we define a new sequence $\{b_0,b_1,\dots,b_n,\dots\}$ as
	\begin{align}
		b_k &= a_k, \forall k \neq 1, k \neq i+1, \\
		b_1 &= a_1 + \delta, \\
		b_{i+1} &= a_{i+1} - \delta,
	\end{align}
	where $\delta = \frac{e^{\e}a_{i+1} - a_1}{1 + e^{\e}}$ and thus $\frac{b_1}{b_{i+1}} = e^{\e}$.

	It is easy to verify that $\{b_0,b_1,\dots,b_n,\dots\}$ is a valid probability density sequence and the corresponding probability distribution $\p_b$ satisfies the differential privacy constraint \eqref{eqn:dpconstraintfinal}. Moreover, $V(\p_a) \ge V(\p_b)$. Therefore, we only need to consider  $\p \in \pa$ with density sequences $\{a_0,a_1,\dots,a_n,\dots\}$  satisfying $\frac{a_0}{a_i} = e^{\e}$ and $\frac{a_1}{a_{i+1}} = e^{\e}$.


	Use the same argument, we can show that  we only need to consider  $\p \in \pa$ with density sequences $\{a_0,a_1,\dots,a_n,\dots\}$  satisfying
	\begin{align}
		\frac{a_k}{a_{i+k}} &= e^{\e}, \forall k \ge 0.
	\end{align}

 	Therefore,
 	\begin{align}
		V^* = \inf_{ \p  \in \cup_{i=1}^{\infty} \pb}  \int_{x \in \R} \loss(x)  \p(dx).
	\end{align}

\end{IEEEproof}

Due to Lemma \ref{lem:pd}, we only need to consider  probability distribution with symmetric, monotonically decreasing (for $x \ge 0$), and periodically decaying piecewise  constant probability density function. Because of the properties of symmetry and periodically decaying, for  this class of probability distributions, the probability density function over $\R$ is completely determined by the probability density function over the interval $[0, \D)$.

Next, we study what the optimal probability density function should be over the interval $[0,\D)$. It turns out that the optimal probability density function over the interval $[0,\D)$ is a step function. We use the following three steps to prove this result.

\subsection{Step 6}
\begin{lemma}\label{lem:ratio}
	Consider a probability distribution $\p_a \in \pb$ ($i \ge 2$) with density sequence $\{a_0,a_1,\dots, a_n,\dots\}$, and $\frac{a_0}{a_{i-1}} < e^{\e}$. Then there exists a probability distribution $\p_b \in \pb$ with   density sequence $\{b_0,b_1,\dots, b_n,\dots\}$such that  $\frac{b_0}{b_{i-1}} = e^{\e}$, and
	\begin{align}
		V(\p_b) \le  V(\p_a).
	\end{align}
\end{lemma}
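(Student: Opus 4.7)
The plan is to construct $\p_b$ from $\p_a$ by perturbing the density sequence at exactly two indices within each period and then propagating the perturbation through the geometric decay $\frac{a_k}{a_{k+i}} = e^{\e}$. The intuition is that since $\loss$ is monotonically increasing on $[0,\infty)$, shifting mass from position $i-1$ (the far end of a period) toward position $0$ (the near end) lowers the cost, and the periodic-decay constraint forces the same shift to happen in every subsequent period.

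Concretely, I would set $\delta := \frac{e^{\e} a_{i-1} - a_0}{1 + e^{\e}}$, which is strictly positive by the hypothesis $a_0 < e^{\e} a_{i-1}$, and then define
\[
b_{mi} := (a_0 + \delta)\, e^{-m\e}, \quad b_{mi+i-1} := (a_{i-1} - \delta)\, e^{-m\e}, \quad b_{mi+j} := a_{mi+j}
\]
for $j \in \{1,\dots,i-2\}$ and $m \in \N$. The periodic decay $\frac{b_k}{b_{k+i}} = e^{\e}$ then holds at every $k$ by construction; the geometric gains at positions $mi$ cancel the losses at positions $mi+i-1$ so the total mass is preserved; and the ratio $\frac{b_0}{b_{i-1}} = e^{\e}$ is saturated by the choice of $\delta$.

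The delicate step will be verifying the monotonicity $b_{i-1} \ge b_i$ across the period boundary. I would use $a_i = e^{-\e} a_0$, which holds because $\p_a \in \pb$, to rewrite $\delta = \frac{e^{\e}(a_{i-1} - a_i)}{1 + e^{\e}}$, and a short algebraic computation shows that in fact $b_{i-1} = b_i$ exactly. All other monotonicity and differential-privacy inequalities then follow routinely from the monotonicity of $\{a_k\}$ and the positivity of $\delta$, so $\p_b \in \pb$.

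Finally, the cost difference reduces to
\[
V(\p_b) - V(\p_a) = 2\delta \sum_{m=0}^{\infty} e^{-m\e}\left(\int_{m\D}^{m\D + \D/i} \loss(x)\, dx - \int_{(m+1)\D - \D/i}^{(m+1)\D} \loss(x)\, dx\right),
\]
and since $i \ge 2$ the interval $[m\D, m\D + \D/i]$ lies strictly to the left of $[(m+1)\D - \D/i, (m+1)\D]$ and has equal length, so by monotonicity of $\loss$ on $[0,\infty)$ each summand is non-positive, yielding $V(\p_b) \le V(\p_a)$. The main obstacle I anticipate is that a naive perturbation could easily violate $b_{i-1} \ge b_i$; the key observation that makes the construction work is the (non-obvious) algebraic coincidence that the specific $\delta$ which saturates $\frac{b_0}{b_{i-1}} = e^{\e}$ is precisely the one that makes $b_{i-1} = b_i$ as well.
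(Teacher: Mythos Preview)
Your argument is correct and complete, but it uses a genuinely different construction from the paper's. The paper builds $\p_b$ by a \emph{multiplicative} rescaling: it scales $a_0$ up by a factor $\gamma>1$ and scales every one of $a_1,\dots,a_{i-1}$ down by a common factor $\gamma'<1$, choosing $\gamma,\gamma'$ so that total mass is preserved and $\frac{b_0}{b_{i-1}}=e^{\e}$. It then compares costs via the aggregated weights $w_k=\sum_{j\ge 0}e^{-j\e}\int_{(j+k/i)\D}^{(j+(k+1)/i)\D}\loss(x)\,dx$, using only the inequality $w_0\le w_k$ for all $k$. Your construction is an \emph{additive} transfer between just two positions per period, $a_0\mapsto a_0+\delta$ and $a_{i-1}\mapsto a_{i-1}-\delta$, and your cost comparison reduces to a direct interval-versus-interval inequality for $\loss$. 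Your route is arguably cleaner, since it touches fewer coordinates and avoids introducing the $w_k$'s; the paper's route, on the other hand, reuses exactly the machinery (and the $w_k$'s) that are needed anyway in the subsequent Step~7. As a side remark, the identity $b_{i-1}=b_i$ you flag as a surprising ``algebraic coincidence'' is in fact automatic once you notice that $b_0/b_{i-1}=e^{\e}$ together with the periodic decay $b_i=e^{-\e}b_0$ immediately forces $b_{i-1}=b_i$; the same equality holds in the paper's construction for the same reason.
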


\begin{IEEEproof}
	
	For each $ 0 \le k \le (i-1)$, define
	\begin{align}
		w_k \triangleq \sum_{j=0}^{+\infty} e^{-j\e} \int_{(j + \frac{k}{i})\D  }^{(j + \frac{k+1}{i})\D} \loss (x) d x.  \label{eqn:sumw}
	\end{align}

Since $\loss(cdot)$ satisfies Property \ref{property2} and $V^* < \infty$, it is easy to show that the sum of series in \eqref{eqn:sumw} exists and is finite, and thus $w_k$ is well defined for all $ 0 \le k \le (i-1)$. In addition, it is easy to see
\begin{align}
	w_0 \le w_1 \le w_2 \le \cdots \le w_{i-1},
\end{align}
since $\loss (x)$ is a monotonically increasing function when $x \ge 0 $.

Then
\begin{align}
	V(\p_a) = \int_{x \in \R} \loss (x)   \p_a(dx) = 2 \sum_{k=0}^{i-1} w_k a_k.
\end{align}

Since $\frac{a_0}{a_{i-1}} < e^{\e}$, we can scale $a_0$ up and scale $\{a_1,\dots,a_{i-1}\}$ down to derive a new valid probability density function with smaller cost. More precisely, define a new probability measure $\p_b \in \pb$ with   density sequence $\{b_0,b_1,\dots, b_n,\dots\}$ via
\begin{align}
	b_0 &\triangleq \gamma a_0, \\
	b_k &\triangleq \gamma' a_k, \forall 1 \le k \le i-1,
\end{align}
for some $\gamma >1 $ and $\gamma' <1$ such that
\begin{align}
 	\frac{b_0}{b_{i-1}} = e^{\e}.
 \end{align}

To make   $\{b_0,b_1,\dots, b_n,\dots\}$ be a valid density sequence, i.e., to make the integral of the corresponding probability density function over $\R$ be 1, we have
\begin{align}
	\sum_{k=0}^{i-1} b_k = \sum_{k=0}^{i-1} a_k = \frac{1-e^{-\e}}{2} \frac{i}{\D}.
\end{align}

Define $t \triangleq \frac{1-e^{-\e}}{2} \frac{i}{\D}$, then we have two linear equations on $\gamma$ and $\gamma'$:
\begin{align}
	\gamma a_0 &= e^{\e} \gamma' \label{eqn:g1} \\
	\gamma a_0 + \gamma' (t - a_0) &= t. \label{eqn:g2}
\end{align}

From \eqref{eqn:g1} and  \eqref{eqn:g2}, we can easily get
\begin{align}
	\gamma &= \frac{e^{\e} t a_{i-1}}{a_0 (t-a_0+e^{\e}a_{i-1})}  >1 \\
	\gamma' &= \frac{t}{ t-a_0+e^{\e}a_{i-1}}  <1.
\end{align}

Then we can  verify that the $V(\p_a) \ge V(\p_a)$. Indeed,
\begin{align}
	&\; V(\p_a) - V(\p_b) \\
	=&\; \int_{x \in \R} \loss (x)  \p_a(dx) - \int_{x \in \R} \loss (x)  \p_b(dx)  \\
	= &\; 2 \sum_{k=0}^{i-1} w_k a_k -  2 \sum_{k=0}^{i-1} w_k b_k \\
	= &\; 2 \left((1 - \gamma) w_0 a_0  + (1 -  \gamma') \sum_{k=1}^{i-1} w_k a_k   \right) \\
	\ge &\; 2 \left((1 - \gamma) w_0 a_0  + (1 -  \gamma') \sum_{k=1}^{i-1} w_0 a_k   \right) \\
	= &\; 2 \left((1 - \gamma) w_0 a_0  + (1 -  \gamma')   w_0 (t  - a_0 )  \right) \\
	= &\; 2 w_0 \left(a_0 - \frac{a_{i-1}e^{\e}t}{t - a_0 + e^{\e}a_{i-1}} + (t - a_0)\frac{-a_0 + e^{\e}a_{i-1}}{t - a_0 + e^{\e}a_{i-1}}    \right)  \\
	= &\; 0.
\end{align}

This completes the proof.

\end{IEEEproof}

Therefore, due to Lemma \ref{lem:ratio}, for all  $i \ge 2$, we only need to consider probability distributions $\p \in \pb$ with density sequence  $\{a_0,a_1,\dots, a_n,\dots\}$ satisfying $\frac{a_0}{a_{i-1}} = e^{\e}$.

More precisely, define
\begin{align}
	\pc = \{ \p \in \pb | \p \; \text{has density sequence} \; \{a_0,a_1,\dots, a_n,\dots\} \; \text{satisfying}  \;  \frac{a_0}{a_{i-1}} = e^{\e}  \}.
\end{align}
Then due to Lemma \ref{lem:ratio},
\begin{lemma}\label{lem:ratiolem}
	\begin{align}
		V^* = \inf_{\p \in \cup_{i=3}^{\infty}\pc}  \int_{x \in \R} \loss (x)  \p(dx).
	\end{align}
\end{lemma}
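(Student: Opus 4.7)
\medskip

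\noindent\textbf{Proof proposal for Lemma \ref{lem:ratiolem}.} One direction, $V^* \le \inf_{\p \in \cup_{i \ge 3} \pc} V(\p)$, is immediate from the definition of $V^*$ in \eqref{def:Vstar}, since $\pc \subseteq \sP$ for every $i$. Thus the content of the lemma is the reverse inequality, which I will derive from Lemma \ref{lem:pd} combined with Lemma \ref{lem:ratio}.

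By Lemma \ref{lem:pd}, it suffices to show that for every $\p_a \in \pb$ with index $i \ge 1$, there exists some $i' \ge 3$ and some $\p^{\star} \in \mathcal{SP}_{i',\text{fr}}$ with $V(\p^{\star}) \le V(\p_a)$. The key observation is that any $\p_a \in \pb$ at grid scale $\D/i$ can be viewed, without changing the measure itself, as an element of $\mathcal{SP}_{3i,\text{pd}}$ at the finer grid scale $\D/(3i)$. Concretely, if the density sequence at scale $\D/i$ is $(a_k)_{k\ge 0}$, then at scale $\D/(3i)$ the density sequence is $(a'_j)_{j\ge 0}$ with $a'_j = a_{\lfloor j/3 \rfloor}$; symmetry and piecewise constancy are preserved, monotonicity carries over since $(a_k)$ is monotone, and the ratio condition becomes $a'_j / a'_{j+3i} = a_{\lfloor j/3 \rfloor}/a_{\lfloor j/3 \rfloor + i} = e^{\e}$, so $\p_a \in \mathcal{SP}_{3i,\text{pd}}$. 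Crucially, $V(\p_a)$ is unchanged under this refinement because the probability measure $\p_a$ itself is not changed.

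Since $3i \ge 3 \ge 2$, Lemma \ref{lem:ratio} applies to $\p_a$ viewed at the refined scale. From the monotonicity of $(a'_j)$ together with the ratio condition, we have $a'_0 \ge a'_{3i-1} \ge a'_{3i} = a'_0 e^{-\e}$, so $a'_0/a'_{3i-1} \le e^{\e}$. If equality holds, then $\p_a$ already lies in $\mathcal{SP}_{3i,\text{fr}} \subseteq \cup_{i\ge 3}\pc$. Otherwise $a'_0/a'_{3i-1} < e^{\e}$, and Lemma \ref{lem:ratio} produces $\p_b \in \mathcal{SP}_{3i,\text{pd}}$ with $b_0/b_{3i-1} = e^{\e}$ and $V(\p_b) \le V(\p_a)$, so $\p_b \in \mathcal{SP}_{3i,\text{fr}}$. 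In either case there is some $\p^{\star} \in \cup_{i \ge 3}\pc$ with $V(\p^{\star}) \le V(\p_a)$. Taking infimum over $\p_a \in \cup_{i \ge 1} \pb$ and invoking Lemma \ref{lem:pd} yields $\inf_{\p \in \cup_{i\ge 3} \pc} V(\p) \le V^*$, completing the proof.

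The only nontrivial point to verify with care is that the refinement from scale $\D/i$ to scale $\D/(3i)$ really preserves membership in $\pb$, i.e., that the refined sequence $(a'_j)$ is monotonically nonincreasing, piecewise constant on $\D/(3i)$-intervals, and satisfies $a'_j/a'_{j+3i} = e^{\e}$. Each of these is routine but must be checked to ensure Lemma \ref{lem:ratio} is legitimately applicable at the refined index $3i$; everything else in the argument is packaging.
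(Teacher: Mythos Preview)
Your proposal is correct and follows essentially the same route as the paper: combine Lemma~\ref{lem:pd} with Lemma~\ref{lem:ratio}. The paper in fact gives no separate proof of Lemma~\ref{lem:ratiolem} beyond the sentence ``Then due to Lemma~\ref{lem:ratio},'' leaving implicit exactly the refinement issue you make explicit, namely how to pass from $\cup_{i\ge 1}\pb$ (as in Lemma~\ref{lem:pd}) to $\cup_{i\ge 3}\pc$ when Lemma~\ref{lem:ratio} is only stated for $i\ge 2$. Your refinement $\p_a\in\mathcal{SP}_{i,\text{pd}}\hookrightarrow\mathcal{SP}_{3i,\text{pd}}$ cleanly disposes of the small-$i$ cases and is arguably more careful than the paper here.
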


\subsection{Step 7}

Next, we argue that  for each probability distribution  $\p \in \pc$  ($i \ge 3$) with density sequence $\{a_0,a_1,\dots, a_n,\dots\}$, we can assume that there exists an integer $1 \le k \le (i-2)$, such that
\begin{align}
	a_j &= a_0, \forall 0 \le j <k, \label{eqn:property1} \\
	a_j &= a_{i-1}, \forall k< j <i. \label{eqn:property2}
\end{align}

More precisely,
\begin{lemma}\label{lem:binary}
		Consider a probability distribution $\p_a \in \pc$ ($i \ge 3$) with density sequence $\{a_0,a_1,\dots,a_n,\dots\}$. Then there exists a probability distribution $\p_b \in \pc$ with density sequence $\{b_0,b_1,\dots,b_n,\dots\}$ such that there exists an integer $1 \le k \le (i-2)$ with
		\begin{align}
		b_j &= a_0, \forall \;  0 \le j <k, \label{eqn:binary1}\\
		b_j &= a_{i-1}, \forall \; k< j <i, \label{eqn:binary2}
		\end{align}
and
	\begin{align}
		V (\p_b) \le  V (\p_a). \label{eqn:binary3}
	\end{align}
\end{lemma}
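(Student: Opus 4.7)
The plan is to keep the endpoints $b_0 = a_0$ and $b_{i-1} = a_{i-1}$ fixed (as demanded by conclusions \eqref{eqn:binary1}--\eqref{eqn:binary2}) and to perform a finite sequence of mass-exchanges among the interior entries $b_1, \dots, b_{i-2}$ that monotonically reduce the cost until the density sequence takes the claimed step-function form. Throughout, I use the weights $w_j$, $0 \le j \le i-1$, from \eqref{eqn:sumw}, which are monotonically increasing in $j$ and satisfy $V(\p) = 2\sum_{j=0}^{i-1} w_j a_j$ for any $\p \in \pc$ with density sequence $\{a_j\}$.

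First I would define the \emph{middle block} $M = \{j : a_{i-1} < b_j < a_0\}$, initially computed from the density sequence of $\p_a$. By the monotone-decreasing hypothesis built into $\pa$, $M$ is a contiguous range of indices, say $\{k_1, \dots, k_2\} \subseteq \{1, \dots, i-2\}$. If $|M| \le 1$ the sequence already has the form claimed in the lemma; otherwise I apply the exchange $b_{k_1} \leftarrow b_{k_1} + \delta$ and $b_{k_2} \leftarrow b_{k_2} - \delta$ with $\delta = \min(a_0 - b_{k_1},\ b_{k_2} - a_{i-1}) > 0$. This preserves $\sum_j b_j$; it preserves monotonicity because only the extremes of $M$ move, each strictly toward (not past) the fixed boundary values $a_0, a_{i-1}$; and it preserves $b_0/b_{i-1} = e^\e$ since the endpoints are untouched. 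Hence the updated sequence remains in $\pc$, and the cost changes by $2\delta(w_{k_1} - w_{k_2}) \le 0$ since $w$ is nondecreasing and $k_1 \le k_2$.

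By the choice of $\delta$, at least one of $b_{k_1}, b_{k_2}$ is pinned to its respective boundary after each exchange, so $|M|$ strictly decreases. Since $|M| \le i - 2$ initially, the process terminates after at most $i-2$ iterations with $|M| \le 1$, giving exactly the structure in the lemma with transition index $k$. That $k$ can be chosen in $\{1, \dots, i-2\}$ (and not forced to $0$ or $i-1$) follows from the normalization: $t = \sum_{j=0}^{i-1} a_j$ lies in $[a_0 + (i-1)a_{i-1},\ (i-1)a_0 + a_{i-1}]$, which coincides with $\bigcup_{k=1}^{i-2}[k a_0 + (i-k) a_{i-1},\ (k+1) a_0 + (i-1-k) a_{i-1}]$, so some interior $k$ always works.

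The main obstacle is the careful bookkeeping verifying that each exchange produces a density sequence still in $\pc$: monotonicity at intermediate positions is automatic because $b_{k_1+1}, \dots, b_{k_2 - 1}$ are unchanged, and at the modified positions $\delta$ is chosen small enough to ensure $b_{k_1} + \delta \le a_0 = b_{k_1 - 1}$ and $b_{k_2} - \delta \ge a_{i-1} = b_{k_2 + 1}$. An equivalent but slicker route is to recognize the feasible region for $(b_1, \dots, b_{i-2})$ as the intersection of the chain polytope $\{a_0 \ge b_1 \ge \cdots \ge b_{i-2} \ge a_{i-1}\}$ with the sum hyperplane $\sum_{j=1}^{i-2} b_j = t - a_0 - a_{i-1}$, whose vertices are precisely the step-function sequences described in the lemma, so that the linear objective $\sum_{j=1}^{i-2} w_j b_j$ attains its minimum at such a vertex.
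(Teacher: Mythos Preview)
Your proposal is correct and follows essentially the same approach as the paper: both identify the first index $k_1$ where the density falls below $a_0$ and the last index $k_2$ where it exceeds $a_{i-1}$, then transfer mass $\delta = \min(a_0 - b_{k_1}, b_{k_2} - a_{i-1})$ from $b_{k_2}$ to $b_{k_1}$, using $w_{k_1} \le w_{k_2}$ to conclude the cost does not increase, and iterate until the step-function form is reached. Your extra observations---the explicit termination bound, the normalization check that the transition index lands in $\{1,\dots,i-2\}$, and the chain-polytope vertex reformulation---are sound additions not present in the paper's proof.
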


\begin{IEEEproof}
	If there exists integer $1 \le k \le (i-2)$ such that
		\begin{align}
		a_j &= a_0, \forall \; 0 \le j <k, \\
		a_j &= a_{i-1}, \forall \; k< j <i,
		\end{align}
then we can set $\p_b = \p_a$.

Otherwise, let $k_1$ be the smallest integer in $\{0,1,2,\dots,i-1\}$ such that
\begin{align}
	a_{k_1} \neq a_0,
\end{align}
and let $k_2$ be the biggest integer in $\{0,1,2,\dots, i-1\}$ such that
\begin{align}
	a_{k_2} \neq a_{i-1}.
\end{align}

It is easy to see that $k_1 \neq k_2$. Then we can increase $a_{k_1}$ and decrease $a_{k_2}$ simultaneously by the same amount to derive a new probability distribution $\p_b \in \pc$ with smaller cost. Indeed, if
\begin{align}
	a_0 - a_{k_1} \leq a_{k_2} - a_{i-1},
\end{align}
then consider a probability distribution $\p_b  \in \pc$ with  density sequence  $\{b_0,b_1,\dots,b_{i-1},\dots\}$ defined as
\begin{align}
	b_j &= a_0, \forall 0 \le j \le k_1, \\
	b_j &= a_j, \forall k_1 < j \le k_2-1, \\
	b_{k_2} &= a_{k_2} - (a_0 - a_{k_1}),   \\
	b_j &= a_j, \forall k_2 < j \le i-1.
\end{align}

We can verify that  $V (\p_a) \ge  V (\p_b)$ via
\begin{align}
	& \; V (\p_a) -  V (\p_b) \\
 	=& \; \int_{x \in \R} \loss (x)  \p_a(dx) - \int_{x \in \R} \loss (x) \p_b(dx) \\
 	= & \; 2 (w_{k_1} b_{k_1} + w_{k_2} b_{k_2} )   - 2 ( w_{k_1} a_{k_1} + w_{k_2} a_{k_2}) \\
 	= & \; 2  w_{k_1} (a_0 - a_{k_1}) + 2 w_{k_2} ( a_{k_2} - (a_0 - a_{k_1}) - a_{k_2}  ) \\
 	= & \; 2  (a_0 - a_{k_1}) ( w_{k_1}  - w_{k_2}) \\
 	\le & \; 0 ,
 \end{align}
 where $w_i$ is defined in \eqref{eqn:sumw}.

 If $a_0 - a_{k_1} \ge a_{k_2} - a_{i-1}$, then accordingly we can construct $\p_b \in \pc$ by setting
\begin{align}
	b_j &= a_0, \forall 0 \le j < k_1, \\
	b_{k_1} &= a_{k_1} + (a_{k_2} - a_{i-1}), \\
	b_j &= a_j, \forall k_1 < j \le k_2-1, \\
	b_{j} &= a_{i-1}, \forall k_2 \le j \le i-1.
\end{align}

And similarly, it is easy to verify that $V (\p_a) \ge  V (\p_b)$.

Therefore, continue in this way, and finally we will obtain a probability distribution $\p_b \in \pc$ with density sequence $\{b_0,b_1,\dots, b_n,\dots\}$ such that \eqref{eqn:binary1}, \eqref{eqn:binary2} and \eqref{eqn:binary3} hold.

This completes the proof.

\end{IEEEproof}

Define
\begin{align}
	\pd = \{ \p \in \pc \ |\ \p \; \text{has density sequence} \; \{a_0,a_1,\dots,a_n,\dots\} \;  \text{satisfying} \eqref{eqn:binary1} \; \text{and} \; \eqref{eqn:binary2} \; \text{for some}\; 1 \le k \le (i-2)  \}.
\end{align}

Then due to Lemma \ref{lem:binary},
\begin{lemma}\label{lem:stepfunc}
	\begin{align}
		V^* = \inf_{\p \in \cup_{i=3}^{\infty}\pd}  \int_{ x \in \R} \loss (x)  \p(dx) .
	\end{align}
\end{lemma}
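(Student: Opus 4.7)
The plan is to derive Lemma \ref{lem:stepfunc} as an almost immediate consequence of Lemma \ref{lem:binary} combined with Lemma \ref{lem:ratiolem}. Since the staircase-like binary structure that defines $\pd$ is strictly stronger than the periodic-decay condition defining $\pc$, we have $\pd \subseteq \pc$ for every $i \ge 3$, so the only real task is to check that the two infima coincide.

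First I would record the easy direction: because $\cup_{i \ge 3} \pd \subseteq \cup_{i \ge 3} \pc$, taking the infimum over a smaller class gives
\begin{align*}
\inf_{\p \in \cup_{i=3}^{\infty} \pd} \int_{x \in \R} \loss(x) \, \p(dx) \;\ge\; \inf_{\p \in \cup_{i=3}^{\infty} \pc} \int_{x \in \R} \loss(x) \, \p(dx) \;=\; V^*,
\end{align*}
where the last equality is exactly Lemma \ref{lem:ratiolem}.

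For the reverse inequality, I would take an arbitrary $\p_a \in \pc$ with some $i \ge 3$ and feed it to Lemma \ref{lem:binary}, which produces a $\p_b \in \pd$ (with the same $i$, since the construction only rearranges mass inside one period $[0,\D)$ and leaves the geometric-decay tail untouched) satisfying $V(\p_b) \le V(\p_a)$. This yields
\begin{align*}
\inf_{\p \in \cup_{i=3}^{\infty} \pd} \int_{x \in \R} \loss(x) \, \p(dx) \;\le\; V(\p_b) \;\le\; V(\p_a),
\end{align*}
and taking the infimum over $\p_a \in \cup_{i \ge 3} \pc$ together with Lemma \ref{lem:ratiolem} gives $\inf_{\pd} V \le V^*$. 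Combining the two bounds concludes the proof.

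The main obstacle here is essentially nonexistent, since all the substantive content has already been absorbed into Lemma \ref{lem:binary}. The one sanity check worth flagging is that the index $i$ is preserved by the construction in Lemma \ref{lem:binary}: that rearrangement only alters entries among $\{a_0, \dots, a_{i-1}\}$ and then extends by the same $e^{-\e}$-geometric decay across blocks of length $i$, so $\p_b$ lands in $\pd$ for the same admissible $i \ge 3$ rather than being pushed into the forbidden cases $i \in \{1,2\}$.
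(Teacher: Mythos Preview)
Your proposal is correct and matches the paper's own reasoning: the paper simply states that Lemma \ref{lem:stepfunc} follows ``due to Lemma \ref{lem:binary},'' and you have spelled out exactly the two-inequality argument (containment $\pd\subseteq\pc$ plus the cost-decreasing map of Lemma \ref{lem:binary}) that makes this work, together with the observation that the grid parameter $i$ is preserved by the construction.
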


\subsection{Step 8}

\begin{IEEEproof}[Proof of Theorem \ref{thm:main}]
	Since $\{ \p_{\gamma} | \gamma \in [0,1] \} \subseteq \sP$, we have
	\begin{align}
		V^* = \inf_{\p \in \sP}  \int_{x \in \R} \loss (x)  \p(dx)  \le \inf_{\gamma \in [0,1]}  \int_{x \in \R} \loss (x)  \p_{\gamma}(dx) .
	\end{align}

	We prove the reverse direction in the following.

	We first prove that for any $\p \in \pd$ ( $i \ge 3$), there exists $\gamma \in [0,1]$ such that
	\begin{align}
		\int_{x \in \R} \loss (x)  \p_{\gamma}(dx) \le \int_{x \in \R} \loss (x)  \p(dx).
	\end{align}

	Consider the density sequence $\{a_0,a_1,\dots, a_n,\dots\}$ of $\p$. Since $\p \in \pd$, there exists an integer $ 0 \le k \le i-2 $ such that
	\begin{align}
		a_j &= a_0, \forall 0 \le j< k, \\
		a_j &= a_0 e^{-\e}, \forall k < j \le i-1.
	\end{align}

	Let
	\begin{align}
		\gammap \triangleq \frac{\frac{1-e^{-\e}}{2\D} - a_0 e^{-\e}}{a_0(1-e^{-\e})} \in [0,1].
	\end{align}
	Then $a(\gammap) = a_0$.

	It is easy to verify that  
	\begin{align}
		k \frac{\D}{i} \le \gammap \D \le (k+1) \frac{\D}{i}.
	\end{align}

	The probability density functions of $\p$ and $\p_{\gammap}$ are the same when $ x \in [0, \frac{k}{i}\D) \cup [\frac{k+1}{i}\D, \D)$. Since the integral of probability density functions over $[0, \D)$ is  $\frac{1 - e^{-\e}}{2}$ due to the periodically decaying property, we have
	\begin{align}
		a_k \frac{\D}{i} = a_0 (\gammap - \frac{k}{i}) \D + e^{-\e}a_0 ( \frac{k+1}{i} - \gammap)\D.
	\end{align}

	Define $\beta \triangleq i (  \gammap - \frac{k}{i} ) \in [0,1]$. Then
	\begin{align}
		a_k = \beta a_0 + (1 - \beta) e^{-\e}a_0.
	\end{align}

    Define
    \begin{align}
		w_k^{(1)} &\triangleq \sum_{j=0}^{+\infty} e^{-j\e} \int_{(j + \frac{k}{i})\D  }^{(j + \gammap) \D} \loss (x) d x,  \label{eqn:sumw1} \\
    w_k^{(2)} &\triangleq \sum_{j=0}^{+\infty} e^{-j\e} \int_{(j +  \gammap) \D    }^{(j + \frac{k+1}{i} )\D} \loss (x) d x,  \label{eqn:sumw2}.
	\end{align}
    Note that $w_k = w_k^{(1)} + w_k^{(2)}$. Since $\loss(x)$ is a monotonically increasing function when $x \ge 0$, we have
    \begin{align}
    	\frac{w_k^{(2)}}{w_k^{(1)}} \ge \frac{(j + \frac{k+1}{i} )\D -  (j +  \gammap) \D  }{(j + \gammap) \D - (j + \frac{k}{i})\D} = \frac{\frac{k+1}{i} - \gammap}{\gammap - \frac{k}{i} }.
    \end{align}

	Therefore,
	\begin{align}
 	& \int_{x \in \R} \loss (x)   \p(dx) - \int_{x \in \R} \loss (x)  \p_{\gammap}(dx) \\
 	= & 2 w_{k} a_{k}  - 2 \left( w_k^{(1)}  a_0 +  w_k^{(2)}  a_0 e^{-\e}\right) \\
 	= & 2  \left(w_k^{(1)} + w_k^{(2)}\right) a_k - 2 \left( w_k^{(1)}  a_0 +  w_k^{(2)}  a_0 e^{-\e}\right)  \\
 	= &  2  ( a_k - a_0 e^{-\e} )w_k^{(2)} -  2  (a_0 - a_k ) w_k^{(1)} .
 	\end{align}

 	Since
 	\begin{align}
 		\frac{a_k - a_0 e^{-\e} }{a_0 - a_k} &= \frac{ \beta (a_0 - a_0 e^{-\e})}{(1-\beta) (a_0 - a_0 e^{-\e}) } \\
 		&= \frac{\beta}{1 - \beta} \\
 		&= \frac{\gammap - \frac{k}{i}}{\frac{k+1}{i} - \gammap} \\
 		&\ge \frac{w_k^{(1)}}{w_k^{(2)}},
 	\end{align}
 	we have
 	\begin{align}
 		& \int_{x \in \R} \loss (x)   \p(dx) - \int_{x \in \R} \loss (x)  \p_{\gammap}(dx) \\
 		= &  2  ( a_k - a_0 e^{-\e} )w_k^{(2)} -  2  (a_0 - a_k ) w_k^{(1)} \\
 		\ge & 0.
 	\end{align}

 	Therefore,
 	\begin{align}
 		V^* &= \inf_{\p \in \cup_{i=3}^{\infty}\pd}  \int_{x \in \R} \loss (x)   \p(dx) \\
 		& \ge \inf_{\gamma \in [0,1]}  \int_{x \in \R} \loss (x)   \p_{\gamma}(dx) .
 	\end{align}

 	We conclude
 	\begin{align}
 		V^* = \inf_{\p \in \sP}  \int_{x \in \R} \loss (x)  \p(dx)  = \inf_{\gamma \in [0,1]}  \int_{x \in \R} \loss (x)  \p_{\gamma}(dx) = \inf_{\gamma \in [0,1]}  \int_{x \in \R} \loss (x)  f_{\gamma}(x) dx.
 	\end{align}

 	This completes the proof of Theorem \ref{thm:main}.
\end{IEEEproof}

\section{Proof of Theorem \ref{thm:1}}\label{app:1}

\begin{IEEEproof}[Proof of Theorem \ref{thm:1}]

Recall $b \triangleq e^{-\e}$, and $\loss(x) = |x|$.  We can compute  $V(\p_\gamma)$ via
\begin{align}
	V(\p_\gamma) &= \int_{x \in \R} |x| f^{\gamma}(x) dx \\
	&= 2 \int_{0}^{+\infty} x f^{\gamma}(x) dx \\
	&= 2 \sum_{k=0}^{+\infty} \left( \int_{0}^{\gamma \D} (x + k\D) a(\gamma) e^{-k\e} dx +  \int_{\gamma \D}^{ \D} (x + k\D) a(\gamma) e^{-\e} e^{-k\e} dx  \right) \\
	&= 2 \D^2 a(\gamma) \sum_{k=0}^{+\infty} \left( e^{-k \e} \frac{(k+\gamma)^2 - k^2}{2} +  e^{-(k+1)\e}  \frac{(k+1)^2 - (k+\gamma)^2}{2} \right) \\
	&= 2 \D^2 a(\gamma) \sum_{k=0}^{+\infty} \left( e^{-k \e} \frac{\gamma^2 + 2 k \gamma}{2} +  e^{-(k+1)\e}  \frac{2k + 1 - 2k\gamma - \gamma^2}{2} \right) \\
	&= 2 \D^2 a(\gamma) \sum_{k=0}^{+\infty} \left( (b + (1-b)\gamma) k
	e^{-k \e}   +  \frac{ b + (1-b)\gamma^2 }{2}  e^{-k \e}    \right) \\
	&= 2 \D^2 a(\gamma) \left(     (b + (1-b)\gamma) \frac{b}{(1-b)^2}   +  \frac{ b + (1-b)\gamma^2 }{2}  \frac{1}{1-b}   \right) \label{eqn:app1}\\
	&= 2 \D^2 \frac{1-b}{2\D (b + (1-b)\gamma)} \left(     (b + (1-b)\gamma) \frac{b}{(1-b)^2}   +  \frac{ b + (1-b)\gamma^2 }{2}  \frac{1}{1-b}   \right)\\
	&= \D \left(  \frac{b}{1-b} + \frac{1}{2}\frac{b + (1-b)\gamma^2}{b + (1-b)\gamma}   \right),
\end{align}
where in \eqref{eqn:app1} we use the formulas
\begin{align}
	\sum_{k=1}^{+\infty} b^k &= \frac{1}{1-b}, \label{eqn:app3}\\
	\sum_{k=1}^{+\infty} k b^k &= \frac{b}{(1-b)^2} \label{eqn:app4}.
\end{align}

Note that  the first term $\frac{b}{1-b}$ is independent of $\gamma$. Define
\begin{align}
 	g(\gamma) \triangleq \frac{b + (1-b)\gamma^2}{b + (1-b)\gamma},
 \end{align}
and thus to minimize $V(\p_\gamma)$ over $\gamma \in [0,1]$, we only need to minimize $g(\gamma)$ over $\gamma \in [0,1]$.

Since $\gamma \in [0,1]$, $g(\gamma) \le 1$. Also note that $g(0) = g(1) = 1$.
So the optimal $\gamma^*$ which minimize $g(\gamma)$ lies in $(0,1)$.

Compute the derivative of $g(\gamma)$ via
\begin{align}
	g'(\gamma) &= \frac{2\gamma(1-b)(b+(1-b)\gamma) - (b+(1-b)\gamma^2)(1-b)}{(b + (1-b)\gamma)^2} \\
	&= (1-b)\frac{(1-b)\gamma^2 + 2b\gamma - b}{(b + (1-b)\gamma)^2}.
\end{align}

Set $g'(\gamma^*) = 0$ and we get
\begin{align}
	\gamma* &= \frac{\sqrt{b}-b}{1-b} \\
	&= \frac{e^{-\frac{1}{2}\e}- e^{-\e}}{1-e^{-\e}} \\
	&= \frac{1}{1 + e^{\frac{\e}{2}}}.
\end{align}

Therefore,
\begin{align}
	V(\p_{\gamma^*}) &= \D \left(  \frac{b}{1-b} + \frac{1}{2}\frac{b + (1-b){\gamma^*}^2}{b + (1-b)\gamma^*}   \right) \\
	&= \D\frac{e^{\frac{\e}{2}}}{e^{\e}-1}.
\end{align}

Due to Theorem \ref{thm:main}, the minimum expectation of noise  amplitude is
$V(\p_{\gamma^*}) = \D\frac{e^{\frac{\e}{2}}}{e^{\e}-1}$.
	
\end{IEEEproof}

\section{Proof of Theorem \ref{thm:3}}\label{app:3}
\begin{IEEEproof}[Proof of Theorem \ref{thm:3}]

Recall $b \triangleq e^{-\e}$. Then we compute $V(\p_\gamma)$ for the cost function $\loss(x) = x^2$ via
\begin{align}
	V(\p_\gamma) &= \int_{x \in \R} x^2 f^{\gamma}(x) dx \\
	&= 2 \int_{0}^{+\infty} x^2 f^{\gamma}(x) dx \\
	&= 2 \sum_{k=0}^{+\infty} \left( \int_{0}^{\gamma \D} (x + k\D)^2 a(\gamma) e^{-k\e} dx +  \int_{\gamma \D}^{ \D} (x + k\D)^2 a(\gamma) e^{-\e} e^{-k\e} dx  \right) \\
	&= 2 \D^3 a(\gamma) \sum_{k=0}^{+\infty} \left( e^{-k \e} \frac{(k+\gamma)^3 - k^3}{3} +  e^{-(k+1)\e}  \frac{(k+1)^3 - (k+\gamma)^3}{3} \right) \\
	&= 2 \D^3 a(\gamma) \sum_{k=0}^{+\infty} \left( e^{-k \e} \frac{\gamma^3 + 3 k \gamma^2 + 3 k^2 \gamma}{3} +  e^{-(k+1)\e}  \frac{3k^2 + 3k + 1 - 3k^2 \gamma - 3k\gamma^2 - \gamma^3}{3} \right) \\
	&= 2 \D^3 a(\gamma) \sum_{k=0}^{+\infty} \left( (\frac{1-\gamma^3}{3} b + \frac{\gamma^3}{3})
	e^{-k \e}   +  (\gamma^2 + (1-\gamma^2)b ) k e^{-k\e} + ( \gamma + (1-\gamma)b ) k^2  e^{-k \e}    \right) \\
	&= 2 \D^3 a(\gamma) \left( (\frac{1-\gamma^3}{3} b + \frac{\gamma^3}{3})
	\frac{1}{1-b}   +  (\gamma^2 + (1-\gamma^2)b ) \frac{b}{(1-b)^2} + ( \gamma + (1-\gamma)b ) \frac{b^2+b}{(1-b)^3}   \right) \label{eqn:app2} \\
	&= 2 \D^3 \frac{1-b}{2\D (b + (1-b)\gamma)}  \left( (\frac{1-\gamma^3}{3} b + \frac{\gamma^3}{3})
	\frac{1}{1-b}   +  (\gamma^2 + (1-\gamma^2)b ) \frac{b}{(1-b)^2} + ( \gamma + (1-\gamma)b ) \frac{b^2+b}{(1-b)^3}   \right) \\
	&= \D^2 \left(  \frac{b^2+b}{(1-b)^2}  + \frac{b + (1-b)\gamma^2}{b + (1-b)\gamma} \frac{b}{1-b} + \frac{1}{3}\frac{b + (1-b)\gamma^3}{b + (1-b)\gamma} \right), \label{eqn:Vx2}
\end{align}
where in \eqref{eqn:app2} we use formulas \eqref{eqn:app3}, \eqref{eqn:app4} and
\begin{align}
	\sum_{k=1}^{+\infty} k^2 b^k &= \frac{(b^2+b)}{(1-b)^3} \label{eqn:app5}.
\end{align}

Note that  the first term $\frac{b^2+b}{(1-b)^2}$ is independent of $\gamma$. Define
\begin{align}
 	h(\gamma) &\triangleq \frac{b + (1-b)\gamma^2}{b + (1-b)\gamma} \frac{b}{1-b} + \frac{1}{3}\frac{b + (1-b)\gamma^3}{b + (1-b)\gamma} \\
 	&= \frac{ \frac{(1-b)\gamma^3}{3} +  b\gamma^2 + \frac{b^2}{1-b}  + \frac{b}{3}  }{ b + (1-b)\gamma },
 \end{align}
and thus to minimize $V(\p_\gamma)$ over $\gamma \in [0,1]$, we only need to minimize $h(\gamma)$ over $\gamma \in [0,1]$.

Since $\gamma \in [0,1]$, $h(\gamma) \le \frac{b}{1-b} + \frac{1}{3}$. Also note that $h(0) = h(1) = \frac{b}{1-b} + \frac{1}{3}$.
So the optimal $\gamma^*$ which minimize $h(\gamma)$ lies in $(0,1)$.

Compute the derivative of $h(\gamma)$ via
\begin{align}
	h'(\gamma) &= \frac{((1-b)\gamma^2 + 2b\gamma)(b+(1-b)\gamma) - (\frac{1-b}{3}\gamma^3 + b \gamma^2 + \frac{b^2}{1-b} + \frac{b}{3} )(1-b) }{(b + (1-b)\gamma)^2} \\
	&= \frac{ \frac{2}{3} (1-b)^2 \gamma^3 + 2b(1-b)\gamma^2 + 2b^2 \gamma - \frac{2b^2 + b}{3}	 }{(b + (1-b)\gamma)^2} \\.
\end{align}

Set $h'(\gamma^*) = 0$ and we get
\begin{align}
	\frac{2}{3} (1-b)^2 {\gamma^*}^3 + 2b(1-b){\gamma^*}^2 + 2b^2 {\gamma^*} - \frac{2b^2 + b}{3} = 0. \label{eqn:app6}
\end{align}

Therefore, the optimal $\gamma^*$ is the real-valued root of the cubic equation \eqref{eqn:app6}, which is
\begin{align}
	\gamma^* = -\frac{b}{1-b} + \frac{(b-2b^2+2b^4-b^5)^{1/3}}{2^{1/3} (1-b)^2}. \label{eqn:optimalr}
\end{align}

We plot $\gamma^*$ as a function of $b$ in Figure \ref{fig:optimalr}, and we can see $\gamma^* \to \frac{1}{2}$ as $\e \to 0$, and $\gamma^* \to 0$ as $\e \to +\infty$. This also holds in the case $\loss(x) = |x|$.

Plug \eqref{eqn:optimalr} into \eqref{eqn:Vx2}, and we get the minimum noise power
\begin{align}
	V(\p_{\gamma^*}) 	&= \D^2 \left(  \frac{b^2+b}{(1-b)^2}  + \frac{b + (1-b){\gamma^*}^2}{b + (1-b){\gamma^*}} \frac{b}{1-b} + \frac{1}{3}\frac{b + (1-b){\gamma^*}^3}{b + (1-b){\gamma^*}} \right) \\
	 &= \D^2\frac{ 2^{-2/3} b^{2/3} (1+b)^{2/3} + b }{ (1-b)^2 }.
\end{align}

Due to Theorem \ref{thm:main}, the minimum expectation of noise power is
$V(\p_{\gamma^*}) = \D^2 \frac{ 2^{-2/3} b^{2/3} (1+b)^{2/3} + b }{ (1-b)^2 }$.


\end{IEEEproof}

\section{Proof of Theorem \ref{thm:gammaprop}}\label{app:gammaprop}
\begin{IEEEproof}[Proof of Theorem \ref{thm:gammaprop}]
	Let $n = m+1$, and  define
\begin{align}
	c_i \triangleq \sum_{k=0}^{+\infty} b^k k^i,
\end{align}
for nonnegative integer $i$.

First we compute $V(\p_\gamma)$ via
	\begin{align}
		V(\p_\gamma) &= 2 \sum_{k=0}^{+\infty} \left ( \int_0^{\gamma \D} (x+k\D)^m a(\gamma) e^{-k\e} dx + \int_{\gamma \D}^{\D} (x+k\D)^m a(\gamma) e^{-(k+1)\e} dx    \right ) \\
		&= 2a(\gamma)\D^{m+1}  \sum_{k=0}^{+\infty} \left ( b^k \frac{(k+\gamma)^{m+1} -k^{m+1} }{m+1}  + b^{k+1} \frac{(k+1)^{m+1} - (k+\gamma)^{m+1}}{m+1}  \right ) \\
		&= 2\D^n a(\gamma) \sum_{k=0}^{+\infty}  \left (b^k \frac{ \sum_{i=1}^n   \binom{n}{i} \gamma^i k^{n-i}      }{n} + b b^k \frac{\sum_{i=1}^n \binom{n}{i}(1-\gamma^i)k^{n-i} }{n}  \right ) \\
		&= 2\D^n a(\gamma) \left (\sum_{i=1}^n \frac{\binom{n}{i}\gamma^i c_{n-i}}{n} + b \sum_{i=1}^n \frac{\binom{n}{i}(1-\gamma^i)c_{n-i}}{n}  \right ) \\
		&= 2\D^n a(\gamma) \sum_{i=1}^n \frac{ \binom{n}{i}c_{n-i}(\gamma^i (1-b)+b)}{n} \\
		&= \frac{2\D^n (1-b)}{2\D n} \frac{ \sum_{i=1}^n  \binom{n}{i}c_{n-i}(\gamma^i (1-b)+b) }{\gamma(1-b) + b}.
	\end{align}

Let $h_i(\gamma) \triangleq \frac{ \gamma^i (1-b)+b }{\gamma(1-b) + b}$ for $i \ge 2$. Since $h_i(0) = h_i(1) = 1$ and $h_i(\gamma) < 1 $ for $\gamma \in (0,1)$, $h_i(\gamma)$ achieves the minimum value in the open interval $(0,1)$.

Therefore, if we define $h(\gamma) \triangleq \frac{ \sum_{i=1}^n  \binom{n}{i}c_{n-i}(\gamma^i (1-b)+b) }{\gamma(1-b) + b}$, the optimal $\gamma^* \in [0,1]$, which minimizes $V(\p_\gamma)$, should satisfy
\begin{align}
	h'(\gamma^*) = 0,
\end{align}
where $h'(\cdot)$ denotes the first order derivative of $h(\cdot)$.

It is straightforward to derive the expression for $h'(\cdot)$:
\begin{align}
	h'(\gamma) &= \frac{  (\sum_{i=1}^n  \binom{n}{i}c_{n-i} i \gamma^{i-1}(1-b) )  (\gamma(1-b) + b) - (1-b)  \sum_{i=1}^n  \binom{n}{i}c_{n-i}(\gamma^i (1-b)+b)    } { (\gamma(1-b) + b)^2 } \\
	&= \frac{ \sum_{i=1}^n  \binom{n}{i}c_{n-i} i \gamma^{i}(1-b)^2  + \sum_{i=1}^n  \binom{n}{i}c_{n-i} i \gamma^{i-1}(1-b)b - \sum_{i=1}^n  \binom{n}{i}c_{n-i} \gamma^{i}(1-b)^2 - \sum_{i=1}^n  \binom{n}{i}c_{n-i} b(1-b)   }{ (\gamma(1-b) + b)^2 } \\
	&= \frac{ \sum_{i=1}^n  \binom{n}{i}c_{n-i} (i-1) \gamma^{i}(1-b)^2  + \sum_{i=1}^n  \binom{n}{i}c_{n-i} i \gamma^{i-1}(1-b)b  - \sum_{i=1}^n  \binom{n}{i}c_{n-i} b(1-b)   }{ (\gamma(1-b) + b)^2 }. \label{eqn:gammanum}
\end{align}

Therefore, $\gamma^*$ should make the numerator of \eqref{eqn:gammanum} be zero, i.e., $\gamma^*$ satisfies
\begin{align}
	\sum_{i=1}^n  \binom{n}{i}c_{n-i} (i-1) \gamma^{i}(1-b)^2  + \sum_{i=1}^n  \binom{n}{i}c_{n-i} i \gamma^{i-1}(1-b)b  - \sum_{i=1}^n  \binom{n}{i}c_{n-i} b(1-b) = 0.
\end{align}

Since 
\begin{align}
	& \sum_{i=1}^n  \binom{n}{i}c_{n-i} (i-1) \gamma^{i}(1-b)^2  + \sum_{i=1}^n  \binom{n}{i}c_{n-i} i \gamma^{i-1}(1-b)b  - \sum_{i=1}^n  \binom{n}{i}c_{n-i} b(1-b) \\
	= &  \sum_{i=1}^n  \binom{n}{i}c_{n-i} (i-1) \gamma^{i}(1-b)^2 + \sum_{i=0}^{n-1}  \binom{n}{i+1}c_{n-(i+1)} (i+1) \gamma^{i}(1-b)b - \sum_{i=1}^n  \binom{n}{i}c_{n-i} b(1-b) \\
	= & c_0 (n-1)\gamma^n (1-b)^2 + \sum_{i=1}^{n-1} \left(    \binom{n}{i}c_{n-i} (i-1) (1-b)^2   +  \binom{n}{i+1}c_{n-(i+1)} (i+1)(1-b)b  \right) \gamma^{i}  \nonumber \\ & + n c_{n-1}(1-b)b - \sum_{i=1}^n  \binom{n}{i}c_{n-i} b(1-b) \\
	= &  c_0 (n-1)\gamma^n (1-b)^2 + \sum_{i=1}^{n-1} \left(    \binom{n}{i}c_{n-i} (i-1) (1-b)^2   +  \binom{n}{i+1}c_{n-(i+1)} (i+1)(1-b)b  \right) \gamma^{i} - \sum_{i=2}^n  \binom{n}{i}c_{n-i} b(1-b), 
\end{align}
$\gamma^*$ satisfies
\begin{align}
	c_0 (n-1){\gamma^*}^n (1-b)^2 + \sum_{i=1}^{n-1} \left( \binom{n}{i}c_{n-i} (i-1) (1-b)^2   +  \binom{n}{i+1}c_{n-(i+1)} (i+1)(1-b)b  \right) {\gamma^*}^{i}  - \sum_{i=2}^n  \binom{n}{i}c_{n-i} b(1-b) = 0. \label{eqn:gammafinal}
\end{align}

We can derive the asymptotic properties of $\gamma^*$ from \eqref{eqn:gammafinal}. Before deriving the properties of $\gamma^*$, we first study the asymptotic properties of $c_i$, which are functions of $b$.


There are closed-form formulas for $c_i$ (i=0,1,2,3):
\begin{align}
	c_0 &= \sum_{k=0}^{+\infty} b^k  = \frac{1}{1-b}, \\
	c_1 &= \sum_{k=0}^{+\infty} b^k k = \frac{b}{(1-b)^2}, \\
	c_2 &= \sum_{k=0}^{+\infty} b^k k^2 = \frac{b^2+b}{(1-b)^3}, \\
	c_3 &= \sum_{k=0}^{+\infty} b^k k^3 = \frac{b^3+4b^2+b}{(1-b)^4}. \\
\end{align}

In general, for $i \ge 1$,
\begin{align}
 	c_{i+1} &= \sum_{k=0}^{+\infty} b^k k^{i+1} = \sum_{k=1}^{+\infty} b^k k^{i+1} = b + \sum_{k=1}^{+\infty} b^{k+1} (k+1)^{i+1}, \\
 	b c_{i+1} &= \sum_{k=0}^{+\infty} b^{k+1} k^{i+1} = \sum_{k=1}^{+\infty} b^{k+1} k^{i+1}.
 \end{align}

 Therefore,
 \begin{align}
 	c_{i+1} - b c_{i+1} &= b + \sum_{k=1}^{+\infty} b^{k+1} ( (k+1)^{i+1} - k^{i+1} ) \\
 	&= b +  \sum_{k=1}^{+\infty} b^{k+1} \sum_{j=0}^i \binom{i+1}{j} k^j \\
 	&= b +  b  \sum_{j=0}^i \binom{i+1}{j}  \sum_{k=1}^{+\infty} k^j b^k \\
 	&= b + b (\frac{b}{1-b} + \sum_{j=1}^i \binom{i+1}{j} c_j) \\
 	&= \frac{b}{1-b} + b \sum_{j=1}^i \binom{i+1}{j} c_j,
 \end{align}
and thus
\begin{align}
	c_{i+1} = \frac{b}{(1-b)^2} + \frac{b}{1-b} \sum_{j=1}^i \binom{i+1}{j} c_j. \label{eqn:ci}
\end{align}

From \eqref{eqn:ci}, by induction we can easily prove that 
\begin{itemize}
	\item as $b \to 0$, $c_i \to 0, \forall i \ge 1$;
	\item as $b \to 1$, $\forall i \ge 0, c_i \to +\infty, c_i = \Omega (\frac{i!}{(1-b)^{i+1}})$ and 
	\begin{align}
		\lim_{b \to 1} 	\frac{c_{i+1}}{c_i}(1-b) = i+1.
	\end{align}
\end{itemize}


As $b \to 0$, since $c_i \to 0$ for $i \ge 1$ and $c_0 = 1$, the last two terms of \eqref{eqn:gammafinal} go to zero, and thus from \eqref{eqn:gammafinal} we can see that $\gamma^*$ goes to zero as well.

As $b \to 1$, since $c_i = \Omega (\frac{1}{(1-b)^{i+1}})$ and $\gamma^*$ is bounded by 1, the first term of \eqref{eqn:gammafinal} goes to zero, and the dominated terms in \eqref{eqn:gammafinal} are
\begin{align}
	\binom{n}{2}c_{n-2} 2(1-b)b\gamma^*  - \binom{n}{2}c_{n-2}b(1-b) = 0.
\end{align}
Thus, in the limit we have $\gamma^* = \frac{1}{2}$. Therefore, as $b \to 1$, $\gamma^* \to \frac{1}{2}$.

This completes the proof.
\end{IEEEproof}

\section{Proof of Theorem \ref{thm:discrete1} and Theorem \ref{thm:discrete2}}\label{sec:discrete_proof}
In this section, we prove Theorem \ref{thm:discrete1} and Theorem \ref{thm:discrete2}, which give the optimal noise-adding mechanisms in the discrete setting.

\subsection{Outline of Proof}

The proof technique is very similar to the proof in the continuous settings in Appendix \ref{sec:proof}. The proof consists of 5 steps in total, and in each step we narrow down the set of  probability distributions where the optimal probability distribution should lie in:
 \begin{itemize}
 	\item Step 1 proves that we only need to consider probability mass functions which are monotonically increasing for $i \le 0$ and monotonically decreasing for $i \ge 0$.

 	\item Step 2 proves that  we only need to consider symmetric probability mass functions.

 	\item Step 3 proves that we only need to consider  symmetric  probability mass functions which have periodic and geometric decay for $i \ge 0$, and this proves Theorem \ref{thm:discrete1}.

 	\item Step 4 and Step 5  prove that the optimal probability mass function over the interval $[0, \D)$ is a discrete step function, and they conclude the proof of Theorem \ref{thm:discrete2}.
 \end{itemize}

\subsection{Step 1}

Recall  $\sP$ denotes the set of all probability mass functions which satisfy the $\e$-differential privacy constraint \eqref{eqn:dpdiscrete}. Define
\begin{align}
 	V^* \triangleq \inf_{\p \in \sP}  \sum_{i = -\infty}^{+\infty} \loss(i) \p(i). 
 \end{align}

First we prove that we only need to consider probability mass functions which are monotonically increasing for $i \le 0$ and monotonically decreasing for $i \ge 0$.

Define
\begin{align}
	\pe \triangleq \{ \p \in \sP | \p(i) \le \p(j), \p(m) \ge \p(n), \forall i \le j \le 0, 0 \le m \le n \}.
\end{align}

\begin{lemma}\label{lem:doublesizemono}
	 \begin{align}
	 	V^* = \inf_{\p \in \pe}  \sum_{i = -\infty}^{+\infty} \loss(i) \p(i).
	 \end{align}
\end{lemma}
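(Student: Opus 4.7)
The plan is to mirror the proof of Lemma \ref{lem:monotone} in the continuous setting, adapted to the two-sided discrete structure over $\Z$. Given any $\p \in \sP$, I would construct a rearrangement $\p' \in \pe$ whose cost is no larger than that of $\p$, and then take infimum.

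First, I would establish the discrete analog of Lemma \ref{lem:no_point_mass}: $\p(i) > 0$ for every $i \in \Z$. Indeed, if $\p(i_0) = 0$, then applying \eqref{eqn:dpdiscrete} repeatedly forces $\p(i) = 0$ on a $\D$-neighborhood of $i_0$, and iterating yields $\p \equiv 0$, contradicting $\sum_i \p(i) = 1$. Consequently $\p(i) \to 0$ as $|i| \to \infty$, so the values of $\p$ can be arranged in weakly decreasing order.

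Second, I would apply a greedy rearrangement analogous to the one in Lemma \ref{lem:monotone}. Let $(b_k)_{k \ge 0}$ be the sequence obtained by iteratively choosing the largest remaining value of $\p$, with ties broken by a fixed rule (say, smallest $|\pi(m)|$, then smallest $\pi(m)$). Define the interleaved rearrangement $\p'$ by
\begin{equation*}
\p'(0) \triangleq b_0, \qquad \p'(k) \triangleq b_{2k-1},\quad \p'(-k) \triangleq b_{2k}, \quad k \ge 1.
\end{equation*}
Then $\p'(i)$ is weakly decreasing in $|i|$, so $\p'$ satisfies the monotonicity requirements for membership in $\pe$.

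Third, I would verify that $\p'$ satisfies the differential privacy constraint \eqref{eqn:dpdiscrete}. The interleaved map sends any two positions $i_1, i_2 \in \Z$ with $|i_1 - i_2| \le \D$ to indices in the $b$-sequence that differ by at most $2\D$. Since $(b_k)$ is weakly decreasing, the DP constraint for $\p'$ reduces to the single ratio bound
\begin{equation*}
b_k \le e^{\e}\, b_{k + 2\D}, \qquad \forall\, k \ge 0,
\end{equation*}
which is the discrete analog of $b_k \le e^{\e} b_{k+i}$ from Lemma \ref{lem:monotone}. Its proof carries over essentially verbatim via the witness-chasing argument (Algorithm \ref{algo:1}): given $k$, iteratively locate a pivot $j^* \le k$ such that the window of $2\D + 1$ consecutive integer positions $[\pi(j^*), \pi(j^*) + 2\D]$ contains no $\pi(j)$ with $j < j^*$; then some entry of $\p$ on that window is at most $b_{k + 2\D}$, and the original DP constraint \eqref{eqn:dpdiscrete} applied within this window supplies the factor $e^{\e}$.

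Fourth, I would bound the cost via rearrangement. Since $\loss$ is symmetric and nondecreasing in $|i|$ by Property \ref{propertydis}, the sorted list of $\loss$-values over $\Z$ is $\loss(0), \loss(1), \loss(1), \loss(2), \loss(2), \ldots$, which is precisely the list visited by $\p'$ under the natural enumeration $0, 1, -1, 2, -2, \ldots$. Pairing the smallest $\loss$-values with the largest $\p$-values yields
\begin{equation*}
\sum_{i \in \Z} \loss(i)\, \p'(i) \;=\; \loss(0)\, b_0 + \sum_{k \ge 1} \loss(k)\bigl(b_{2k-1} + b_{2k}\bigr) \;\le\; \sum_{i \in \Z} \loss(i)\, \p(i),
\end{equation*}
by a direct rearrangement inequality: any bijection from $\Z$ to the sorted $b$-positions that is not the greedy match can be improved by swapping a pair where a larger $\p$-value sits at a larger-$\loss$ position. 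Taking infimum over $\p \in \sP$ on the right and noting $\p' \in \pe$ gives the desired identity.

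The main technical obstacle is the DP verification in the third step: proving $b_k \le e^{\e} b_{k + 2\D}$ after the greedy rearrangement. The two-sided nature of $\Z$ (versus the one-sided $\N$ in Lemma \ref{lem:monotone}) does not change the essential combinatorics of the witness-chasing procedure, but the bookkeeping must be done over $\Z$-windows rather than $\N$-windows; once the tie-breaking rule is fixed, the adaptation is routine.
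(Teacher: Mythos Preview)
Your overall architecture matches the paper's: sort the values of $\p$ into a decreasing sequence $(b_k)_{k\ge 0}$, place them along $\Z$ in the zig-zag order $0,1,-1,2,-2,\ldots$, use rearrangement for the cost inequality, and verify the DP constraint on the sorted sequence. The reduction of the DP constraint for $\p'$ to the single family of inequalities $b_k \le e^{\e} b_{k+2\D}$ is also correct.

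The gap is in your third step. You propose to verify $b_k \le e^{\e} b_{k+2\D}$ by chasing to a pivot $j^*$ and then using the one-sided window $[\pi(j^*),\pi(j^*)+2\D]$. But the differential-privacy constraint \eqref{eqn:dpdiscrete} only controls ratios over a shift of at most $\D$; across a window of diameter $2\D$ it yields $\p_a(\pi(j^*)) \le e^{2\e}\,\p_a(\pi(j^*)+2\D)$, not $e^{\e}$. So from a one-sided window you can only conclude $b_k \le e^{2\e} b_{k+2\D}$, which is not enough. This is exactly where the two-sided structure of $\Z$ is more than bookkeeping: in the continuous Lemma~\ref{lem:monotone} the window $[\pi(j^*),\pi(j^*)+i]$ has diameter $i$, matching the constraint, whereas here the needed rank gap $2\D$ is twice the constraint radius $\D$.

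The fix, and what the paper does, is to run the witness-chasing \emph{twice}, once to the left and once to the right (Algorithm~\ref{alg:discrete}): find $j^*_L$ with no earlier rank in $[\pi(j^*_L)-\D,\pi(j^*_L)-1]$ and $j^*_R$ with no earlier rank in $[\pi(j^*_R)+1,\pi(j^*_R)+\D]$. Each half-window has $\D$ positions, all of rank $\ge$ the current one, and each is within distance $\D$ of its pivot, so the DP constraint gives the factor $e^{\e}$ on each side. Because $\pi(j^*_L)\le \pi(k)\le \pi(j^*_R)$ (the two chases move monotonically outward from $\pi(k)$), the two half-windows are disjoint and avoid $\pi(k)$, yielding $2\D$ distinct positions with value in $[b_k e^{-\e},b_k]$ and rank $>k$; together with $b_k$ itself this forces $b_{k+2\D}\ge b_k e^{-\e}$. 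Once you replace your single one-sided window by this pair of $\D$-windows, the rest of your argument goes through.
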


\begin{IEEEproof}
	We will prove that given a probability mass function $\p_a \in \sP$, we can construct a new probability mass function $\p_b \in \pe$ such that
	\begin{align}
		\sum_{i = -\infty}^{+\infty} \loss(i) \p_a(i) \ge \sum_{i = -\infty}^{+\infty} \loss(i) \p_b(i).
	\end{align}

	Given $\p_a \in \sP$, consider the sequence $sa = \{\p_a(0), \p_a(1), \p_a(-1), \p_a(2), \p_a(-2), \dots \}$. Use the same argument in Lemma \ref{lem:no_point_mass} and we can show $\p_a(i) > 0,  \forall \, i \in \Z$.
	Let the sequence $sb = \{b_0,b_1,b_{-1},b_2, b_{-2}, \dots \}$ be a permutation of the sequence $sa$ in descending order. Since $\sum_{i = -\infty}^{+\infty}   \p_a(i) = 1$, $\lim_{i \to -\infty} \p_a(i) = \lim_{i \to +\infty} \p_a(i) = 0$, and thus $sb$ is well defined. Let $\pi$ be the corresponding permutation mapping, i.e., $\pi: \Z \to \Z$, and
	\begin{align}
		b_i = \p_a(\pi(i)).
	\end{align}

	Since $\loss(\cdot)$ is a symmetric function and monotonically decreasing for $i \ge 0 $, we have
	\begin{align}
		\loss(0) \le \loss(1) \le \loss(-1) \le \loss(2) \le \loss(-2) \le \cdots \le \loss(i) \le \loss(-i) \le \loss(i+1) \le \loss(-(i+1)) \le \cdots.
	\end{align}
	Therefore, if we define a probability mass function  $\p_b$ with
	\begin{align}
		\p_b(i) = b_i, \forall i \in \Z,
	\end{align}
	then
	\begin{align}
		\sum_{i = -\infty}^{+\infty} \loss(i) \p_a(i) \ge \sum_{i = -\infty}^{+\infty} \loss(i) \p_b(i).
	\end{align}

	Next, we only need to prove $\p_b \in \pe$, i.e., we need to show that $\p_b$ satisfies the differential privacy constraint \eqref{eqn:dpdiscrete}.

	Due to the way how we construct the sequence $sb$, we have
	\begin{align}
		b_0 \ge b_1 \ge b_2 \ge b_3 \ge \cdots, \\
		b_0 \ge b_{-1} \ge b_{-2} \ge b_{-3} \ge \cdots.
	\end{align}

Therefore, it is both sufficient and necessary to prove that
\begin{align}
	\frac{b_i}{ b_{i+\D}} &\le e^{\e}, \forall i \ge 0, \\
	\frac{b_i}{ b_{i-\D}} &\le e^{\e}, \forall i \le 0.
\end{align}

Since $\p_a \in \sP$, $\forall \, i \in \{\pi(0)-\D, \pi(0)-\D+1, \pi(0)-\D+2, \dots, \pi(0)+\D \}$,
\begin{align}
	\frac{\p_a(\pi(0))}{\p_a(i)} \le e^{\e}.
\end{align}

Therefore, in the sequence $sb$ there exist at least $2\D$ elements which are no smaller than $b_0 e^{-\e}$. Since $b_{-\D}$ and $b_{\D}$ are the $2\D$th and $(2\D-1)$th  largest elements in the sequence $sb$ other than $b_0$, we have $\frac{b_0}{b_{-\D}} \le e^{\e}$ and $\frac{b_0}{b_{\D}} \le e^{\e}$.

In general, given $i \in \Z$, we can use Algorithm \ref{alg:discrete} to find at least $2\D$ elements in the sequence $sb$ which are no bigger than $b_i$ and no smaller than $b_i e^{-\e}$.

More precisely, given $i \in \Z$, let  $j^*_R$ and $j^*_L$ be the output of Algorithm \ref{alg:discrete}. Note that since the while loops in Algorithm \ref{alg:discrete} can take only at most $2(|i|+1)$ steps, the algorithm will always terminate. For all integers $j \in [\pi(j^*_L)-\D, \pi(j^*_L)-1]$, $\p_a(j)$ is no bigger than $b_i$ and is no smaller than $\p_a(j^*_L) e^{-\e}$; and for all integers $j \in [\pi(j^*_R)+1, \pi(j^*_R)+\D]$, $\p_a(j)$ is no bigger than $b_i$ and is no smaller than $\p_a(j^*_R) e^{-\e}$. Since $\p_a(j^*_R), \p_a(j^*_L) \ge b_i$, for all $j \in [\pi(j^*_L)-\D, \pi(j^*_L)-1] \cup [\pi(j^*_R)+1, \pi(j^*_R)+\D]$, $\p_a(j)$ is no bigger than $b_i$ and is no smaller than $b_i e^{-\e}$. Therefore, there exist at least  $2\D$ elements  in the sequence $sb$ which are no bigger than $b_i$ and no smaller than $b_i e^{-\e}$.

If $i \le 0$, then  $b_{i-\D}$ is the $2\D$th largest element in the sequence $sb$ which is no bigger than $b_i$ and no smaller than $b_i e^{-\e}$; and if $i \ge 0$, then $b_{i+\D}$ is the $(2\D-1)$th largest element in the sequence $sb$ which  is no bigger than $b_i$ and no smaller than $b_i e^{-\e}$.  Therefore, we have
\begin{align}
	\frac{b_i}{ b_{i+\D}} &\le e^{\e}, \forall i \ge 0, \\
	\frac{b_i}{ b_{i-\D}} &\le e^{\e}, \forall i \le 0.
\end{align}

This completes the proof of Lemma \ref{lem:doublesizemono}.

\begin{algorithm}
\caption{}
\label{alg:discrete}
\begin{algorithmic}
\State $j^*_R  \gets i$
\While{there exists some $j$ which appears before $i$ in the sequence $\{0,1,-1,2,-2,\dots\}$ and $\pi(j) \in [\pi(j^*_R)+1, \pi(j^*_R)+\D]$}
	\State $j^*_R  \gets j$
\EndWhile
\\
\State $j^*_L  \gets i$
\While{there exists some $j$ which appears before $i$ in the sequence $\{0,1,-1,2,-2,\dots\}$ and $\pi(j) \in [\pi(j^*_L)-\D, \pi(j^*_L)-1]$}
	\State $j^*_L  \gets j$
\EndWhile
\\
\State Output $j^*_R$ and $j^*_L$.
\end{algorithmic}
\end{algorithm}

\end{IEEEproof}

\subsection{Step 2}

Next we prove that we only need to consider symmetric probability mass functions which are monotonically decreasing when $i \ge 0$.

Define
\begin{align}
	\pdissym \triangleq \{ \p \in \pe |\; \p(i) = \p(-i), \forall \, i \in \Z  \}.
\end{align}

\begin{lemma}\label{lem:monosym}
	 \begin{align}
	 	V^* = \inf_{\p \in \pdissym}  \sum_{i = -\infty}^{+\infty} \loss(i) \p(i).
	 \end{align}
\end{lemma}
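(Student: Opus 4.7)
The plan is to follow the same symmetrization approach used in Lemma~\ref{lem:symmetric} for the continuous setting, adapting it to the discrete case. By Lemma~\ref{lem:doublesizemono} we already have $V^* = \inf_{\p \in \pe} \sum_{i} \loss(i) \p(i)$, and since $\pdissym \subseteq \pe$ the inequality $V^* \le \inf_{\p \in \pdissym} \sum_{i} \loss(i) \p(i)$ is immediate. So the real content is the reverse direction: given any $\p_a \in \pe$, produce a $\p_b \in \pdissym$ with cost no larger than that of $\p_a$.

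The construction I would use is the natural one:
\begin{align}
\p_b(i) \triangleq \frac{\p_a(i) + \p_a(-i)}{2}, \quad \forall i \in \Z.
\end{align}
Then I would verify four things in order. First, $\p_b$ is a valid probability mass function (obvious from linearity of summation). Second, $\p_b$ is symmetric, i.e.\ $\p_b(i) = \p_b(-i)$ for all $i$; this is immediate from the definition. Third, $\p_b$ inherits the monotonicity property: for $0 \le i \le j$, the membership $\p_a \in \pe$ gives $\p_a(i) \ge \p_a(j)$ (decreasing on the nonnegative half) and $\p_a(-i) \ge \p_a(-j)$ (increasing on the nonpositive half, applied to $-j \le -i \le 0$), so $\p_b(i) \ge \p_b(j)$, and symmetry then yields the matching statement on the nonpositive half.

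Fourth, I would check that $\p_b$ satisfies the differential privacy constraint \eqref{eqn:dpdiscrete}. For any $i \in \Z$ and any integer $d$ with $|d| \le \D$, the constraint on $\p_a$ gives both $\p_a(i) \le e^{\e}\p_a(i+d)$ and $\p_a(-i) \le e^{\e}\p_a(-i-d) = e^{\e}\p_a(-(i+d))$ (using $|-d| \le \D$); averaging these yields $\p_b(i) \le e^{\e}\p_b(i+d)$. Finally, the cost is preserved (not merely bounded) because $\loss(\cdot)$ is symmetric by Property~\ref{propertydis}:
\begin{align}
\sum_{i=-\infty}^{+\infty} \loss(i) \p_b(i) = \tfrac{1}{2}\sum_{i} \loss(i)\p_a(i) + \tfrac{1}{2}\sum_{i} \loss(i)\p_a(-i) = \sum_{i} \loss(i) \p_a(i),
\end{align}
where in the last step I re-index the second sum $i \mapsto -i$ and use $\loss(-i) = \loss(i)$.

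I do not anticipate any real obstacle here; the discrete symmetrization argument is a direct transcription of the continuous one in Lemma~\ref{lem:symmetric}, with the only extra bookkeeping being the preservation of the monotonicity structure encoded in $\pe$ (which follows from pairing up the two halves of $\pe$'s defining inequalities). Combining the four verifications with Lemma~\ref{lem:doublesizemono} closes the proof.
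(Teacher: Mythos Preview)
Your proposal is correct and mirrors the paper's own proof almost exactly: the paper also starts from $\p_a \in \pe$, defines $\p_b(i) = \tfrac{1}{2}(\p_a(i)+\p_a(-i))$, and verifies in turn that $\p_b$ is a valid symmetric pmf, that the cost is preserved by symmetry of $\loss$, that the DP constraint is inherited via $\p_a(i)\le e^{\e}\p_a(i+d)$ and $\p_a(-i)\le e^{\e}\p_a(-i-d)$, and that monotonicity for $i\ge 0$ follows from the two halves of $\pe$'s defining inequalities.
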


\begin{IEEEproof}
	The proof is essentially the same as the proof of Lemma \ref{lem:symmetric}.

	Given $\p_a \in \pe$, define a new probability mass function $\p_b$ with
	\begin{align}
		\p_b (i)  \triangleq \frac{\p_a(i) + \p_a(-i) }{2}, \forall i \in \Z.
	\end{align}

	It is easy to see $\p_b$ is a valid probability mass function and symmetric. Since the cost function $\loss(\cdot)$ is symmetric,
	\begin{align}
		\sum_{i = -\infty}^{+\infty} \loss(i) \p_a(i) = \sum_{i = -\infty}^{+\infty} \loss(i) \p_b(i).
	\end{align}

	Next we show that $\p_b$ also satisfies the differential privacy constraint \eqref{eqn:dpdiscrete}. For any $i \in \Z$ and $ |d| \le \D$,
	since $\p_a(i) \le e^{\e} \p_a(i + d)$ and $\p_a(-i) \le e^{\e} \p_a(-i-d)$, we have
	\begin{align}
		\p_b(i) &= \frac{\p_a(i) + \p_a(-i) }{2} \\
		&\le \frac{e^{\e} \p_a(i+d) + e^{\e} \p_a(-i-d) }{2} \\
		&= e^{\e} \p_b(i+d).
	\end{align}
Therefore, $\p_b$ satisfies \eqref{eqn:dpdiscrete}.

Finally, for any $0 \le i \le j$,
\begin{align}
	\p_b(i) &= \frac{\p_a(i) + \p_a(-i) }{2} \\
			&\ge \frac{\p_a(j) + \p_a(-j) }{2} \\
			&= \p_b(j).
\end{align}

So $\p_b \in \pe$, and thus $\p_b \in \pdissym$. We conclude
  \begin{align}
	 	V^* = \inf_{\p \in \pdissym}  \sum_{i = -\infty}^{+\infty} \loss(i) \p(i).
  \end{align}

\end{IEEEproof}

\subsection{Step 3}

Next we show that among all symmetric and monotonically decreasing (for $i \ge 0$) probability mass function, we only need to consider those which are periodically and geometrically decaying.

More precisely, define
\begin{align}
	\pf \triangleq \{  \p \in \pdissym |  \frac{\p(i)}{\p(i+\D)} = e^{\e},  \forall\, i \in \N \}.
\end{align}
Then
\begin{lemma}\label{lem:discretepd}
\begin{align}
	V^* = \inf_{\p \in \pf}  V(\p).
\end{align}	
\end{lemma}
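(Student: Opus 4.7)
The strategy is to transport the argument of Lemma \ref{lem:pd} to the discrete setting. Given any $\p_a \in \pdissym$, I would build a pmf $\p_b \in \pf$ with $V(\p_b) \le V(\p_a)$ by enforcing the equality $\p(k)/\p(k+\D) = e^{\e}$ inductively at indices $k = 0, 1, 2, \ldots$, in each step modifying only a small number of atoms while preserving symmetry, monotonicity on $[0,\infty)$, the $\e$-differential privacy constraint \eqref{eqn:dpdiscrete}, and monotonicity of the cost.

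For the base case $k = 0$: if $\p_a(0)/\p_a(\D) < e^{\e}$, I scale the atom at the origin up and all other atoms down uniformly, setting $\p_b(0) = (1+\delta)\p_a(0)$ and $\p_b(i) = (1-\delta')\p_a(i)$ for $i \ne 0$, with $\delta' = \delta\,\p_a(0)/(1-\p_a(0))$ (so that the total mass remains $1$) and $\delta$ chosen so that $\p_b(0)/\p_b(\D) = e^{\e}$. Symmetry is inherited automatically; ratios between two non-origin indices are unchanged, while ratios $\p_b(0)/\p_b(j)$ for $|j| \le \D$ satisfy $\p_b(0)/\p_b(j) \le \p_b(0)/\p_b(\D) = e^{\e}$ by the monotonicity inherited from $\pe$. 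The cost change reduces to
\begin{align*}
V(\p_b) - V(\p_a) = \delta\,\p_a(0)\Bigl(\loss(0) - \tfrac{1}{1-\p_a(0)}\sum_{i \ne 0}\p_a(i)\loss(i)\Bigr) \le 0,
\end{align*}
since $\loss(0)$ is the minimum of $\loss$.

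For the inductive step at $k \ge 1$, suppose $\p(k')/\p(k'+\D) = e^{\e}$ has already been enforced for all $0 \le k' < k$. If the ratio at position $k$ is strictly less than $e^{\e}$, I transfer an amount $\delta = (e^{\e}\p_a(k+\D) - \p_a(k))/(1+e^{\e}) > 0$ from each of $\p_a(\pm(k+\D))$ onto $\p_a(\pm k)$, leaving all other atoms fixed. Total mass and symmetry are preserved, the already-established ratios at $k' < k$ are untouched, and the new ratio at $k$ is exactly $e^{\e}$. The cost is non-increasing because mass migrates from positions of larger $\loss$-value $|k+\D|$ to positions of smaller $\loss$-value $|k|$. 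Iterating this procedure for $k = 0, 1, 2, \ldots$ and passing to the limit (using Property \ref{propertydis} and the finiteness of $V^*$ to dominate the tail, as in Lemma \ref{lem:approx}) produces $\p_b \in \pf$ with $V(\p_b) \le V(\p_a)$, and hence the infimum over $\pdissym$ equals the infimum over $\pf$.

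The main obstacle is verifying that the inductive perturbation at index $k$ does not violate any of the remaining DP constraints coupling the modified entries $\p(\pm k), \p(\pm(k+\D))$ with their neighbours within distance $\D$. The forward constraint $\p_b(k)/\p_b(k+d) \le e^{\e}$ for $0 \le d \le \D$ collapses to the new equality at $d = \D$ together with the monotonicity $\p_a(k+d) \ge \p_a(k+\D)$ inherited from $\pe$. The backward constraints to indices in $[k-\D, k-1]$ require $\delta \le e^{\e}\p_a(k-d) - \p_a(k)$; this bound follows from the original constraint $\p_a(k) \le e^{\e}\p_a(k-d)$ together with the previously enforced equality $\p_a(k-d)/\p_a(k-d+\D) = e^{\e}$ (for $k-d \ge 0$) or symmetry (for $k-d < 0$), since these force $\p_a(k-d)$ to be sufficiently large relative to $\p_a(k+\D)$. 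The corresponding constraints emanating from the decreased atom at $k+\D$ are handled by a symmetric case analysis, leveraging monotonicity and the inductively established geometric ratio at $k+\D$. Once these constraint verifications are carried out exactly as in the continuous proof of Lemma \ref{lem:pd}, the lemma follows.
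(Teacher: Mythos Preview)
Your proposal is correct and takes essentially the same approach as the paper's own proof: the base case $k=0$ is handled by scaling up the atom at the origin and scaling down all others (the paper's $(1+\delta),(1-\delta')$ construction matches yours exactly, including the cost inequality via $\loss(0)$ being minimal), and the inductive step $k\ge 1$ transfers mass $\delta=(e^{\e}\p_a(k+\D)-\p_a(k))/(1+e^{\e})$ from $\pm(k+\D)$ onto $\pm k$, precisely mirroring the continuous argument of Lemma~\ref{lem:pd} to which the paper explicitly defers. Your added discussion of the DP constraint checks and the tail/limit issue is more explicit than the paper's (which simply writes ``by using the same argument as in the proof of Lemma~\ref{lem:pd}''), but the strategy is identical.
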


\begin{IEEEproof}
	Due to Lemma \ref{lem:monosym}, we only need to consider probability mass functions which are symmetric and  monotonically decreasing for $i \ge 0$.
 	
 	We first show that given $\p_a \in \pdissym$, if $\frac{\p_a{0}}{\p_a{\D}} < e^{\e}$, then we can construct a probability mass function $\p_b \in \pdissym$  such that $\frac{\p_b{0}}{\p_b{\D}} = e^{\e}$ and
 	\begin{align}
 		V(\p_a) \ge V(\p_b).
 	\end{align}

 	Since $\p_a$ is symmetric,
 	\begin{align}
 		V(\p_a) = \loss(0)\p_a(0) + 2\sum_{i=1}^{+\infty} \loss(i)\p_a(i).
 	\end{align}

 	Suppose $\frac{\p_a{0}}{\p_a{\D}} < e^{\e}$, then define a new symmetric probability mass function $\p_b$ with
 	\begin{align}
 		\p_b(0) &\triangleq (1+\delta) \p_a(0), \\
 		\p_b(i) &\triangleq (1-\delta') \p_a(i), \forall i \in \Z \backslash \{0\},
 	\end{align}
 	where
 	\begin{align}
 		\delta &= \frac{e^{\e}\frac{\p_a(\D)}{\p_a(0)} -1 }{1 + e^{\e}\frac{\p_a(\D)}{1 - \p_a(0)}} > 0 , \\
 		\delta' &= \frac{e^{\e}\frac{\p_a(\D)}{\p_a(0)} -1 }{\frac{1}{\p_a(0)} +  e^{\e}\frac{\p_a(\D)}{\p_a(0)} - 1 } >0 ,
 	\end{align}
 	so that $\frac{\p_b(0)}{\p_b(\D)} = e^{\e}$.

 	It is easy to see $\p_b \in \pdissym$, and
 	\begin{align}
 		&  V(\p_b) - V(\p_a) \\
 		= & \delta \loss(0) \p_a(0)  - 2 \delta'  \sum_{i=1}^{+\infty} \loss(i) \p_a(i) \\
 		\le & \delta \loss(0) \p_a(0)  - 2 \delta' \sum_{i=1}^{+\infty} \loss(0) \p_a(i) \\
 		\le & \delta \loss(0) \p_a(0)  - \delta' \loss(0) (1 - \p_a(0)) \\
 		=& 0 .
 	\end{align}

 	Therefore, we only need to consider $\p \in \pdissym$ satisfying $\frac{\p(0)}{\p(\D)} = e^{\e}$.

 	By using the same argument as in the proof of Lemma \ref{lem:pd}, one can conclude that we only need to consider $\p \in \pdissym$ satisfying
 	\begin{align}
 		\frac{\p(i)}{\p(i+\D)} = e^{\e}, \forall i \in \N.  \label{eqn:period}
 	\end{align}

 	Therefore, 	$V^* = \inf_{\p \in \pf}  V(\p)$.

\end{IEEEproof}

\begin{IEEEproof}[Proof of Theorem \ref{thm:discrete1}]
	In the case that $\D = 1$, due to Lemma \ref{lem:discretepd}, the symmetry property and \eqref{eqn:period} completely characterize the optimal noise probability mass function, which is the geometric mechanism.
\end{IEEEproof}

\subsection{Step 4}

Due to Lemma \ref{lem:discretepd}, the optimal probability mass function $\p$ is completely characterized by $\p(0),\p(1),\dots,\p(\D-1)$. Next we derive the properties of optimal probability mass function in the domain $\{0,1,2,\dots, \D-1\}$.

Since Lemma \ref{lem:discretepd} solves the case $\D = 1$, in the remaining of this section, we assume $\D \ge 2$.

Define
\begin{align}
	\pj_{\lambda} \triangleq \{ \p \in \pf | \exists k \in \{0,1,\dots,\D-2\}, \p(i) = \p(0), \forall i \in \{0,1,\dots,k\}, \p(j) = \lambda \p(0), \forall j \in \{k+1,k+2,\dots,\D-1\} \}.
\end{align}

\begin{lemma}\label{lem:ratiodis}
	\begin{align}
		V^* = \inf_{\p \in \cup_{\lambda \in [e^{-\e}, 1]} \pj_{\lambda} } V(\p).
	\end{align}
\end{lemma}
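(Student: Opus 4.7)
The plan is to convert the optimization into a linear-fractional programming problem on a simple polytope and exploit the fact that such objectives attain their minimum at a vertex. Every $\p \in \pf$ is symmetric, monotonically non-increasing for $i \ge 0$, and satisfies $\p(i)/\p(i+\D) = e^{\e}$ for all $i \in \N$, so $\p$ is determined by its period-one values $\p(0), \p(1), \dots, \p(\D-1)$. I would reparametrize by the ratios $t_j \triangleq \p(j)/\p(0)$ for $j = 1, \dots, \D-1$ (setting $t_0 = 1$). The constraints defining $\pf$ translate verbatim into
\begin{align}
1 \ge t_1 \ge t_2 \ge \dots \ge t_{\D-1} \ge e^{-\e},
\end{align}
so the feasible parameter space is a bounded convex polytope $P \subset \R^{\D-1}$ cut out by exactly $\D$ inequality constraints.

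Next I would expand $V(\p)$ explicitly. Using symmetry and the periodic geometric decay one obtains $V(\p) = \p(0)\bigl(w_0 + \sum_{j=1}^{\D-1} w_j\,t_j\bigr)$, where $w_j \triangleq 2\sum_{k=0}^{\infty} e^{-k\e}\loss(k\D + j)$ for $j \ge 1$ and $w_0 \triangleq 2\sum_{k=0}^{\infty} e^{-k\e}\loss(k\D) - \loss(0)$, with the last correction accounting for the fact that the $\loss(0)\p(0)$ term is not doubled by symmetry. Imposing $\sum_{i \in \Z}\p(i) = 1$ and solving for $\p(0)$ gives $\p(0) = (1-e^{-\e})/\bigl((1+e^{-\e}) + 2\sum_{j=1}^{\D-1} t_j\bigr)$, and substitution produces
\begin{align}
V(\p) \,=\, \frac{(1-e^{-\e})\bigl(w_0 + \sum_{j=1}^{\D-1} w_j\,t_j\bigr)}{(1+e^{-\e}) + 2\sum_{j=1}^{\D-1} t_j},
\end{align}
a linear-fractional function of $t \in P$ whose denominator is bounded below by $(1+e^{-\e}) + 2(\D-1)e^{-\e} > 0$.

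I would then invoke the standard fact that a linear-fractional function with positive denominator attains its infimum over a compact polytope at a vertex: each sublevel set $\{t \in P : V(t) \le r\}$ is the intersection of $P$ with a half-space, so the minimum is realized where a supporting hyperplane first touches $P$, necessarily at a vertex. Enumerating vertices by choosing which one of the $\D$ inequalities is the unique slack constraint gives: the point $(e^{-\e},\dots,e^{-\e})$ (when $1 \ge t_1$ is slack); the point $(1,\dots,1)$ (when $t_{\D-1} \ge e^{-\e}$ is slack); and for each $m \in \{1,\dots,\D-2\}$ the point $(\underbrace{1,\dots,1}_{m}, \underbrace{e^{-\e},\dots,e^{-\e}}_{\D-1-m})$ (when $t_m \ge t_{m+1}$ is slack). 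Each of these $\D$ vertices corresponds to a probability mass function that lies in $\pj_{\lambda}$ with $\lambda \in \{1, e^{-\e}\}$ and an appropriate split index $k$, hence belongs to $\bigcup_{\lambda \in [e^{-\e},1]}\pj_\lambda$. Combined with the trivial inclusion $\bigcup_\lambda \pj_\lambda \subseteq \pf$, this yields the claimed equality.

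The main conceptual step is recognizing that after the $t_j$ reparametrization and eliminating $\p(0)$ via normalization, the objective becomes linear-fractional rather than bilinear in $(\p(0), t)$; once that is seen, the vertex enumeration and the ensuing comparison are routine. The technical subtlety requiring care is the bookkeeping when assembling $V(\p)$ into the form above (in particular the $-\loss(0)$ correction to $w_0$, which arises because $\loss(0)\p(0)$ is the unique term that is not doubled by symmetry), but this correction affects only a coefficient and does not alter the polytope or the vertex structure of the argument.
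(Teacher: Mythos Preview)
Your argument is correct. By passing to the ratios $t_j=\p(j)/\p(0)$ and eliminating $\p(0)$ via the normalization, you correctly reduce the problem on $\pf$ to minimizing a linear-fractional function with strictly positive denominator over the simplex $\{1\ge t_1\ge\cdots\ge t_{\D-1}\ge e^{-\e}\}$; quasilinearity then forces the minimum to a vertex, and your enumeration of the $\D$ vertices is accurate. In fact you prove slightly more than the lemma asks: your vertices all have $\lambda\in\{e^{-\e},1\}$, so your single argument already subsumes both Step~4 and Step~5 of the paper.

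The paper takes a different, more elementary route. It first runs an iterative exchange argument (raise $\p(k_1)$, lower $\p(k_2)$ by the same amount, using monotonicity of the weights $w_j$) to force the period-one profile to be at most three-valued, and then observes that for fixed $k$ and fixed ratio $\lambda$ the cost is affine in $\p(0)$, so the optimum lands at one of the two endpoints, collapsing the middle value. Your approach is more structural and dispatches the problem in one stroke by recognizing it as linear-fractional programming over a simplex; the paper's approach avoids invoking that fact and instead builds the reduction by explicit local improvements, at the price of two separate sub-arguments and more bookkeeping.
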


\begin{IEEEproof}
If $\D = 2$, then for any $\p \in \pf$, we can set $k=0$, and $\p \in \pj_{\frac{\p(\D-1)}{\p(0)}}$. Therefore, Lemma \ref{lem:ratiodis} holds for $\D = 2$.

Assume $\D \ge 3$. First, we prove that we only need to consider probability mass function $\p \in \pf$ such that there exists $k \in \{1,2,\dots, \D-2\}$ with
\begin{align}
	\p(i) &= \p(0), \forall i \in \{0,1,\dots,k-1 \} \label{eqn:ration11}\\
	\p(j) &= \p(\D-1), \forall i \in \{ k+1,k+2,\dots,\D-1 \}. \label{eqn:ration12}
\end{align}

More precisely, let $\p_a \in \pf$, we can construct a probability mass function $\p_b \in \pf$ such that there exists $k$ satisfying \eqref{eqn:ration11} and \eqref{eqn:ration12}, and $V(\p_b) \ge V(\p_a)$.

The proof technique is very similar to proof of Lemma \ref{lem:binary}. Suppose there does not exists such $k$ for $\p_a$, then let $k_1$  be the smallest integer in $\{1,2,\dots,\D-1\}$ such that
\begin{align}
	\p_a(k_1) \neq \p_a(0),
\end{align}
and let $k_2$ be the biggest integer in $\{0,1,\dots,\D-2\}$ such that
\begin{align}
	\p_a(k_2) \neq \p_a(\D-1).
\end{align}

It is easy to see that $k_1 < k_2$, and $k_1 \neq 0$. Then we can increase $\p_a(k_1)$ and decrease $\p_a(k_2)$ simultaneously by the same amount to derive a new probability mass function $\p_b \in \pf$ with smaller cost. Indeed, if
\begin{align}
	\p_a(0) - \p_a(k_1) \leq \p_a(k_2) - \p_a(\D-1),
\end{align}
then consider a probability mass function $\p_b  \in \pf$ with
\begin{align}
	\p_b(i) &= \p_a(0), \forall 0 \le i \le k_1, \\
	\p_b(i) &= \p_a(i), \forall k_1 < i < k_2, \\
	\p_b(k_2) &= \p_a(k_2) - (\p_a(0) - \p_a(k_1)), \\
	\p_b(i) &= \p_a(i), \forall k_2 < i \le \D-1.
\end{align}

Define
\begin{align}
	w_0 &\triangleq \loss(0) + 2\sum_{k=1}^{\infty} \loss(k\D)e^{-k\e},\\
	w_i &\triangleq 2\sum_{k=0}^{\infty} \loss(i+k\D)e^{-k\e}, \forall i \in \{1,2,\dots,\D-1\}.
\end{align}
Note that since $\loss(\cdot)$ is a monotonically decreasing function when $i\ge 0$, we have $w_0 \le w_1 \le \cdots \le w_{\D-1}$.

Then we can verify that  $V (\p_b) \le  V (\p_a)$ via
\begin{align}
	& \; V (\p_b) -  V (\p_a) \\
 	=& \; \sum_{i=0}^{\D-1} \p_b(i)w_i -  \sum_{i=0}^{\D-1} \p_a(i) w_i \\
 	= & \; (\p_a(0) - \p_a(k_1) ) (w_{k_1} - w_{k_2}) \\
 	\le & \; 0 .
 \end{align}

If
\begin{align}
	\p_a(0) - \p_a(k_1) \ge \p_a(k_2) - \p_a(\D-1),
\end{align}
then we can define $\p_b  \in \pf$ by setting
\begin{align}
	\p_b(i) &= \p_a(0), \forall 0 \le i < k_1, \\
	\p_b(k_1) &= \p_a(k_1) + (\p_a(k_2) - \p_a(\D-1)) , \\
	\p_b(i) &= \p_a(i),  \forall k_1 < i < k_2, \\
	\p_b(i) &= \p_a(\D-1), \forall k_2 \le i \le \D-1.
\end{align}

And similarly, we have
\begin{align}
	V (\p_b) -  V (\p_a) &= (\p_a(k_2) - \p_a(\D-1) ) (w_{k_1} - w_{k_2})  \le 0 .
\end{align}

 Therefore, continue in this way, and finally we will obtain a probability mass function  $\p_b \in \pf$ such that there exists $k$ to satisfy \eqref{eqn:ration11} and \eqref{eqn:ration12} and $V (\p_b) \le V (\p_a)$.

From the above argument, we can see that in the optimal solution $\p^* \in \pf$, the probability mass function can only take at most three distinct values for all $i \in \{0,1,\dots,\D-1\}$, which are $\p^*(0),\p^*(k)$ and $\p^*(\D-1)$. Next we show that indeed either $\p^*(k) = \p^*(0)$ and $\p^*(k) = \p^*(\D-1)$, and this will complete the proof of Lemma \ref{lem:ratiodis}.

The optimal probability mass function $\p \in \pf$ can be specified by three parameters $\p(0), \lambda \in [e^{-\e},1]$, $k \in \{1,2,\dots,\D-2\}$ and $\p(k)$. We will show that when $k$ and $\lambda$ are fixed, to minimize the cost, we have either $\p(k) = \p(0)$ or $\p(k) = \p(\D-1) = \lambda \p(0)$.

Since $\sum_{i=-\infty}^{+\infty} \p(i) = 1$,
\begin{align}
	2\frac{ k\p(0) + \p(k) + (\D-k-1)\lambda \p(0) }{ 1 - b} - \p(0) = 1,
\end{align}
and thus $\p(k) = \frac{(1+\p(0))(1-b) - 2\p(0)k - 2\lambda \p(0)(\D-k-1)}{2}$.

The cost for $\p$ is
\begin{align}
	V(\p) &= \p(0) \sum_{i=0}^{k-1}w_i + \p(\D-1) \sum_{i=k+1}^{\D-1}w_i + \p(k)w_k \\
	&= \p(0) \sum_{i=0}^{k-1}w_i + \lambda \p(0) \sum_{i=k+1}^{\D-1}w_i + (\frac{(1+\p(0))(1-b) - 2\p(0)k - 2\lambda \p(0)(\D-k-1)}{2})w_k,
\end{align}
which is a linear function of the parameter $\p(0)$.

Since $\p(k) \ge \lambda \p(0)$ and $\p(k) \le \p(0)$,
we have
\begin{align}
	2\frac{ k\p(0) + \p(k) + (\D-k-1)\lambda \p(0) }{ 1 - b} - \p(0) = 1 \le 2\frac{ k\p(0) + \p(0) + (\D-k-1)\lambda \p(0) }{ 1 - b} - \p(0),\\
	2\frac{ k\p(0) + \p(k) + (\D-k-1)\lambda \p(0) }{ 1 - b} - \p(0) = 1 \ge 2\frac{ k\p(0) + \lambda \p(0) + (\D-k-1)\lambda \p(0) }{ 1 - b} - \p(0),
\end{align}
and thus the constraints on $\p(0)$ are
\begin{align}
	\frac{1-b}{2k+2+2\lambda(\D-k-1) - 1 + b} \le \p(0) \le \frac{1-b}{2k+2\lambda(\D-k) - 1 + b}. \label{eqn:inequaona}
\end{align}

Since $V(\p)$ is a linear function of $\p(0)$, to minimize the cost $V(\p)$, either $\p(0) = \frac{1-b}{2k+2+2\lambda(\D-k-1) - 1 + b} $  or $\p(0) = \frac{1-b}{2k+2\lambda(\D-k) - 1 + b}$, i.e., $\p(0)$ should take one of the two extreme points of \eqref{eqn:inequaona}. To get these two extreme points, we have either $\p(k) = \p(0)$ or $\p(k) = \lambda \p(0) = \p(\D-1)$.

Therefore, in the optimal probability mass function $\p \in \pf$, there exists $k \in \{0,1,\dots,\D-2\}$ such that
\begin{align}
	\p(i) &= \p(0), \forall i \in \{0,1,\dots,k\}\\
	\p(i) &= \p(\D-1), \forall i \in \{k+1,k+2,\dots,\D-1\}.
\end{align}

This completes the proof of Lemma \ref{lem:ratiodis}.

\end{IEEEproof}

\subsection{Step 5}

In the last step, we prove that although $\lambda \in [e^{-\e},1]$, in the optimal probability mass function, $\lambda$ is either $e^{-\e}$ or $1$, and this will complete the proof of Theorem \ref{thm:discrete2}.

\begin{IEEEproof}
	For fixed $k \in \{0,1,\dots,\D-2 \}$, consider $\p \in \pf$ with
	\begin{align}
		\p(i) &= \p(0), \forall i \in \{0,1,\dots,k \}, \\
		\p(i) &= \lambda \p(0), \forall i \in \{k+1,k+2,\dots,\D-1\}.
	\end{align}

Since $\sum_{i=-\infty}^{+\infty} \p(i) = 1$, \
\begin{align}
	2\frac{(k+1)\p(0) + (\D-k-1)\lambda \p(0)}{1-b} - \p(0) = 1,
\end{align}
and thus
\begin{align}
	\p(0) = \frac{1-b}{2(k+1) + 2(\D-k-1)\lambda -1 + b}.
\end{align}

Hence, $\p$ is specified by only one parameter $\lambda$.

The cost of $\p$ is
\begin{align}
	V(\p) &=  \sum_{i=0}^{\D-1} \p(i)w_i \\
		  &= \p(0)\sum_{i=0}^k w_i + \lambda \p(0) \sum_{k+1}^{\D-1} w_i  \\
		  &= \frac{(1-b) (\sum_{i=0}^k w_i + \lambda \sum_{i=k+1}^{\D-1}w_i  )}{2(k+1) + 2(\D-k-1)\lambda -1 + b} \\
		  &= (1-b) (C_1 + \frac{C_2}{ 2(k+1) + 2(\D-k-1)\lambda -1 + b }),
\end{align}
where $C_1$ and  $C_2$ are constant terms independent of $\lambda$. Therefore, to minimize $V(\p)$ over $\lambda \in [e^{-\e},1]$, $\lambda$ should take the extreme points, either $e^{-\e}$ or $1$, depending on whether $C_2$ is negative or positive.

When $\lambda = 1$, then the probability mass function is uniquely determined, which is $\p \in \pf$ with
\begin{align}
	\p(i) = \frac{1-b}{2\D-1+b}, \forall i \in \{0,1,\dots,\D-1\},
\end{align}
which is exactly $\p_{r}$ defined in \eqref{eqn:defprdis}  with $r = \D$.

When $\lambda = e^{-\e}$, the probability mass function is exactly $\p_r$ with $r = k+1$.

Therefore, we conclude that
\begin{align}
	V^*  = \min_{\{\mt \in \N | 1 \le \mt \le \D \}}  \sum_{i = -\infty}^{+\infty} \loss(i) \p_{\mt}(i).
\end{align}

\end{IEEEproof}

\section*{Acknowledgment}\label{sec:acknowledgment}
We thank Sachin Kadloor (UIUC) for helpful discussions, and thank Prof. Adam D. Smith (PSU) and Prof. Kamalika Chaudhuri (UCSD) for helpful comments on this work. We thank Chao Li (UMass) and Bing-Rong Lin (PSU) for pointing out the slides \cite{stairslides} to us, where  the same class of staircase mechanisms was presented under a different optimization framework. 

\bibliographystyle{IEEEtran}

\bibliography{reference}

\begin{thebibliography}{10}
\providecommand{\url}[1]{#1}
\csname url@samestyle\endcsname
\providecommand{\newblock}{\relax}
\providecommand{\bibinfo}[2]{#2}
\providecommand{\BIBentrySTDinterwordspacing}{\spaceskip=0pt\relax}
\providecommand{\BIBentryALTinterwordstretchfactor}{4}
\providecommand{\BIBentryALTinterwordspacing}{\spaceskip=\fontdimen2\font plus
\BIBentryALTinterwordstretchfactor\fontdimen3\font minus
  \fontdimen4\font\relax}
\providecommand{\BIBforeignlanguage}[2]{{%
\expandafter\ifx\csname l@#1\endcsname\relax
\typeout{** WARNING: IEEEtran.bst: No hyphenation pattern has been}%
\typeout{** loaded for the language `#1'. Using the pattern for}%
\typeout{** the default language instead.}%
\else
\language=\csname l@#1\endcsname
\fi
#2}}
\providecommand{\BIBdecl}{\relax}
\BIBdecl

\bibitem{DPsurvey}
C.~Dwork, ``{Differential Privacy: A Survey of Results},'' in \emph{Theory and
  Applications of Models of Computation}, vol. 4978, 2008, pp. 1--19.

\bibitem{DMNS06}
C.~Dwork, F.~McSherry, K.~Nissim, and A.~Smith, ``Calibrating noise to
  sensitivity in private data analysis,'' in \emph{Theory of Cryptography},
  ser. Lecture Notes in Computer Science, S.~Halevi and T.~Rabin, Eds.\hskip
  1em plus 0.5em minus 0.4em\relax Springer Berlin / Heidelberg, 2006, vol.
  3876, pp. 265--284.

\bibitem{geometry}
\BIBentryALTinterwordspacing
M.~Hardt and K.~Talwar, ``On the geometry of differential privacy,'' in
  \emph{Proceedings of the 42nd ACM symposium on Theory of computing}, ser.
  STOC '10.\hskip 1em plus 0.5em minus 0.4em\relax New York, NY, USA: ACM,
  2010, pp. 705--714. [Online]. Available:
  \url{http://doi.acm.org/10.1145/1806689.1806786}
\BIBentrySTDinterwordspacing

\bibitem{NTZ12}
A.~Nikolov, K.~Talwar, and L.~Zhang, ``The geometry of differential privacy:
  the sparse and approximate cases,'' \emph{CoRR}, vol. abs/1212.0297, 2012.

\bibitem{Li10}
\BIBentryALTinterwordspacing
C.~Li, M.~Hay, V.~Rastogi, G.~Miklau, and A.~McGregor, ``Optimizing linear
  counting queries under differential privacy,'' in \emph{Proceedings of the
  twenty-ninth ACM SIGMOD-SIGACT-SIGART symposium on Principles of database
  systems}, ser. PODS '10.\hskip 1em plus 0.5em minus 0.4em\relax New York, NY,
  USA: ACM, 2010, pp. 123--134. [Online]. Available:
  \url{http://doi.acm.org/10.1145/1807085.1807104}
\BIBentrySTDinterwordspacing

\bibitem{Ghosh09}
\BIBentryALTinterwordspacing
A.~Ghosh, T.~Roughgarden, and M.~Sundararajan, ``Universally utility-maximizing
  privacy mechanisms,'' in \emph{Proceedings of the 41st annual ACM symposium
  on Theory of computing}, ser. STOC '09.\hskip 1em plus 0.5em minus
  0.4em\relax New York, NY, USA: ACM, 2009, pp. 351--360. [Online]. Available:
  \url{http://doi.acm.org/10.1145/1536414.1536464}
\BIBentrySTDinterwordspacing

\bibitem{Nissim10}
H.~Brenner and K.~Nissim, ``Impossibility of differentially private universally
  optimal mechanisms,'' in \emph{Foundations of Computer Science (FOCS), 2010
  51st Annual IEEE Symposium on}, oct. 2010, pp. 71 --80.

\bibitem{minimax10}
M.~Gupte and M.~Sundararajan, ``{Universally optimal privacy mechanisms for
  minimax agents},'' in \emph{Symposium on Principles of Database Systems},
  2010, pp. 135--146.

\bibitem{Zhou08}
\BIBentryALTinterwordspacing
L.~Wasserman and S.~Zhou, ``A statistical framework for differential privacy,''
  \emph{Journal of the American Statistical Association}, vol. 105, no. 489,
  pp. 375--389, 2010. [Online]. Available:
  \url{http://amstat.tandfonline.com/doi/abs/10.1198/jasa.2009.tm08651}
\BIBentrySTDinterwordspacing

\bibitem{HLM12}
\BIBentryALTinterwordspacing
M.~Hardt, K.~Ligett, and F.~McSherry, ``A simple and practical algorithm for
  differentially private data release,'' in \emph{Advances in Neural
  Information Processing Systems}, P.~Bartlett, F.~Pereira, C.~Burges,
  L.~Bottou, and K.~Weinberger, Eds., 2012, pp. 2348--2356. [Online].
  Available: \url{http://books.nips.cc/papers/files/nips25/NIPS2012_1143.pdf}
\BIBentrySTDinterwordspacing

\bibitem{MM09}
\BIBentryALTinterwordspacing
F.~McSherry and I.~Mironov, ``Differentially private recommender systems:
  building privacy into the net,'' in \emph{Proceedings of the 15th ACM SIGKDD
  international conference on Knowledge discovery and data mining}, ser. KDD
  '09.\hskip 1em plus 0.5em minus 0.4em\relax New York, NY, USA: ACM, 2009, pp.
  627--636. [Online]. Available:
  \url{http://doi.acm.org/10.1145/1557019.1557090}
\BIBentrySTDinterwordspacing

\bibitem{Xiao11}
X.~Xiao, G.~Wang, and J.~Gehrke, ``Differential privacy via wavelet
  transforms,'' \emph{IEEE Transactions on Knowledge and Data Engineering},
  vol.~23, no.~8, pp. 1200--1214, 2011.

\bibitem{Huang12}
\BIBentryALTinterwordspacing
Z.~Huang, S.~Mitra, and G.~Dullerud, ``Differentially private iterative
  synchronous consensus,'' in \emph{Proceedings of the 2012 ACM workshop on
  Privacy in the electronic society}, ser. WPES '12.\hskip 1em plus 0.5em minus
  0.4em\relax New York, NY, USA: ACM, 2012, pp. 81--90. [Online]. Available:
  \url{http://doi.acm.org/10.1145/2381966.2381978}
\BIBentrySTDinterwordspacing

\bibitem{McSherry10}
\BIBentryALTinterwordspacing
F.~McSherry, ``Privacy integrated queries: an extensible platform for
  privacy-preserving data analysis,'' \emph{Commun. ACM}, vol.~53, no.~9, pp.
  89--97, Sep. 2010. [Online]. Available:
  \url{http://doi.acm.org/10.1145/1810891.1810916}
\BIBentrySTDinterwordspacing

\bibitem{Barak07}
\BIBentryALTinterwordspacing
B.~Barak, K.~Chaudhuri, C.~Dwork, S.~Kale, F.~McSherry, and K.~Talwar,
  ``Privacy, accuracy, and consistency too: a holistic solution to contingency
  table release,'' in \emph{Proceedings of the twenty-sixth ACM
  SIGMOD-SIGACT-SIGART symposium on Principles of database systems}, ser. PODS
  '07.\hskip 1em plus 0.5em minus 0.4em\relax New York, NY, USA: ACM, 2007, pp.
  273--282. [Online]. Available:
  \url{http://doi.acm.org/10.1145/1265530.1265569}
\BIBentrySTDinterwordspacing

\bibitem{DKMMN06}
\BIBentryALTinterwordspacing
C.~Dwork, K.~Kenthapadi, F.~McSherry, I.~Mironov, and M.~Naor, ``Our data,
  ourselves: privacy via distributed noise generation,'' in \emph{Proceedings
  of the 24th annual international conference on The Theory and Applications of
  Cryptographic Techniques}, ser. EUROCRYPT'06.\hskip 1em plus 0.5em minus
  0.4em\relax Berlin, Heidelberg: Springer-Verlag, 2006, pp. 486--503.
  [Online]. Available: \url{http://dx.doi.org/10.1007/11761679_29}
\BIBentrySTDinterwordspacing

\bibitem{DL09}
\BIBentryALTinterwordspacing
C.~Dwork and J.~Lei, ``Differential privacy and robust statistics,'' in
  \emph{Proceedings of the 41st annual ACM symposium on Theory of computing},
  ser. STOC '09.\hskip 1em plus 0.5em minus 0.4em\relax New York, NY, USA: ACM,
  2009, pp. 371--380. [Online]. Available:
  \url{http://doi.acm.org/10.1145/1536414.1536466}
\BIBentrySTDinterwordspacing

\bibitem{Roth10}
\BIBentryALTinterwordspacing
A.~Roth and T.~Roughgarden, ``Interactive privacy via the median mechanism,''
  in \emph{Proceedings of the 42nd ACM symposium on Theory of computing}, ser.
  STOC '10.\hskip 1em plus 0.5em minus 0.4em\relax New York, NY, USA: ACM,
  2010, pp. 765--774. [Online]. Available:
  \url{http://doi.acm.org/10.1145/1806689.1806794}
\BIBentrySTDinterwordspacing

\bibitem{LO11}
\BIBentryALTinterwordspacing
Y.~Lindell and E.~Omri, ``A practical application of differential privacy to
  personalized online advertising.'' \emph{IACR Cryptology ePrint Archive},
  vol. 2011, p. 152, 2011. [Online]. Available:
  \url{http://dblp.uni-trier.de/db/journals/iacr/iacr2011.html\#LindellO11}
\BIBentrySTDinterwordspacing

\bibitem{Smith2011}
\BIBentryALTinterwordspacing
A.~Smith, ``Privacy-preserving statistical estimation with optimal convergence
  rates,'' in \emph{Proceedings of the 43rd annual ACM symposium on Theory of
  computing}, ser. STOC '11.\hskip 1em plus 0.5em minus 0.4em\relax New York,
  NY, USA: ACM, 2011, pp. 813--822. [Online]. Available:
  \url{http://doi.acm.org/10.1145/1993636.1993743}
\BIBentrySTDinterwordspacing

\bibitem{CM08}
K.~Chaudhuri and C.~Monteleoni, ``{Privacy-preserving logistic regression},''
  in \emph{Neural Information Processing Systems}, 2008, pp. 289--296.

\bibitem{continual}
\BIBentryALTinterwordspacing
C.~Dwork, M.~Naor, T.~Pitassi, and G.~N. Rothblum, ``Differential privacy under
  continual observation,'' in \emph{Proceedings of the 42nd ACM symposium on
  Theory of computing}, ser. STOC '10.\hskip 1em plus 0.5em minus 0.4em\relax
  New York, NY, USA: ACM, 2010, pp. 715--724. [Online]. Available:
  \url{http://doi.acm.org/10.1145/1806689.1806787}
\BIBentrySTDinterwordspacing

\bibitem{Ding11}
\BIBentryALTinterwordspacing
B.~Ding, M.~Winslett, J.~Han, and Z.~Li, ``Differentially private data cubes:
  optimizing noise sources and consistency,'' in \emph{Proceedings of the 2011
  ACM SIGMOD International Conference on Management of data}, ser. SIGMOD
  '11.\hskip 1em plus 0.5em minus 0.4em\relax New York, NY, USA: ACM, 2011, pp.
  217--228. [Online]. Available:
  \url{http://doi.acm.org/10.1145/1989323.1989347}
\BIBentrySTDinterwordspacing

\bibitem{Hardt2010}
\BIBentryALTinterwordspacing
M.~Hardt and G.~N. Rothblum, ``A multiplicative weights mechanism for
  privacy-preserving data analysis,'' in \emph{Proceedings of the 2010 IEEE
  51st Annual Symposium on Foundations of Computer Science}, ser. FOCS
  '10.\hskip 1em plus 0.5em minus 0.4em\relax Washington, DC, USA: IEEE
  Computer Society, 2010, pp. 61--70. [Online]. Available:
  \url{http://dx.doi.org/10.1109/FOCS.2010.85}
\BIBentrySTDinterwordspacing

\bibitem{Geo12}
M.~E. {Andr{\'e}s}, N.~E. {Bordenabe}, K.~{Chatzikokolakis}, and
  C.~{Palamidessi}, ``{Geo-Indistinguishability: Differential Privacy for
  Location-Based Systems},'' \emph{ArXiv e-prints}, December 2012.

\bibitem{Ka11}
\BIBentryALTinterwordspacing
S.~P. Kasiviswanathan, H.~K. Lee, K.~Nissim, S.~Raskhodnikova, and A.~Smith,
  ``What can we learn privately?'' \emph{SIAM J. Comput.}, vol.~40, no.~3, pp.
  793--826, Jun. 2011. [Online]. Available:
  \url{http://dx.doi.org/10.1137/090756090}
\BIBentrySTDinterwordspacing

\bibitem{Mironov12bit}
\BIBentryALTinterwordspacing
I.~Mironov, ``On significance of the least significant bits for differential
  privacy,'' in \emph{Proceedings of the 2012 ACM conference on Computer and
  communications security}, ser. CCS '12.\hskip 1em plus 0.5em minus
  0.4em\relax New York, NY, USA: ACM, 2012, pp. 650--661. [Online]. Available:
  \url{http://doi.acm.org/10.1145/2382196.2382264}
\BIBentrySTDinterwordspacing

\bibitem{Sarathy2011}
\BIBentryALTinterwordspacing
R.~Sarathy and K.~Muralidhar, ``Evaluating laplace noise addition to satisfy
  differential privacy for numeric data,'' \emph{Trans. Data Privacy}, vol.~4,
  no.~1, pp. 1--17, Apr. 2011. [Online]. Available:
  \url{http://dl.acm.org/citation.cfm?id=2019312.2019313}
\BIBentrySTDinterwordspacing

\bibitem{Xiao2011ireduct}
\BIBentryALTinterwordspacing
X.~Xiao, G.~Bender, M.~Hay, and J.~Gehrke, ``ireduct: differential privacy with
  reduced relative errors,'' in \emph{Proceedings of the 2011 ACM SIGMOD
  International Conference on Management of data}, ser. SIGMOD '11.\hskip 1em
  plus 0.5em minus 0.4em\relax New York, NY, USA: ACM, 2011, pp. 229--240.
  [Online]. Available: \url{http://doi.acm.org/10.1145/1989323.1989348}
\BIBentrySTDinterwordspacing

\bibitem{Dankar12}
\BIBentryALTinterwordspacing
F.~K. Dankar and K.~El~Emam, ``The application of differential privacy to
  health data,'' in \emph{Proceedings of the 2012 Joint EDBT/ICDT Workshops},
  ser. EDBT-ICDT '12.\hskip 1em plus 0.5em minus 0.4em\relax New York, NY, USA:
  ACM, 2012, pp. 158--166. [Online]. Available:
  \url{http://doi.acm.org/10.1145/2320765.2320816}
\BIBentrySTDinterwordspacing

\bibitem{Friedman10}
\BIBentryALTinterwordspacing
A.~Friedman and A.~Schuster, ``Data mining with differential privacy,'' in
  \emph{Proceedings of the 16th ACM SIGKDD international conference on
  Knowledge discovery and data mining}, ser. KDD '10.\hskip 1em plus 0.5em
  minus 0.4em\relax New York, NY, USA: ACM, 2010, pp. 493--502. [Online].
  Available: \url{http://doi.acm.org/10.1145/1835804.1835868}
\BIBentrySTDinterwordspacing

\bibitem{zhang2012functional}
J.~Zhang, Z.~Zhang, X.~Xiao, Y.~Yang, and M.~Winslett, ``Functional mechanism:
  regression analysis under differential privacy,'' \emph{Proceedings of the
  VLDB Endowment}, vol.~5, no.~11, pp. 1364--1375, 2012.

\bibitem{lei2011differentially}
J.~Lei, ``Differentially private m-estimators,'' in \emph{Proceedings of the
  23rd Annual Conference on Neural Information Processing Systems}, 2011.

\bibitem{wasserman2010}
L.~Wasserman and S.~Zhou, ``A statistical framework for differential privacy,''
  \emph{Journal of the American Statistical Association}, vol. 105, no. 489,
  pp. 375--389, 2010.

\bibitem{dwork2010pan}
C.~Dwork, M.~Naor, T.~Pitassi, G.~N. Rothblum, and S.~Yekhanin, ``Pan-private
  streaming algorithms,'' in \emph{In Proceedings of ICS}, 2010.

\bibitem{Guptadp2010}
\BIBentryALTinterwordspacing
A.~Gupta, K.~Ligett, F.~McSherry, A.~Roth, and K.~Talwar, ``Differentially
  private combinatorial optimization,'' in \emph{Proceedings of the
  Twenty-First Annual ACM-SIAM Symposium on Discrete Algorithms}, ser. SODA
  '10.\hskip 1em plus 0.5em minus 0.4em\relax Philadelphia, PA, USA: Society
  for Industrial and Applied Mathematics, 2010, pp. 1106--1125. [Online].
  Available: \url{http://dl.acm.org/citation.cfm?id=1873601.1873691}
\BIBentrySTDinterwordspacing

\bibitem{blum2011fast}
A.~Blum and A.~Roth, ``Fast private data release algorithms for sparse
  queries,'' \emph{arXiv preprint arXiv:1111.6842}, 2011.

\bibitem{hsu2012distributed}
J.~Hsu, S.~Khanna, and A.~Roth, ``Distributed private heavy hitters,''
  \emph{Automata, Languages, and Programming}, pp. 461--472, 2012.

\bibitem{hsu2012dp}
J.~Hsu, A.~Roth, and J.~Ullman, ``Differential privacy for the analyst via
  private equilibrium computation,'' \emph{arXiv preprint arXiv:1211.0877},
  2012.

\bibitem{blocki2012johnson}
J.~Blocki, A.~Blum, A.~Datta, and O.~Sheffet, ``The johnson-lindenstrauss
  transform itself preserves differential privacy,'' \emph{arXiv preprint
  arXiv:1204.2136}, 2012.

\bibitem{hardt2012beyond}
M.~Hardt and A.~Roth, ``Beyond worst-case analysis in private singular vector
  computation,'' \emph{arXiv preprint arXiv:1211.0975}, 2012.

\bibitem{hardt2012private}
M.~Hardt, G.~N. Rothblum, and R.~A. Servedio, ``Private data release via
  learning thresholds,'' in \emph{Proceedings of the Twenty-Third Annual
  ACM-SIAM Symposium on Discrete Algorithms}.\hskip 1em plus 0.5em minus
  0.4em\relax SIAM, 2012, pp. 168--187.

\bibitem{gupta2012iterative}
A.~Gupta, A.~Roth, and J.~Ullman, ``Iterative constructions and private data
  release,'' \emph{Theory of Cryptography}, pp. 339--356, 2012.

\bibitem{kasi13}
S.~P. Kasiviswanathan, K.~Nissim, S.~Raskhodnikova, and A.~Smith, ``Analyzing
  graphs with node differential privacy,'' in \emph{Theory of
  Cryptography}.\hskip 1em plus 0.5em minus 0.4em\relax Springer, 2013, pp.
  457--476.

\bibitem{karwa2011private}
V.~Karwa, S.~Raskhodnikova, A.~Smith, and G.~Yaroslavtsev, ``Private analysis
  of graph structure,'' in \emph{Proc. vldb}, vol.~11, 2011.

\bibitem{cormode2012differentially}
G.~Cormode, C.~Procopiuc, D.~Srivastava, E.~Shen, and T.~Yu, ``Differentially
  private spatial decompositions,'' in \emph{Data Engineering (ICDE), 2012 IEEE
  28th International Conference on}.\hskip 1em plus 0.5em minus 0.4em\relax
  IEEE, 2012, pp. 20--31.

\bibitem{NRS07}
\BIBentryALTinterwordspacing
K.~Nissim, S.~Raskhodnikova, and A.~Smith, ``Smooth sensitivity and sampling in
  private data analysis,'' in \emph{Proceedings of the thirty-ninth annual ACM
  symposium on Theory of computing}, ser. STOC '07.\hskip 1em plus 0.5em minus
  0.4em\relax New York, NY, USA: ACM, 2007, pp. 75--84. [Online]. Available:
  \url{http://doi.acm.org/10.1145/1250790.1250803}
\BIBentrySTDinterwordspacing

\bibitem{SV92}
S.~Shamai and S.~Verdu, ``Worst-case power-constrained noise for binary-input
  channels,'' \emph{Information Theory, IEEE Transactions on}, vol.~38, no.~5,
  pp. 1494--1511, 1992.

\bibitem{Hay10}
\BIBentryALTinterwordspacing
M.~Hay, V.~Rastogi, G.~Miklau, and D.~Suciu, ``Boosting the accuracy of
  differentially private histograms through consistency,'' \emph{Proc. VLDB
  Endow.}, vol.~3, no. 1-2, pp. 1021--1032, Sep. 2010. [Online]. Available:
  \url{http://dl.acm.org/citation.cfm?id=1920841.1920970}
\BIBentrySTDinterwordspacing

\bibitem{McSherry07}
\BIBentryALTinterwordspacing
F.~McSherry and K.~Talwar, ``Mechanism design via differential privacy,'' in
  \emph{Proceedings of the 48th Annual IEEE Symposium on Foundations of
  Computer Science}, ser. FOCS '07.\hskip 1em plus 0.5em minus 0.4em\relax
  Washington, DC, USA: IEEE Computer Society, 2007, pp. 94--103. [Online].
  Available: \url{http://dx.doi.org/10.1109/FOCS.2007.41}
\BIBentrySTDinterwordspacing

\bibitem{stairslides}
\BIBentryALTinterwordspacing
J.~Soria-Comas and J.~Domingo-Ferrer, ``On differential privacy and data
  utility in sdc,'' slides in {UNECE}/{Eurostat} Work Session on Statistical
  Data Confidentiality, 2011. [Online]. Available:
  \url{http://www.unece.org/fileadmin/DAM/stats/documents/ece/ces/ge.46/2011/presentations/Topic_4__24__Soria-Comas_Domingo.pdf}
\BIBentrySTDinterwordspacing

\end{thebibliography}

\end{document}